\newcommand{\ftp}{\mathbin{\raisebox{-1.4pt}{\includegraphics[height=1.75ex]{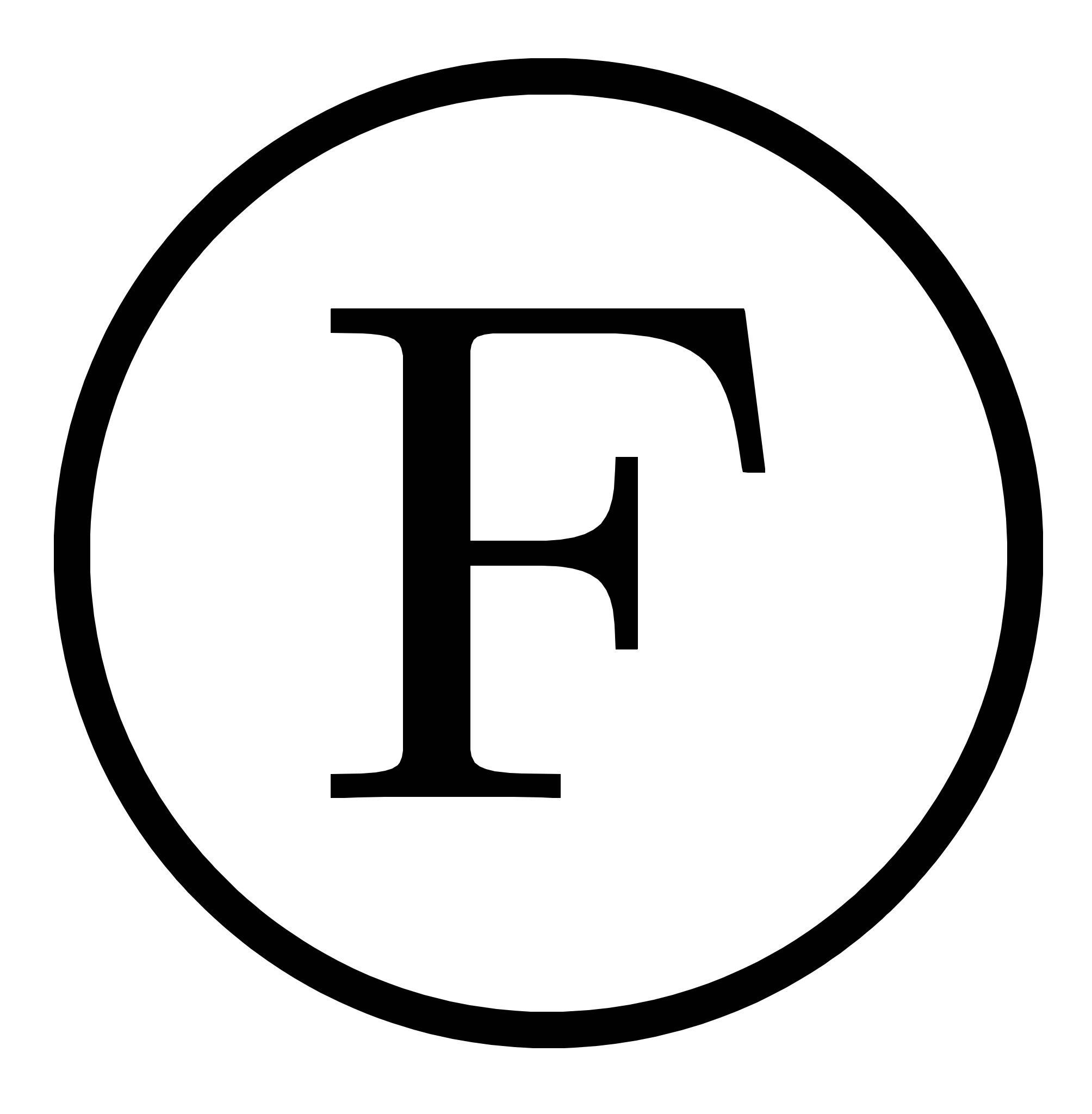}}}}
\newcommand{\ftps}{\mathbin{\raisebox{-1.2pt}{\includegraphics[height=1.25ex]{f1.png}}}}
\newcommand{\Aut}{\operatorname{Aut}}
\newcommand{\id}{\operatorname{id}}
\newcommand{\F}{\operatorname{F}}
\newcommand{\Us}{\operatorname{U}} 
\newtheorem{theorem}{Theorem}[section]
\newtheorem*{theorem*}{Theorem}
\newtheorem{cor}[theorem]{Corollary}
\newtheorem{lem}[theorem]{Lemma}
\newtheorem{prop}[theorem]{Proposition}
\newtheorem{proposition}[theorem]{Proposition}
\newtheorem{corollary}[theorem]{Corollary}
\theoremstyle{remark}
\newtheorem{remark}[theorem]{Remark}
\newtheorem{rem}[theorem]{Remark}
\theoremstyle{definition}
\newtheorem{definition}[theorem]{Definition}
\newtheorem*{definition*}{Definition}
\newtheorem{exa}[theorem]{Example}
\newtheorem{notat}[theorem]{Notation}
\newtheorem{conv}[theorem]{Convention}
\newtheorem*{Rev&OT}{Reversibility and optimal transport}
\newtheorem*{KMS}{KMS duals}
\newtheorem*{tra}{A lattice example}
\newtheorem*{RD}{Reversing dynamics}
\newtheorem*{parity}{Parity}
\newtheorem*{sf*}{Standard form}
\theoremstyle{definition}
\def\lp{\leftarrow}
\def\va{\mathbf{A}}
\def\vb{\mathbf{B}}
\def\tr{\mathop{\rm Tr}}
\def\br{{\mathbb R}}
\def\bz{{\mathbb Z}}
\def\ca{{\mathcal A}}
\def\cb{{\mathcal B}}
\def\cc{{\mathcal C}}
\def\cf{{\mathcal F}}
\def\cg{{\mathcal G}}
\def\ch{{\mathcal H}}
\def\cp{{\mathcal P}}
\def\gt{{\mathfrak T}}
\def\a{\alpha}
\def\b{\beta}
\def\g{\gamma}  
\def\d{\delta}  \def\D{\Delta}
\def\io{\iota}
\def\la{\lambda} 
\def\L{\Lambda}
\def\vk{\varkappa}
\def\m{\mu}
\def\p{\pi}
\def\n{\nu}
\def\r{\rho}
\def\s{\sigma} 
\def\f{\varphi}  
\def\th{\theta} 
\def\om{\omega} \def\Om{\Omega}
\def\u{\upsilon} \def\U{\Upsilon}
\begin{document}
	\date{2025-10-03}
\title{Fermionic optimal transport}
\author{Rocco Duvenhage, Dylan van Zyl and Paola Zurlo}
\address{Department of Physics\\
	University of Pretoria\\
	Pretoria 0002\\
	South Africa}
	\email{rocco.duvenhage@up.ac.za}
\address{Department of Physics\\
	University of Pretoria\\
	Pretoria 0002\\
	South Africa}
	\email{dylanvzyl@gmail.com}
\address{Dipartimento di Matematica\\
	Universit\`a degli studi di Bari\\
	Via E. Orabona, 4\\
	70125 Bari\\
	Italy}
    \email{paola.zurlo@uniba.it}

\begin{abstract}
	Quadratic Wasserstein distances are obtained between dynamical systems (with states as special case), on $\bz_2$-graded von Neumann algebras.
	This is achieved through a systematic translation from non-graded to $\bz_2$-graded transport plans, on usual and fermionic (or $\bz_2$-graded) tensor products respectively.
	The metric properties of these fermionic Wasserstein distances are shown, and their symmetries relevant to deviation of a system from quantum detailed balance are investigated.
	The latter is done in conjunction with the development of a complete mathematical framework for detailed balance in systems involving indistinguishable fermions.
\end{abstract}

\maketitle
\tableofcontents

\newpage

\section{Introduction}

This paper develops a theory of quadratic Wasserstein distances between dynamical systems on $\bz_2$-graded von Neumann algebras, using \emph{fermionic} transport plans. The latter take the grading into account, being defined on a $\bz_2$-graded, or fermionic, tensor product of the one system's von Neumann algebra  and the twisted commutant of the other.
The twisted commutant is obtained from the commutant and the Klein isomorphism associated to the grading.
From a physical point of view, a transport plan is a state on a composite system, which in this $\bz_2$-graded setup can for example include indistinguishable fermions accounted for by the tensor product structure mentioned above.

This builds on recent work \cite{D22, D23, DSS} on Wasserstein distances between quantum dynamical systems on von Neumann algebras in the usual case, that is, without gradings. One of the motivations for that work was to develop a new framework for the study of steady state non-equilibrium systems in quantum statistical mechanics, in terms of distances of a system from systems satisfying detailed balance.


The basic strategy and core technical problem of the paper is to translate fermionic transport plans to usual transport plans on the corresponding usual tensor product and using the commutant.
This has to be done within appropriate cyclic representations associated with the transport plans.

Before outlining the paper, some current and historical context.
A host of different approaches to (and applications of) noncommutative or quantum optimal transport distances have appeared, with the term ``distance" being used in a very broad sense, for example often not required to satisfy the triangle inequality.
These include (roughly in chronological order)
\cite{Con89,
	ZS,
	Rief99,
	BV,
	CM1,
	NG,
	NGT,
	AF,
	CGT, 
	GMP,
	H1,
	W,
	H2,
	DMTL,
	dePT,
	JLS22,
	CEFZ,
	GR,
	BS,
	HCW,
	AJW}
and subsequent papers building on these. 
This list of papers is certainly not exhaustive, but hopefully fairly representative.
They typically adapt or are inspired by one of the classical (i.e., not quantum) approaches to optimal transport, but the oldest, by Connes, appeared in relation to developing a general theory of noncommutative geometry.

A very brief and incomplete history of the classical case is represented by the references (chronologically)
\cite{Mon,
	Kan42,
	Kan48,
	Vas,
	BB}.
The books \cite{V1, V2} provide an account of much of the classical theory.
Terminology varies in the literature, with other (and historically better motivated) names for Wasserstein distance being Kantorovich distance, Kantorovich-Rubinstein distance and Monge-Kantorovich distance.

Among the above, the approach of \cite{dePT} is in spirit closest to the approach taken here, through its use of quantum channels to represent transport plans. However, our setting of standard forms of von Neumann algebras, exploiting commutants, is quite different.


As for the specific goals and structure of the paper, in Section \ref{duality} we briefly review essential background, including gradings, Klein isomorphisms, twisted commutants and various duals of unital positive maps, also setting up some conventions and notation for the paper.
In addition, certain technical and conceptual points related to twisted and KMS duals are addressed.
Section \ref{SecTrPl&Bal} builds directly on Section \ref{duality} to introduce the definitions of fermionic transport plans and $\bz_2$-graded systems, serving as the most basic 
ingredients of the paper.


Our entire development of fermionic Wasserstein distances relies heavily on cyclic representations of tensor product algebras obtained from transport plans on these tensor products. Both the Fermi and usual tensor products are relevant in this development. 
Section \ref{SecCR} sets up these representations.
Section \ref{FTPvsTP} then proceeds in this setting, establishing a one-to-one correspondence (Theorem \ref{Thm1-to-1}) between fermionic and usual transport plans.
This section is the heart of the paper's approach.

The cost of a fermionic transport plan, as well as fermionic Wasserstein distances, are defined in Section \ref{SecFW2}.
The theory developed in Section \ref{FTPvsTP} allows one to derive metric properties of fermionic Wasserstein distances by invoking the corresponding results for the usual case, established in \cite{D22, D23, DSS}.

Lastly, in Section 7, the deviation of a system from detailed balance in terms of Wasserstein distance is investigated in a fermionic context.
This requires first developing the theory of fermionic detailed balance beyond the rudiments of the theory put in place by \cite{D18} (in an elementary context) and \cite{CDF} (in an operator algebra framework). It includes addressing foundational aspects of quantum detailed balance.
Subsequently, symmetries (or isometries) of fermionic Wasserstein distances relevant to fermionic detailed balance are derived. These are then used to obtain inequalities placing bounds on the deviation of a system from fermionic detailed balance, in terms of the system's Wasserstein distance from another that does satisfy fermionic detailed balance.

The paper includes a short appendix setting out some aspects of standard forms and Tomita-Taksesaki theory, 
as these are at the foundation of the entire development.

\section{Duality}

\label{duality}

In order to develop the theory of fermionic transport plans in the next section, we first recall
some theory regarding the duals of even (or grading-equivariant) unital positive maps.
In particular, the first part of this section introduces basic notation and conventions to be used throughout, and summarizes essential points about twisted commutants,
which are treated in \cite{DHR1, DHR2}, and also discussed for example in \cite{CDF}. 
Then we provide a treatment of the relationship between different duals, including the KMS dual.  In the process, we begin to explain why it is natural to couch deviation from detailed balance in an optimal transport framework.


Consider two $\s$-finite von Neumann algebras $\ca$
and $\cb$, both in standard form (see the Appendix, including for some notation used below) on the Hilbert spaces $\cg_\ca$ and $\cg_\cb$ respectively.
Denote the set of all faithful normal states on $\ca$ by
$$
\cf(\ca).
$$ 

For any $\m  \in\cf(\ca)$, there is a uniquely determined cyclic and separating vector $\L_\m$ in the natural positive cone $\cp_\ca \subset \cg_\ca$ of the standard form, such that 
$$
\mu(a)=\left\langle \L_\m , a\L_\m \right\rangle
$$
for all $a \in \ca$, which we remind the reader allows us to define a state 
$\m' \in \cf(\ca')$ on the
commutant $\ca'$  of $\ca$ by
$$
\m'(a') = \m \circ j_\ca (a') = \left\langle \L_\m, a' \L_\m \right\rangle
$$
for all $a' \in \ca'$,
where $j_\ca = J_\ca (\cdot)^* J_\ca$ on $B(\cg_\ca)$ in terms of the modular conjugation $J_\ca$ as in the Appendix.

Assume, in addition, that $\ca$ and $\cb$ have $\bz_2$-gradings 
$$
\g_\ca\text{ and } \g_\cb\, ,
$$
which are involutive $*$-automorphisms of $\ca$ and $\cb$ respectively. 
Denote the set of even faithful normal states on $\ca$ by
$$\cf_+ (\ca) \, ,$$
recalling that \emph{even} means
$\mu\circ\gamma_\ca=\mu$.
Note that 
$$\cf_+ (\ca) \neq \varnothing \, ,$$
since for any $\mu \in \cf (\ca)$, which is non-empty as $\ca$ is $\s$-finite, we have $\frac{1}{2} (\mu + \mu\circ\g_\ca) \in \cf_+ (\ca).$
The symbol $\nu\in\cf_+(\cb)$ will be used for states on $\cb$.
For $a\in\ca$, we'll also write
\begin{equation*}
	a_{\pm} :=  \frac{1}{2} \left( a \pm \g_\ca (a) \right) .
\end{equation*}
By the theory of standard forms there is a unique unitary representation 
$$g_\ca$$  
of $\g_\ca$ on $\cg_\ca$ 
satisfying 
$g_\ca a\L_\m = \g_\ca (a) \L_\m$
for all $\m \in \cf_+ (\ca)$,
in particular leaving each of the cyclic vectors $\L_\m$  invariant.
In addition,
\begin{equation}
	\label{Jg=gJ}
	J_\ca g_\ca = g_\ca J_\ca \, .
\end{equation}
See \cite[Corollary 2.5.32]{BR1} for a convenient presentation of all this.

This allows one to define the \emph{twisted commutant} of $\ca$ as
$$\ca^\wr= g_\ca^{1/2} \ca' g_\ca^{-1/2}$$ 
in terms of the unitary operator
$$g_\ca^{1/2}:= p_\ca^+ -ip_\ca^-$$
and its adjoint 
$$g_{\ca}^{-1/2} = p_\ca^+ + ip_\ca^-$$
obtained from
$$
p_\ca^{\pm} := \frac{1}{2}(1 \pm g_\ca) \, .
$$
It is easily checked 
that 
$$g_\ca^{1/2}g_\ca^{1/2} = g_\ca \, .$$
Note that $\L_\m$ is cyclic and separating for $\ca^\wr$.

Because of the notational complexity related to representations obtained from transport plans (see Section \ref{SecCR}), the Klein isomorphism associated to $\g_\ca$ 
(the idea behind it tracing back to \cite{JW, Kl38})
will be denoted by
$$
\g_{\ca}^{1/2} : \ca' \to \ca^\wr : 
a' \mapsto g_\ca^{1/2} a' {g_{\ca}^{-1/2}},
$$
its inverse being denoted by
$\g_{\ca}^{-1/2}$.
Using this $*$-isomorphism between $\ca'$ and $\ca^\wr$,  one can carry or copy $\m'$ to the latter by
\begin{equation} \label{verwToest}
	\mu^{\wr}(a^{\wr})
	=
	\m' \circ \g_{\ca}^{-1/2} (a^{\wr}) 
	= 
	\m \circ j_\ca \circ \g_{\ca}^{-1/2} (a^{\wr})
	=
	\left\langle \Lambda_{\mu}, a^{\wr}\Lambda_{\mu }\right\rangle
\end{equation}
for any  $\mu\in\cf_+ (\ca)$ and all $a^{\wr}\in \ca^{\wr}$ .

Of course, $\g_{\ca}$ can be extended to $B(\cg_{\ca})$ using its unitary representation $g_\ca$ above, but that would cause a lack of notational clarity later on.
Instead, we rather explicitly define the resulting $\bz_2$-gradings of $\ca'$ and $\ca^\wr$ separately by
$$
\g_{\ca'} (a') = g_\ca a' g_\ca 
\quad \text{and} \quad 
\g_{\ca^\wr} (a^\wr) = g_\ca a^\wr g_\ca
$$
for all $a' \in \ca'$ and $a^\wr \in \ca^\wr$.

	
Given this background, we now turn to duals of linear unital positive (u.p.) maps.
From the point of view of physical dynamics, one is particularly interested in unital completely positive (u.c.p.) maps.
In physics u.c.p. maps between observable algebras are referred to as \textit{channels}, and we'll use this terminology often. 
However, we do need to cover certain maps which are merely u.p., not c.p., as will be discussed in the context of detailed balance in Subsections \ref{SubsecFDB} and \ref{OndAfdSQDB}.
Therefore we define duals in a general enough form to cater for this.

The basic definition when ignoring gradings
is the following, which is the (Accardi-Cecchini) dual introduced in \cite[Proposition 3.1]{AC} 
(preceded by a special case in \cite{AH}), formulated here specifically for u.p. maps. 
\begin{definition}\label{duaalDef}
	Consider a u.p. map $E: \ca \to \cb$ such that $\nu\circ E=\m$ for a pair $\m \in \cf(\ca)$ and $\n \in \cf(\cb)$.
	The \emph{dual} 
	$$E' : \cb' \to \ca'$$
	of $E$ 
	(with respect to $\m$ and $\n$)
	is a positive linear map required to satisfy
	\begin{equation}\label{duDefEinsk}
	\left\langle \L_\m , a E'(b') \L_\m \right\rangle
	=
	\left\langle \L_\n  , E(a) b' \L_\n  \right\rangle
	\end{equation}
	for all $a\in \ca$ and $b' \in \cb'$.
\end{definition}
This dual exists (see below), and is unique
and necessarily unital
because of the cyclic and separating vectors involved in the definition. It also satisfies
$\m' \circ E' = \n'$. 
Note that the dual of $E'$ is $E'' = E$.

Recall that a linear map $E : \ca \to \cb$ is said to be \emph{even} if $$E \circ \g _\ca =\g_\cb \circ E.$$
From $\g_{\ca} (a) = g_\ca a g_\ca $ it is seen that 
$\g'_\ca  = \g_{\ca'}$, from which it easily follows that $E'$ is even if $E$ is.
In this paper our interest is in the twisted version of the dual $E'$ when $E$ is even. 
The dual $E'$ will be the special case of the twisted dual $E^{\wr}$ when using a trivial grading.

Since a von Neumann algebra and its twisted commutant are not mutually commuting, 
as opposed to $a$ and $E'(b')$ commuting in \eqref{duDefEinsk},
we initially split the definition of the twisted dual into two possibilities as follows, 
but then show that we can ultimately simplify it to Definition \ref{duaalDef}'s form.

\begin{definition}\label{verwDuaalDef}
	Consider an even u.p. map $E: \ca \to \cb$ such that $\nu\circ E=\m$ 
	for a pair $\m \in \cf_+(\ca)$ and $\n \in \cf_+(\cb)$.
	The \emph{twisted dual} 
	$$E^\wr : \cb^\wr \to \ca^\wr$$
	of $E$ 
	(w.r.t. $\m$ and $\n$)
	 is a positive linear map required to satisfy either
	\begin{equation}\label{verwrDuaal}
		\left\langle \L_\m , a E ^\wr(b^\wr) \L_\m \right\rangle
		=
		\left\langle \L_\n  , E(a) b^\wr \L_\n  \right\rangle
	\end{equation}
	for all $a\in \ca$ and $b^\wr \in \cb^\wr$, or
	\begin{equation}\label{verwrDuaal'}
		\left\langle \L_\m , E ^\wr(b^\wr) a \L_\m \right\rangle
		=
		\left\langle \L_\n  , E(a) b^\wr \L_\n  \right\rangle
	\end{equation}
	for all $a\in \ca$ and $b^\wr \in \cb^\wr$.
\end{definition}

As in the case of the dual, one sees directly from this definition that the twisted dual is 
necessarily unique, even and unital.
In addition, it satisfies
$\m^\wr  \circ E^\wr  = \n^\wr $, 
and the corresponding twisted dual of $E^\wr$ is $E^{\wr\wr} = E$.
The existence of $E^\wr$ is discussed below.

The case \eqref{verwrDuaal} can be expressed as 
$
\left\langle a \L_\m , E ^\wr(b^\wr) \L_\m \right\rangle
=
\left\langle E(a) \L_\n  , b^\wr \L_\n  \right\rangle ,
$
and \eqref{verwrDuaal'} as
$
\left\langle E ^\wr(b^\wr)^* \L_\m , a \L_\m \right\rangle
=
\left\langle E(a)^* \L_\n  , b^\wr \L_\n  \right\rangle,
$
with the latter reminiscent of the adjoint of a conjugate linear operator, giving some intuition about the two cases. 
A relevant example is the following.

\begin{exa}\label{(anti)iso}
If $E$ is a $*$-isomorphism, represented unitarily as
$
E(a) = UaU^*
$,
by elementary manipulations one has
$
E'(b') = U^*b'U
$,
and for $E$ even,
$
E^\wr(b^\wr) = U^* b^\wr U
$
in the form \eqref{verwrDuaal}.
If, instead, $E$ is a $*$-anti-isomorphism, now represented as
$
E(a) = Ua^*U^*
$
using anti-unitary $U$ (see the Appendix),
it leads to
$
E'(b') = U^*b'^*U
$,
and for the even case to
$
E^\wr(b^\wr) = U^* b^{\wr*} U
$
in the form \eqref{verwrDuaal'}.
\end{exa}

Nevertheless, the forms \eqref{verwrDuaal} and \eqref{verwrDuaal'} are equivalent in general, as seen as part of the next result.
That is, we can define $E^\wr$ solely in terms of \eqref{verwrDuaal}. 
\begin{prop}
	\label{DuaEkw}
	For $E$ as given in Definition \ref{verwDuaalDef}, the twisted dual $E^\wr$ exists if and only if the dual $E'$ does, in which case
		\begin{equation*} \label{verwrDuDef}
				E^\wr = \g _\ca^{1/2}\circ E' \circ\g_\cb^{-1/2}\,.
		\end{equation*}
	Moreover, the two possibilities \eqref{verwrDuaal} and \eqref{verwrDuaal'} in defining $E^\wr$ are in fact equivalent.
\end{prop}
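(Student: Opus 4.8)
The plan is to transport both defining relations to the Accardi--Cecchini dual $E'$ through the Klein isomorphisms $\g_\ca^{1/2}$ and $\g_\cb^{-1/2}$, reducing everything to \eqref{duDefEinsk}. All the computations are controlled by three facts, each a consequence of $g_\ca$ implementing $\g_\ca$ while fixing $\L_\m$ (with the analogues for $\cb,\n$). First, $g_\ca\L_\m=\L_\m$ forces $g_\ca^{\pm1/2}\L_\m=\L_\m$. Second, splitting $c\L_\m$ along the eigenspaces of $g_\ca$ according to the parity of $c\in\ca$ gives $g_\ca^{\mp1/2}c\L_\m=(c_+\pm ic_-)\L_\m$; since $\m$ is even the vector state $\langle\L_\m,\cdot\,\L_\m\rangle$ is even and $g_\ca b'\L_\m=\g_{\ca'}(b')\L_\m$, so the same identity holds on $\ca'$ with the $\g_{\ca'}$-parity. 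Third, $\langle\L_\m,x\L_\m\rangle=0$ whenever $g_\ca xg_\ca=-x$, i.e.\ odd operators have vanishing expectation.

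Granting that $E'$ exists (the Accardi--Cecchini theorem), I would set $E^\wr:=\g_\ca^{1/2}\circ E'\circ\g_\cb^{-1/2}$ and verify \eqref{verwrDuaal} directly. Writing $b'=\g_\cb^{-1/2}(b^\wr)$ so that $E^\wr(b^\wr)=g_\ca^{1/2}E'(b')g_\ca^{-1/2}$, the first fact deletes $g_\ca^{-1/2}$ at $\L_\m$, and shifting the remaining $ag_\ca^{1/2}$ to the bra, where the second fact applies to $g_\ca^{-1/2}a^*\L_\m$, turns the left side of \eqref{verwrDuaal} into $\langle\L_\m,(a_+-ia_-)E'(b')\L_\m\rangle$. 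Since $E$ is even, $E(a_\pm)=E(a)_\pm$, so \eqref{duDefEinsk} rewrites this as $\langle\L_\n,(E(a)_+-iE(a)_-)b'\L_\n\rangle$. On the right side $g_\cb^{1/2}b'\L_\n=((b')_+-i(b')_-)\L_\n$; expanding both sides and discarding the odd products by the third fact, each collapses to $\langle\L_\n,E(a)_+(b')_+\L_\n\rangle-i\langle\L_\n,E(a)_-(b')_-\L_\n\rangle$, which proves \eqref{verwrDuaal}. Because the Klein maps are $*$-isomorphisms, the same computation read backwards shows that any solution of \eqref{verwrDuaal} yields $E':=\g_\ca^{-1/2}\circ E^\wr\circ\g_\cb^{1/2}$ obeying \eqref{duDefEinsk}; this gives the stated equivalence of existence together with the formula, while positivity, unitality, evenness and $\m^\wr\circ E^\wr=\n^\wr$ are inherited from $E'$ through the isomorphisms.

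The delicate part, and the one I expect to be the main obstacle, is the asserted equivalence of \eqref{verwrDuaal} and \eqref{verwrDuaal'}. The difficulty is precisely the non-commutation of $\ca$ and $\ca^\wr$ noted before the statement: the Klein isomorphism endows them with $\bz_2$-graded commutation relations under which $a_+$ commutes with all of $\ca^\wr$ while two odd elements acquire a sign, so moving $a$ across $E^\wr(b^\wr)$ matches \eqref{verwrDuaal} with \eqref{verwrDuaal'} only up to the odd-odd sector. My plan is to rewrite each of \eqref{verwrDuaal} and \eqref{verwrDuaal'}, via the Klein isomorphism exactly as above, as a statement about $E'$, and to show both reduce to the single relation \eqref{duDefEinsk}. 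The entire weight then falls on reconciling the odd-sector contributions $\langle\L_\n,E(a)_-(b')_-\L_\n\rangle$, where the two forms superficially differ in the sign of $i$; here the evenness of $\m$ and $\n$ (equivalently $g_\ca\L_\m=\L_\m$ and $g_\cb\L_\n=\L_\n$) and the parity-annihilation of the third fact must be deployed in tandem with the graded commutation. I expect this reconciliation on the odd part to be the crux of the argument, the remainder being routine bookkeeping with $\g_\ca^{1/2}$ and $\g_\cb^{-1/2}$.
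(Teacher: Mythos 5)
Your treatment of the existence equivalence and of the formula $E^\wr=\g_\ca^{1/2}\circ E'\circ\g_\cb^{-1/2}$ is essentially the paper's own argument: both conjugate by the Klein unitaries, use $g_\ca^{\pm1/2}\L_\m=\L_\m$ together with $g_\ca^{\mp1/2}c\L_\m=(c_+\pm ic_-)\L_\m$, and reduce \eqref{verwrDuaal} to the untwisted relation \eqref{duDefEinsk}; your three ``controlling facts'' are exactly the ingredients used there. That part of your proposal is correct and complete.

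The gap is in the ``moreover'' clause, and you have put your finger on it without closing it. You reduce \eqref{verwrDuaal} and \eqref{verwrDuaal'} to expressions agreeing except for the sign of $i$ on the odd--odd contribution $\langle\L_\n,E(a)_-(b')_-\L_\n\rangle$, and you then only \emph{expect} evenness and parity-annihilation to reconcile them. They cannot: $E(a)_-(b')_-$ is a product of two odd operators and hence \emph{even}, so the vanishing of odd expectations in the $g_\cb$-invariant vector $\L_\n$ says nothing about it, and this term is generically nonzero (take $E=\id_\ca$, $\ca=M_2(\bc)$ with grading implemented by $\operatorname{diag}(1,-1)$ and a non-tracial even diagonal state, or a one-site CAR example: one gets a multiple of $\sqrt{p(1-p)}$). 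Indeed, for $E=\id_\ca$ the map $\id_{\ca^\wr}$ satisfies \eqref{verwrDuaal} trivially, while \eqref{verwrDuaal'} for the same map asserts $\langle\L_\m,ab^\wr\L_\m\rangle=\langle\L_\m,b^\wr a\L_\m\rangle$; since odd elements of $\ca$ and of $\ca^\wr$ anticommute, the two sides differ by $2\langle\L_\m,b^\wr_-a_-\L_\m\rangle$, which does not vanish in these examples. So the reconciliation you defer is not bookkeeping but the step that fails as stated. For comparison, the paper dispatches it in one line --- the map $\g_\ca^{1/2}\circ E'\circ\g_\cb^{-1/2}$ ``simultaneously satisfies both \eqref{verwrDuaal} and \eqref{verwrDuaal'}, since $\ca$ and $\ca'$ mutually commute'' --- but that commutativity only kills the even--odd cross terms, not the odd--odd term you isolated; your explicit accounting exposes that this step requires either an additional argument showing $\langle\L_\n,E(a)_-(b')_-\L_\n\rangle=0$ (which the examples above rule out in general) or a reformulation of the second defining condition. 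As it stands, your proposal does not prove the final assertion, and the difficulty you flagged is genuine rather than a technicality to be deferred.
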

\begin{proof}
	Suppose $E^\wr$ exists in the form \eqref{verwrDuaal}. We'll show that the u.p. map 
	$
	F = \g_\ca^{-1/2} \circ E^\wr \circ \g_\cb^{1/2} : \cb' \to \ca'
	$
	satisfies the requirements of $E'$.
	In terms of the notions earlier in this section as well as the notation
	$
	a_{\pm} := \frac{1}{2}( a \pm \g_\ca (a) ) \, ,
	$
	we have
	\begin{equation*}
		\begin{aligned}
			&\left\langle \L_\m , 
			a^* F(b') \L_\m \right\rangle \\
			= &
			\left\langle 
			g_\ca^{1/2} a \L_\m , E^\wr ( \g_\cb^{1/2} (b') ) \L_\m 
			\right\rangle\\
			= &
			\left\langle 
			\frac{1}{2}
			( a + a_+ - a_- - ia + ia_+ - ia_-) \L_\m , 
			E^\wr ( \g_\cb^{1/2} (b') ) \L_\m \right\rangle\\
			= &
			\left\langle 
			g_\cb^{1/2} E(a) \L_\n , 
			g_\cb^{1/2} b' \L_\n \right\rangle \\
			= &
			\left\langle \L_\n , E(a^*) b' \L_\n \right\rangle
		\end{aligned}
	\end{equation*}
	as required.
	A similar calculation, also using the fact that $\cb$ and $\cb'$ mutually commute, covers the case \eqref{verwrDuaal'}.
	That is, $E'$ indeed exists if $E^\wr$ does.
	
	Conversely, suppose $E'$ exists. Analogous calculations show that $E^\wr$ exists, being given by the u.p. map
	$\g _\ca^{1/2}\circ E' \circ\g_\cb^{-1/2}$,
	which simultaneously satisfies both \eqref{verwrDuaal} and \eqref{verwrDuaal'},
	since $\ca$ and $\ca'$ mutually commute.
	
	Combining these two directions, we also see that the two forms \eqref{verwrDuaal} and \eqref{verwrDuaal'} of $E^\wr$ are indeed equivalent.
\end{proof}

Moreover, the following can be gleaned from \cite[Proposition 11.1 and Theorem 11.2]{CDF}.
\begin{theorem} \label{twDuSys}
	The twisted dual $E^\wr$ in Definition \ref{verwDuaalDef} indeed exists.
	Furthermore, if $E$ is $n$-positive, so is $E^\wr$. If $E$ is a channel, so is $E^\wr$.
\end{theorem}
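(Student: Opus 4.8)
The plan is to reduce everything to the untwisted dual $E'$ and then transport the conclusions through the Klein isomorphisms. By Proposition \ref{DuaEkw}, $E^\wr$ exists precisely when $E'$ does, with $E^\wr=\g_\ca^{1/2}\circ E'\circ\g_\cb^{-1/2}$. The maps $\g_\ca^{1/2}:\ca'\to\ca^\wr$ and $\g_\cb^{-1/2}:\cb^\wr\to\cb'$ are unital $*$-isomorphisms, hence completely positive, and composition with completely positive maps preserves $n$-positivity (for every $n$) and complete positivity. Since the inverses are also $*$-isomorphisms, these properties pass back and forth between $E'$ and $E^\wr$, and unitality transfers likewise. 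Thus it suffices to establish, for the untwisted dual, that $E'$ exists for u.p.\ $E$ and that $E'$ is $n$-positive (resp.\ a channel) whenever $E$ is.

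For existence I would first fix $b'\in\cb'$ and pin down the vector $E'(b')\L_\m\in\cg_\ca$ using cyclicity of $\L_\m$: setting $a=c^*$ in \eqref{duDefEinsk} forces $\left\langle c\L_\m,E'(b')\L_\m\right\rangle=\left\langle E(c)\L_\n,b'\L_\n\right\rangle$ for all $c\in\ca$, so one wants the conjugate-linear functional $c\L_\m\mapsto\left\langle E(c)\L_\n,b'\L_\n\right\rangle$ to be bounded and then applies the Riesz lemma. For $2$-positive $E$ the Kadison--Schwarz inequality $E(c^*)E(c)\le E(c^*c)$ gives $\|E(c)\L_\n\|^2=\n(E(c^*)E(c))\le\n(E(c^*c))=\m(c^*c)=\|c\L_\m\|^2$, so the functional is bounded, $E'(b')$ is constructed as a bounded operator, and one checks it lies in $\ca'$ and that $b'\mapsto E'(b')$ is positive and satisfies \eqref{duDefEinsk}. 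For merely u.p.\ $E$ (the case needed for detailed balance) this estimate is unavailable, and I would instead invoke the Accardi--Cecchini construction in the form carried out in \cite[Proposition 11.1]{CDF}, which uses the modular data of the standard forms to produce $E'$ directly, with positivity built in.

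For the positivity degrees I would use stability of the dual under matrix amplification. Put $E_n=E\otimes\id_{M_n}:\ca\otimes M_n\to\cb\otimes M_n$, equipped with the product states $\m\otimes\tau$ and $\n\otimes\tau$, where $\tau$ is the normalized trace, so that $(\n\otimes\tau)\circ E_n=\m\otimes\tau$. The standard form of $\ca\otimes M_n$ is the tensor product of those of $\ca$ and $M_n$, its commutant is $\ca'\otimes M_n$, and the cyclic vector factors as $\L_\m\otimes\L_\tau$; checking \eqref{duDefEinsk} on elementary tensors then yields the amplification identity $(E_n)'=(E')_n=E'\otimes\id_{M_n}$. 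Now $E$ is $n$-positive iff $E_n$ is positive, in which case $E_n$ is an honest u.p.\ map, so $(E_n)'$ exists and is positive by the previous paragraph; hence $(E')_n=(E_n)'$ is positive, i.e.\ $E'$ is $n$-positive. Letting $n$ range over all values gives complete positivity, and together with the automatic unitality of $E'$ this shows $E'$ is a channel whenever $E$ is. Combined with the first paragraph, this proves the theorem.

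The main obstacle I expect is the existence statement for maps that are positive but not $2$-positive: there the contraction argument collapses because Kadison--Schwarz fails, and one genuinely needs the modular-theoretic construction of \cite{AC,CDF} rather than an elementary Hilbert-space estimate. A secondary technical point is verifying that the standard-form data (cyclic vector, modular conjugation, commutant) of the amplified algebra $\ca\otimes M_n$ really is the product of the constituent data, which is exactly what makes the amplification identity $(E_n)'=(E')_n$ hold on the nose.
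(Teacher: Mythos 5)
Your proposal is correct and follows essentially the same route as the paper: you reduce the twisted dual to the ordinary dual via the Klein isomorphisms (Proposition \ref{DuaEkw}) and invoke the Accardi--Cecchini construction for the existence and positivity of $E'$ of a merely u.p.\ map, which is precisely what the paper does by citing \cite{AC} and \cite{CDF}, correctly noting that the Kadison--Schwarz contraction argument is only available in the $2$-positive case. Your explicit matrix-amplification identity $(E_n)'=(E')_n$ for transferring $n$-positivity is a sound supplement to what the paper delegates to the references, resting as you say on the (standard, and true) fact that the standard form of $\ca\otimes M_n$ with the product state is the tensor product of the constituent standard forms, with commutant $\ca'\otimes M_n'$ and cyclic vector $\L_\m\otimes\L_\tau$.
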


This theorem is a consequence of the work of Accardi and Cecchini \cite[Proposition 3.1]{AC}, though one can state it in different levels of generality.
See for example the earlier work \cite[Theorem 2.1]{AH} in the case of 2-positive maps, in which case the Schwarz inequality holds, leading to Hilbert space representations of the maps as contractions.
A convenient account of the Accardi-Cecchini result  is given in \cite[Theorem 2.5]{DS2}.



\begin{Rev&OT}
We'll view the existence of a (twisted) dual as a mathematically natural way to capture a general notion of reversibility of dynamics,
which is independent of whether the dual or twisted dual is considered. 
This will be discussed further in Subsections \ref{SubsecFDB} and \ref{OndAfdSQDB} in the context of detailed balance,
where the reverse of dynamics is key. 
At the same time, these duals play a cardinal role in defining transport plans between dynamical systems, as will be seen in the framework set up in the next section, as well as in subsequent sections where the theory of fermionic optimal transport between dynamical systems is systematically developed.

The fact that (twisted) duals appear in a central way in both,
emphasizes the close relation between the ideas of optimal transport between systems and the ideas of detailed balance, even though the latter is a more restricted situation. 
This suggests that optimal transport
is a natural framework in which to extend detailed balance, in particular to study some aspects of non-equilibrium by considering deviation from detailed balance in terms of optimal transport.
This is not specific to the fermionic case, but applies broadly to the quantum and classical cases, as also seen in \cite{DSS}.
\end{Rev&OT}


We record the following property, which was already implicitly used above in relation to $E'$ and $E^\wr$ being even.
\begin{prop}
	\label{compDu}
	Consider even u.p. maps $E : \ca \to \cb$ and $F : \cb \to \cc$ satisfying $\nu \circ E = \mu$ and $\xi \circ F = \nu$.
	Then
	$$(F \circ E)^{\wr}=E^{\wr} \circ F^{\wr}.$$
\end{prop}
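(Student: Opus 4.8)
The plan is to deduce the composition law directly from the defining identity \eqref{verwrDuaal}, exploiting the uniqueness of the twisted dual noted after Definition \ref{verwDuaalDef}. First I would settle the bookkeeping: since $\n\circ E=\m$ and $\xi\circ F=\n$, composing gives $\xi\circ(F\circ E)=\m$, so all three twisted duals $E^\wr$ (w.r.t. $\m,\n$), $F^\wr$ (w.r.t. $\n,\xi$) and $(F\circ E)^\wr$ (w.r.t. $\m,\xi$) exist by Theorem \ref{twDuSys}. The candidate map $E^\wr\circ F^\wr:\cc^\wr\to\cb^\wr\to\ca^\wr$ is u.p. as a composite of u.p. maps, so it suffices to verify that it satisfies the identity \eqref{verwrDuaal} characterizing $(F\circ E)^\wr$; uniqueness then forces equality.

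The heart of the argument is a two-step substitution. The key structural point is that in the form \eqref{verwrDuaal} the image $E^\wr(b^\wr)$ sits to the right of $a$ inside $\langle\L_\m,\cdot\,\L_\m\rangle$, exactly mirroring the placement of $b^\wr$ next to $E(a)$ on the other side; this parallelism is what lets the two defining identities telescope. Concretely, for $a\in\ca$ and $c^\wr\in\cc^\wr$ I would first apply \eqref{verwrDuaal} for $E^\wr$ with the element $b^\wr=F^\wr(c^\wr)\in\cb^\wr$, obtaining
$$\left\langle\L_\m, a\, E^\wr(F^\wr(c^\wr))\,\L_\m\right\rangle = \left\langle\L_\n, E(a)\, F^\wr(c^\wr)\,\L_\n\right\rangle,$$
and then apply \eqref{verwrDuaal} for $F^\wr$ with $b=E(a)\in\cb$ to rewrite the right-hand side as $\langle\L_\xi, F(E(a))\,c^\wr\,\L_\xi\rangle=\langle\L_\xi,(F\circ E)(a)\,c^\wr\,\L_\xi\rangle$. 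Thus $E^\wr\circ F^\wr$ satisfies the characterizing identity of $(F\circ E)^\wr$, and uniqueness gives $(F\circ E)^\wr=E^\wr\circ F^\wr$.

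The only points requiring care are that the reference states match at the intermediate level — both $E^\wr$ and $F^\wr$ invoke $\L_\n$ on the $\cb$ side, which is precisely what legitimizes the chaining — and that the substituted elements land in the correct algebras, namely $F^\wr(c^\wr)\in\cb^\wr$ and $E(a)\in\cb$. Neither is a genuine obstacle, so I expect no substantive difficulty. An equivalent route is to invoke Proposition \ref{DuaEkw} to write each twisted dual as $\g^{1/2}_{(\cdot)}\circ(\cdot)'\circ\g^{-1/2}_{(\cdot)}$ and observe that the inner Klein isomorphisms on $\cb$ cancel, $\g_\cb^{-1/2}\circ\g_\cb^{1/2}=\id$, reducing the claim to the untwisted composition law $(F\circ E)'=E'\circ F'$; but proving that law requires the very same telescoping, so the direct argument above is the most economical.
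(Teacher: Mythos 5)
Your proof is correct; the paper in fact states Proposition \ref{compDu} without any proof, and your argument --- telescoping the defining identity \eqref{verwrDuaal} through the intermediate state $\n$ and then invoking the uniqueness of the twisted dual guaranteed by the cyclic and separating vector $\L_\m$ --- is precisely the routine verification the authors leave implicit. The only checks needed beyond that are the ones you name (that $\xi\circ(F\circ E)=\m$, that $F\circ E$ is even and u.p., and that $E^\wr\circ F^\wr$ is a positive linear map $\cc^\wr\to\ca^\wr$), and you handle them adequately.
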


\begin{KMS}
	As was seen in \cite{D22, D23, DSS}, the symmetry of Wasserstein distances between states and between dynamical systems is related to what is known as the KMS (or standard) dual, also known as the Petz recovery map in quantum information literature. 
	It was first defined in \cite[p. 254]{AC} (under the name ``bidual map"), but then studied for its own sake and in a broader context and different purpose in \cite{Pet, Pet88} (under the name ``dual").
	For a u.p. map 
	$E: \ca \to \cb$ such that $\nu\circ E=\m$, its \emph{KMS dual} (w.r.t. $\m$ and $\n$) 
	can be defined as the u.p. map $\cb \to \ca$
	\begin{equation}\label{KMS}
		E^\s = j_\ca \circ E' \circ j_\cb \, .
	\end{equation}
	In other words, the KMS dual is a natural way of carrying the dual over from the commutants to the algebras themselves.
	It is simple to verify that $E^\s$ is $n$-positive or completely positive, respectively, if $E$ is.
	
	\begin{rem}
		In \eqref{KMS} we use the convention that the domain of $E^\s$ is determined by the restriction imposed by that of $E'$, in this case leading to $\cb$. Similarly for compositions in the sequel.
	\end{rem}
	
	As we are aiming for fermionic Wasserstein distances, when $E$, $\m$ and $\n$ above are even,
	it is appropriate to ask
	if there is a separate fermionic version of the KMS dual in the $\bz_2$-graded case, obtained from the twisted dual instead of the dual. 
	It is natural to use 
	$$
	\g_\ca^{1/2} \circ j_\ca = j_\ca \circ \g_{\ca'}^{1/2} 
	\quad \text{and} \quad
	j_\ca \circ \g_\ca^{-1/2} = \g_{\ca'}^{-1/2} \circ j_\ca 
	$$
	instead of $j_\ca$ to go from $\ca$ to $\ca^\wr$ and back respectively (keeping in mind (\ref{Jg=gJ})). The construction corresponding to \eqref{KMS} is then
	$$ 
	j_\ca \circ \g_\ca^{-1/2} \circ E^\wr \circ \g_\cb^{1/2} \circ j_\cb
	=
	E^\s ,
	$$
	by Proposition \ref{DuaEkw},
	simply delivering the usual KMS dual of $E$. Now $E^\s$ is even.
	It is easily checked that 
	$
	( j_\ca \circ E \circ j_\cb )' = j_\cb \circ E' \circ j_\ca \, ,
	$
	hence
	\begin{equation}\label{KMSverwDu}
		\begin{aligned}
			(E^\s)^\wr 
			&= 
			\g_\cb^{1/2} \circ (j_\ca \circ E' \circ j_\cb)' \circ \g_\ca^{-1/2}
			=
			\g_\cb^{1/2} \circ j_\cb \circ E \circ j_\ca \circ \g_\ca^{-1/2}\\
			&=
			j_\cb \circ ( \g_\ca^{1/2} \circ E' \circ \g_\cb^{-1/2} )' \circ j_\ca
			=
			(E^\wr)^\s.
		\end{aligned}
	\end{equation}
	because of \eqref{Jg=gJ} and the dual of a $*$-isomorphism in Example \ref{(anti)iso}.
\end{KMS}

\section{Transport plans and dynamical systems}
\label{SecTrPl&Bal}

The starting point of optimal transport is the notion of a transport plan from one state to another. Our goal is to formulate fermionic optimal transport and Wasserstein distances in terms of fermionic or $\bz_2$-graded transport plans, also between systems. In this section we present the basic definitions of fermionic transport plans that form the foundation of our further development of fermionic optimal transport. 
In order to define different classes of fermionic transport plans relevant to us, we need conditions which will be referred to as balance conditions, also discussed in this section. 
This forms a fermionic version of transport concepts from \cite{D22, D23, DSS} and balance from \cite{DS2}.
The notation and conventions introduced here will be used throughout the sequel, along with that from the previous section.

However, we do need transport plans for the usual tensor product as well, since the Fermi case will in effect  be translated into the usual case, in terms of which the metric properties of fermionic Wasserstein distances will be proven in Section \ref{SecFW2}. 
\begin{notat}
	As we need to translate between the usual and Fermi (i.e., $\bz_2$-graded) tensor products, both in algebraic form rather than completed, 
	we'll use the symbol $\odot$ for the former, 
	and $\ftp$ for the latter, as in \cite{CDF}, including for elementary tensors. In short,
	\[
	a \odot b' \in \ca \odot \cb'
	\quad \text{and} \quad
	a \ftp b^\wr \in \ca \ftp \cb^\wr
	\]
	etc. The Fermi tensor product 
	(or $\bz_2$-graded tensor product)
	is reviewed in some detail in \cite{CDF}.
	We assume it as known.
\end{notat}

The most basic type of transport plan we will consider in the fermionic context is provided in the following definition.
Note that a \emph{state} $\om$ on a unital $*$-algebra $\cc$, is a linear functional such that $\om(1_\cc) = 1$ and $\om(c^*c) \geq 0$ for all $c \in \cc$.

\begin{definition}\label{TPforSta}
	\label{oordPlan}Consider two $\bz_2$-graded von Neumann algebras $\ca$ and $\cb$ with states 
	$\mu \in \cf_+(\ca)$ and $\nu \in \cf_+(\cb)$ 
	respectively. 
	A \emph{fermionic transport plan from }$\mu$\emph{\ to }$\nu$ is a state
	$\omega$ on the algebraic Fermi tensor product $\ca\ftp \cb^\wr$ such that
	\begin{equation}	
		\label{marginals}
		\omega(a\ftp 1_{\cb^{\wr}}) = \mu(a) 
		\text{ \ \ and \ \ }
		\omega(1_{\ca}\ftp b^\wr) = \nu^\wr(b^\wr)
	\end{equation}	
	and
	\begin{equation}
		\omega(\gamma_\ca(a)\ftp b^\wr)=\omega(a\ftp\gamma_\cb^\wr(b^\wr))
		\label{oordPl}
	\end{equation}
	for all $a\in \ca$ and $b^\wr\in \cb^\wr$. Denote the set of all fermionic
	transport plans from $\mu$ to $\nu$ by 
	$$T^{\F} (\mu,\nu).$$
\end{definition}

Condition \eqref{oordPl} is an example of a balance (or transport plan) condition which will shortly be discussed in general for systems.

We also need the case without a grading, or when the grading is simply ignored, which we state next.

\begin{definition}
	\label{oordPlan}
	Consider two von Neumann algebras $\ca$ and $\cb$ with states 
	$\mu \in \cf(\ca)$ and $\nu \in \cf(\cb)$ 
	respectively. 
	A \emph{transport plan from} $\mu$\emph{\ to} $\nu$ is a state
	$\omega$ on the algebraic tensor product $\ca\odot \cb'$ such that
	\[
	\omega(a \odot 1_{\cb'}) = \mu(a)
	\text{ \ \ and \ \ }
	\omega(1_{\ca } \odot b') = \nu'(b')
	\]
	for all $a \in \ca$ and $b' \in \cb'$. 
	We denote the set of all transport plans from $\mu$ to $\nu$ by
	$$T(\mu,\nu).$$
	When the von Neumann algebras $\ca$ and $\cb$ have $\bz_2$-gradings $\gamma_\ca$ and $\gamma_\cb$ respectively, and $\m$ and $\n$ are even, then  a \emph{graded} (or \emph{even}) transport plan from $\m$ to $\n$ is an
	$\om \in T(\mu,\nu)$
	such that 
	\[
	\omega(\gamma_\ca(a) \odot b')
	=
	\omega(a \odot \gamma_\cb'(b'))
	\]
	for all $a \in \ca$ and $b' \in \cb'$. 
	The set of graded transport plans from $\m$ to $\n$ is denoted by
	$$T^{\g}(\mu,\nu).$$
\end{definition}




Note that, even if we don't assume that $\mu$ and $\nu$ in the definition of a graded transport plan are even, as long as such a plan $\om$ exists, it follows from the balance condition that they are:
$$
\mu(\gamma_\ca(a)) 
= \omega(\gamma_\ca(a) \odot 1_{\cb'})=\omega(a \odot \g_\cb'(1_{\cb'})) 
= \om(a \odot 1_{\cb'})
= \mu(a)
$$
and similarly for $\nu$. 

Our ultimate aim is to define fermionic Wasserstein distances between fermionic quantum systems. Mathematically the systems we consider are formalized as below, with the $\bz_2$-graded case being our main interest. 
Results for the usual or non-graded case regarding metric properties of Wasserstein distances,
will nevertheless be used in Sections \ref{SecFW2} 
to arrive at corresponding results for the graded case through a translation process based on Section \ref{FTPvsTP}.

\begin{definition}\label{stelsel}
	A \emph{$\bz_2$-graded system}
	$\va = (\ca,\a,\mu,k)$ on a $\bz_2$-graded von Neumann algebra $\ca$ consists of the following:
	\emph{Coordinates} $k=(k_{1},...,k_{d})$ with $k_{1},...,k_{d}\in \ca$ for some $d\in\{1,2,3,...\}$,
	an even state $\mu\in\cf_+(\ca)$, 
	as well as even u.p. maps
	\[
	\a_\u : \ca \rightarrow \ca
	\]
	indexed by $\u$ from an arbitrary but fixed index set $\U$,
	such that 
	$$
	\mu\circ\a_\u = \mu
	$$
	for every $\u \in \U$. We write $\a = (\a_\u)_{\u \in \U}$ for these u.p. maps collectively, calling it the \emph{dynamics} of $\va$.
	
	In the case of a trivial grading, or if a $\bz_2$-grading is present but not taken into account, we simply call $\va$ a \emph{system} instead. 
\end{definition}


%

\begin{notat}\label{SupInd}
	As a short hand we'll usually suppress the index in conditions involving an index 
	$\u \in \U$.
	For example, we'll write
	$$
	\mu\circ\a = \mu 
	$$
	instead of saying that
	$
	\mu\circ\a_\u = \mu
	$
	for all $\u \in \U$.
    The notation 
    $$\vb = (\cb, \b,\nu,l)$$ 
    will be used to denote a system on $\cb$. 
    Aside from a specific case mentioned below, $d$ and the index set $\U$ can be taken as fixed throughout the paper, and in particular they are the same for all systems considered.
\end{notat}

In Definition \ref{stelsel} we used a simpler (but ultimately equivalent) indexing of the dynamics than \cite[Definition 3.1]{D23} and \cite[Definition 7.1]{DSS}. The point remains the same though: we allow $\va$ to simultaneously encompass multiple dynamics on the same von Neumann algebra with the same invariant state. The index set $\U$ could for example include a copy of $\br$ to index a one parameter group, with the rest of $\U$ similarly indexing other dynamics, including groups, semigroups or single u.p. maps.

There is a particular case, though, where we'll have no coordinates or index set indicated,
namely the system denoted as
$$(\ca,\s^\m,\m).$$
Here we use the modular group $\s^\m$ (from
Tomita-Takesaki theory) as the dynamics,
the index set effectively taken as $\br$.
Coordinates won't play any role in it. It will appear in the transport plan conditions in (\ref{FOordPl}) and (\ref{OordPl}) below, where coordinates do not yet come into play.


\begin{rem}
Physically, channels are typically of most interest as dynamics. But as will be seen in Section \ref{SecAfwVanFB}, certain merely u.p. maps related to detailed balance are also relevant. 
Moreover, complete positivity of dynamics is not required in building a theory of optimal transport for these systems.

The $k_i$'s will play the role of coordinates, analogous to the coordinate functions $(x_1,...,x_d) \mapsto x_i$ on $\br^d$, but they will only become relevant in Section \ref{SecFW2}, where we discuss fermionic Wasserstein distances.
\end{rem}

In  order to define all the transport plans relevant to us, we need to define the KMS dual of a system.
\begin{definition}
	\label{KMSduaalDef}
	The \emph{KMS dual} of the system $\va$, is the system
	$$\va^\s = (\ca, \a^\s, \mu, k),$$ 
	where
	$
	(\a^\s)_\u = \a_\u^\s := (\a_\u)^\s
	$
	for all $\u$, 
	in terms of the KMS dual (with respect to $\m$) discussed in Section \ref{duality}.
	Note that if $\va$ is $\bz_2$-graded, then so is $\va^\s$.
\end{definition}

We'll need a fermionic version of a condition which was called balance between systems in \cite{DS2}, but will here be viewed as expressing a transport plan between systems, as in \cite{DSS}. 
This condition does not involve the coordinates $k$ of a system $\va$, but only the states and dynamics.
We present this condition along with refined versions which will play a key role in fermionic Wasserstein distances between $\bz_2$-graded systems.

\begin{definition}	\label{TF(A,B)} 
	Consider two $\bz_2$-graded systems $\va$ and $\vb$.  
	A \emph{fermionic transport plan from $\va$ to} $\vb$ is an 
	$\om \in T^{\F} (\m,\n)$ such that
	\begin{equation}
		\label{FOordPl}
		\om(\a(a) \ftp b^\wr) = \om(a \ftp \b^\wr(b^\wr))
	\end{equation}
	for all $a\in \ca$ and $b^\wr\in \cb^\wr$. 
	This condition is written as
	\[
	\va \om \vb.
	\]
	We denote the set of all fermionic transport plans from $\va$ to $\vb$ by
	\[
	T^{\F}(\va,\vb) := \left\{  \om \in T^{\F}(\m,\n) : \va \om \vb \right\}  .
	\]
	Similarly the set of \emph{modular} fermionic transport plans from $\va$ to $\vb$ is
	\[
	T^{\F}_{\s}(\va,\vb) 
	:= 
	\left\{  
	\om \in T^{\F}(\va,\vb) : (\ca,\s^\m,\m) \om (\cb,\s^\n,\n) 
	\right\} .
	\]
	The set of \emph{KMS} fermionic transport plans from $\mathbf{A}$ to $\mathbf{B}$ is
	\[
	T^{\F}_{\s\s}(\va,\vb)
	:= 
	\left\{ \om \in T^{\F}_\s(\va,\vb) : \va^\s \om \vb^\s \right\}.
	\]
\end{definition}

Note that none of the sets of fermionic transport plans we've defined are empty, as they all contain $\m \ftp \n^\wr$.
The  more restricted the set of transport plans, the more properties the resulting fermionic Wasserstein distance between systems will have, as will be seen in Section \ref{SecFW2}. For example, the KMS duality requirement defining $T^{\F}_{\s\s}(\va,\vb)$, will lead to symmetry, hence delivering a pseudometric $W^{\F}_{\s\s}$ on any set of $\bz_2$-graded systems (with the same $\U$ and $d$).

As a nontrivial example, we now define an important fermionic transport plan from a state to itself, analogous to the case of usual tensor products.

\begin{definition}\label{diagFTrPl}
	The \textit{fermionic diagonal transport plan} (also called the \textit{fermionic identity transport plan}) $\d_{\nu}^{\textrm{F}}$ from $\nu \in \cf_+(\cb)$ to itself is defined as
	\[
	\d_{\nu}^{\F}(d) = \langle \L_\n , \varpi_\cb^{\F}(d) \L_\n \rangle,
	\]
	for all $d \in \cb \ftp \cb^{\wr},$ where 
	\[
	\varpi_\cb^{\F}: 
	\cb \ftp \cb^{\wr} \to B(\cg_{\cb})
	\]
	is given via
	\[
	\varpi_\cb^{\textrm{F}}(b \ftp b^{\wr})=bb^{\wr}
	\]
	for all $b \in \cb,\,b^{\wr} \in \cb^{\wr}.$
\end{definition}

\begin{remark}\label{diagIsFTrPl}
	It is elementary to show that $\d_{\nu}^{\textrm{F}}$ is a state by proving that $\varpi_\cb^{\textrm{F}}$ is a unital $*$-homomorphism. 
	Condition \eqref{marginals} is trivial to check from the definition of a diagonal transport plan, whereas \eqref{oordPl} follows from Example \ref{(anti)iso}.
	Note that in fact
	$$\d_{\n}^{\F} \in T^{\F}_{\s\s}(\vb,\vb),$$
	since the required properties are clearly satisfied.
	
	The term ``diagonal'' comes from the analogous construction in measure theory, while the term ``identity'' can be used, since we will see later that this transport plan corresponds to the identity channel (refer to Remark \ref{diag&IdCh}).
\end{remark}

In order to exploit the theory of Wasserstein distances for usual tensor products from \cite{D22, D23, DSS}, we'll express the sets of fermionic transport plans via corresponding sets of ``usual" transport plans. For the sake of clarity, we explicitly state their definition, taken from \cite{DSS}.

\begin{definition}
	\label{T(A,B)}
	Consider two systems $\va$ and $\vb$.  
	A \emph{transport plan from $\va$ to} $\vb$ is an 
	$\om \in T(\m,\n)$ such that
	\begin{equation}
		\label{OordPl}
		\om(\a(a) \odot b') = \om(a \odot \b'(b'))
	\end{equation}
	for all $a\in \ca$ and $b'\in \cb'$. 
	This condition is written as
	\[
	\va \om \vb.
	\]
	We denote the set of all transport plans from $\va$ to $\vb$ by
	\[
	T(\va,\vb) := \left\{  \om \in T(\m,\n) : \va \om \vb \right\}  .
	\]
	Similarly the set of \emph{modular} transport plans from $\va$ to $\vb$ is
	\[
	T_{\s}(\va,\vb) 
	:= 
	\left\{  \om \in T(\va,\vb) : (\ca,\s^\m,\m) \om (\cb,\s^\n,\n) \right\} .
	\]
	The set of \emph{KMS} transport plans from $\mathbf{A}$ to $\mathbf{B}$ is
	\[
	T_{\s\s}(\va,\vb)
	:= 
	\left\{ \om \in T_\s(\va,\vb) : \va^\s \om \vb^\s \right\}.
	\]
\end{definition}

In connection to $T_{\sigma}(\mathbf{A},\mathbf{B})$ we point out that
$(\sigma_{t}^{\nu})^{\prime}=\sigma_{t}^{\nu^{\prime}}$.
Note that the transport plans do not depend on the coordinates $k$ and $l$.
Furthermore, $\m \odot \n'$ is contained in all the sets of transport plans we defined.

In Subsection \ref{SecFTPvsTP} we'll show that each of the sets of fermionic transport plans defined in Definition \ref{TF(A,B)}
can be described via the corresponding set of transport plans in Definition \ref{T(A,B)}.
This will be the basis of our strategy to derive the metric properties of fermionic Wasserstein distances from those of the ``usual" quantum Wasserstein distances as developed in \cite{D22, D23, DSS}.

This will be done by simply adding the grading to the dynamics of the system in question. More formally, we can state it as follows:

\begin{definition}
	Consider a system $\va$ and a grading $\g_\ca$ of $\ca$ such that 
	$\m \circ \g_\ca = \m$. We define a system 
	$$\va^\g = (\ca, \a^\g, \m, k)$$
	from this by taking the latter's index set to be the disjoint union of $\U$ with one more element which we'll simply take as the symbol $\g$, while setting 
	$$\a^\g_\u := \a_\u \qquad\mbox{and}\qquad \a^\g_\g := \g_\ca.$$
\end{definition}

Using this notation, for $\bz_2$-graded systems we'll be able to relate $T^{\F}(\va,\vb)$ directly to $T(\va^\g,\vb^\g)$, and similarly for the other sets. The next section prepares the ground for this, by studying relevant aspects of the cyclic representations obtained from fermionic transport plans.


When the notation from this section and the previous is used in subsequent sections, all related assumptions will be in effect, 
for example when we make a statement about $T^{\F}(\va,\vb)$, all the assumptions required for it in Definition \ref{TF(A,B)} are also implied.

Having set up the basic objects and notions of interest to us, we proceed with the translation process between the usual and fermionic cases in the next two sections.

\section{Cyclic representations}
\label{SecCR}
The theory of Wasserstein distances is set in cyclic representations obtained from transport plans. 
This section presents these, along with related notation.

\subsection{Representations from fermionic transport Plans}\label{SubSecFTrPl}
Consider $\om \in T^{\textrm{F}}(\mu,\nu).$ 
We are going to define aspects of a cyclic representation for this fermionic transport plan only on an inner product space, since we haven't assumed $\om$ to be bounded.
Set 
$$
I_{\om}:=\{c \in \ca \ftp \cb^{\wr}:\om(c^*c)=0\}
\quad\text{and}\quad 
\ch_{\om}^{0}:=(\ca \ftp \cb^{\wr})/I_{\om}.
$$
In terms of the equivalence class notation 
$[c] = c+I_{\om}$,
define an inner product on $\ch_{\om}^0$ by
$$
\langle [c],[d] \rangle
:=
\om(c^* d)
$$ 
for all $c, d \in \ca \ftp \cb^{\wr}$.
Next define 
$$\pi_\om^0 : \ca \ftp \cb^{\wr} \to L(\ch_\om^0)
\quad\text{through}\quad
\pi_{\om}^{0}(c)[d]:=[cd]
$$
for all $c, d \in \ca \ftp \cb^{\wr}$,
with $L(\ch_\om^0)$ the algebra of (not necessarily bounded) linear operators
$\ch_\om^0 \to \ch_\om^0$.
Then 
\begin{equation}\label{omRepHom}
\pi_{\om}^{0}(c)\pi_{\om}^{0}(d)=\pi_{\om}^{0}(cd)
\quad\text{and}\quad
\pi_{\om}^{0}(c)\Om=[c],
\end{equation}
for all $c,d \in \ca \ftp \cb^\wr$,
in terms of 
$$
\Om := [1] = [1_\ca \ftp 1_{\cb^\wr}].
$$ 
Therefore
\begin{equation}\label{GNS}
\om(c) 
= \langle [1] , [c] \rangle
= \langle \Om , \pi_{\om}^{0}(c)\Om \rangle
\end{equation}
Let the completion of $\ch_{\om}^{0}$ be denoted by 
$
\ch_{\om} \, .
$ 
\begin{rem}\label{begrensdeBelofte}
Note that at this stage we don't need to know whether $\pi_{\om}^0(c)$ is bounded and extendable to $\ch_{\om}.$
The latter, and consequently the boundedness of $\om$, will follow from the translation process between the usual and fermionic tensor products in Section \ref{FTPvsTP}, 
as will be seen in Subsections \ref{SubsecTranslRep} to \ref{SecFTPvsTP}. 
Refer to Remark \ref{FGNSb} in particular.
Indeed, this is an illustration of the translation process.
Nevertheless,  in the proof of Proposition \ref{propFrepToUrep},
aspects of preservation of involution by $\pi_{\om}^0$ will need to be explored
directly in terms of it, rather than via the usual case.
\end{rem}

The map $\pi_{\om}^0$ induces cyclic representations associated to $\mu$ and $\nu^{\wr}$ by defining
\begin{equation*}
	\ch_\mu^0 := \pi_\omega^0(\ca \ftp 1_{\cb^{\wr}})\Omega, 
	\qquad 
	\pi_\mu^{0}(a) := \pi_\omega^0(a \ftp 1_{\cb^{\wr}})|_{\ch_\mu^0 }
\end{equation*}
and 
\begin{equation}\label{erfnu}
	\ch_\nu^0 
	:= 
	\pi_\omega^0(1_{\ca} \ftp \cb^\wr)\Omega, 
	\qquad 
	\pi_{\nu^\wr}^0(b^\wr)
	:= 
	\pi_\omega^0(1_{\ca}\ftp b^\wr)|_{\ch_\nu^0}
\end{equation}
for all $a\in \ca$ and $b^\wr\in \cb^\wr$.
In the GNS construction, $\pi_{\mu}^0(a)$ and $\pi_{\nu^{\wr}}^0(b^\wr)$ are bounded. By continuous extension, we obtain cyclic representations
\begin{equation}\label{eq:cyclicrep}
		(\ch_\m^\om , \pi_\mu^{\om}, \Omega) 
		\quad \text{and} \quad 
		(\ch_\n^\om , \pi_{\nu^\wr}^{\om}, \Omega) 
\end{equation} 
associated to $(\ca,\mu)$ and $(\cb^\wr,\nu^\wr)$ respectively,
$\ch_\m^\om$ and $\ch_\n^\om$ being the completions of $\ch_\m^0 $ and $\ch_\n^0 $.
There is a unitary equivalence $v_\n : \cg_\cb \rightarrow \ch_\nu^{\om}$ from $(\cg_\cb,\id_{\cb^\wr}, \L_\nu)$ to $(\ch_\nu^{\om},\pi_{\nu^\wr}^{\om},\Omega)$
determined  by
\begin{equation}\label{Fu}
v_\n b^\wr \L_\n = \p_{\n^\wr}^{\om}(b^{\wr}) \Om
\end{equation}
for all $b^{\wr}\in \cb^{\wr},$ leading to cyclic representations
$(\ch_\nu^{\om},\pi_\nu^{\om},\Om)$ of $(\cb,\nu)$ and $(\ch_\nu^{\om},\pi_{\nu'}^{\om},\Om)$ of $(\cb',\nu')$, given by
\begin{equation*}
		\pi_{\nu}^\om: \cb \rightarrow B(\ch_\nu^{\om}) : b \mapsto v_\n b v^*_\n
	\end{equation*}
and
\begin{equation}\label{p_n'}
	\pi_{\nu'}^\om: \cb' \rightarrow B(\ch_\nu^{\om}) : b' \mapsto v_\n b' v^*_\n.
\end{equation}

We will call $(\ch_\om^0 ,\pi_\om^0,\Om)$, along with the induced cyclic representations above, the $\om$-\textit{representation}. 
	
\subsection{Gradings and Klein isomorphisms}\label{SubsecGrad&Kl}
We have unitary representations 
$h_\mu: \ch_\m^{\om} \rightarrow \ch_\mu^{\om}$ 
and 
$h_\nu: \ch_\nu^{\om} \rightarrow \ch_\nu^{\om}$
of $\g_\ca$ and $\g_{\cb^\wr}$, in \eqref{eq:cyclicrep}, by
\begin{equation*}
	h_\mu \pi_\mu^{\om}(a) \Omega := \pi_\mu^{\om}(\gamma_\ca(a))\Omega
\quad
\text{and}
\quad
	h_\nu \pi_{\nu^\wr}^{\om}(b^\wr) \Omega := \pi_{\nu^\wr}^{\om}(\gamma_{\cb^\wr}(b^\wr))\Omega
\end{equation*}
for all $a \in \ca$ and $b^\wr \in \cb^\wr$.
Defining a unitary
$
h_\om : \ch_\om \rightarrow \ch_\om
$
through
\begin{equation}\label{h_om}
	h_\om\pi_\om^0(c) \Om := \pi_\om^0(\gamma_\ca \ftp \gamma_{\cb^\wr}(c))\Om
\end{equation}
for all $c \in \ca \ftp \cb^\wr$, we have
\begin{equation}\label{homegares}
	h_\om^2 = 1 ,
	\quad 
	h_\om|_{\ch_\mu^\om} = h_\m
	\quad \text{and} \quad 
	h_\om|_{\ch_\nu^{\om}} = h_\n \, .
\end{equation}
Since
	\begin{equation*}
		\pi_\mu^{\omega}(\gamma_\ca(a)) = 
		h_\mu \pi_\mu^{\omega}(a) h_\m \, ,
	\quad
	\pi_{\nu^\wr}^{\omega}(\gamma_{\cb^\wr}(b^\wr)) = 
	h_\nu \pi_{\nu^\wr}^{\omega}(b^\wr) h_\n
	\end{equation*}
	and 
	\begin{equation*}
	\pi_\om^{0}(\g_\ca \ftp \g_{\cb^\wr}(c)) = h_\om \pi_\om^0(c) h_\om
	\end{equation*}
for all $a \in \ca$, $b^\wr \in \cb^\wr$ and $c \in \ca \ftp \cb^\wr$,
we can define 
\begin{equation}
\eta_\mu \in \Aut(B(\ch_\mu^{\om})),
\quad 
\eta_{\nu^\wr} \in \Aut(B(\ch_\nu^{\om})), 
\quad 
\eta_\om : L(\ch_\om^0) \to L(\ch_\om^0)\notag
\end{equation}
by
\begin{equation*}\label{eq:etah}
	\eta_\mu(r):= h_\mu r h_\mu, 
	\quad 
	\eta_{\nu^\wr}(s):= h_\nu s h_\nu, 
	\quad 
	\eta_\omega(t):= h_\omega t h_\omega 
\end{equation*}
for all $r \in B(\ch_\mu)$,  $s \in B(\ch_\n)$ and $t \in L(\ch_\om^0)$.
Then
\begin{equation}\label{eq:etapigamma}
	\eta_\mu \circ \pi_\mu^{\om} = \pi_\mu^{\om} \circ  \gamma_\ca \, , \quad 
	\eta_{\nu^\wr} \circ \pi_{\nu^\wr}^{\om} = \pi_{\nu^\wr}^{\om} \circ  \gamma_{\cb^\wr}
\end{equation}
 and
\begin{equation}\label{etaCompi}
\eta_\omega \circ \pi_\omega^0 = 
\pi_\omega^0 \circ ( \g_\ca \ftp \g_{\cb^\wr}) .
\end{equation}

\begin{remark}
It is worth noting that, while $\m$ and $\n^\wr$ are faithful, $\om$ in general is not. Hence, $\p_\m^\omega$ and $\p_{\n^\wr}^\omega$ are faithful representations, though $\p_\om^0$ need not be. 
So $\eta_\m$ and $\eta_{\n^\wr}$ are essentially copies of $\gamma_\ca$ and $\gamma_{\cb^\wr}$, just in the $\omega$-representation, however, the same need not be true of $\eta_\om$ versus $\g_\ca \ftp \g_{\cb^\wr}$. 
But $\eta_\om$ is a natural part of our theory. 
\end{remark}
Combining the identity 
$\pi_{\nu^\wr}^\om(\cb^\wr) = v_\n \cb^\wr v^*_\n$ 
with the second item in \eqref{eq:etapigamma}, we obtain 
		\begin{equation}\label{h_nuVsg_nu}
			h_\nu v_\n = v_\n g_\cb \, .
		\end{equation}
Moreover, defining $\eta_\nu(s):= h_\nu s h_\nu$ for all $s\in \pi_\nu^\om(\cb),$ it follows that
		\begin{equation*}
			\eta_\nu \circ \pi_\nu^{\om}= \pi_\nu^{\om} \circ \gamma_\cb \, .
		\end{equation*}
Then $\eta_\nu$ is a $\mathbb{Z}_2$-grading of $\pi_\nu^{\om}(\cb)$, namely $\g_\cb$ in the $\om$-representation.


Following Section \ref{duality}, but now in the $\omega$-representation, we set
\begin{equation}\label{kleinexpr}
	q^{\pm}_\nu := \frac{1}{2}(1 \pm h_\nu) 
	\quad \text{and} \quad 
	h_\nu^{1/2}:= q^+_\nu -iq^-_\nu.
\end{equation}
The \emph{Klein automorphism} corresponding to the grading $\eta_\nu$ of $\pi_\nu^{\om}(\cb)$  is defined as 
\begin{equation*}
	\eta^{1/2}_\nu : B(\ch_\nu^{\om}) \rightarrow B(\ch^{\om}_\nu)
	                          : s \mapsto h_\nu^{1/2} s {h_{\nu}^{-1/2}},
\end{equation*}
with its inverse  denoted by $\eta^{-1/2}_\n$.
From \eqref{kleinexpr} and \eqref{h_nuVsg_nu} it follows that
	\begin{equation}\label{KleinRepIntertw}
		h^{1/2}_\nu v_\n = v_\n g^{1/2}_\cb,
	\end{equation}
	hence
	\begin{equation}
    \label{KleinIntertw}
		 \eta^{1/2}_\n \circ \pi_{\n'}^\om
		 = 
		 \pi_{\n^\wr}^\om \circ \g^{1/2}_\cb .
	\end{equation}

Setting
\begin{equation}\label{GNS-Klein}
q^{\pm}_\om := \frac{1}{2}(1 \pm h_\om) 
\quad \text{and} \quad 
h^{1/2}_\om := q^+_\om -iq^-_\om ,
\end{equation}
allows us to define the Klein automorphism of $L(\ch_\om^0)$ corresponding to $\eta_\om$, as
\begin{equation*}
	\eta^{1/2}_\om : 
	L(\ch_\om^0) \rightarrow L(\ch_\om^0) :  
	t \mapsto  h^{1/2}_\om t h^{-1/2}_\om,
\end{equation*}
with its inverse  denoted by $\eta^{-1/2}_\omega$.

The identity \eqref{KleinIntertw} along with $\eta^{1/2}_\om$ will be of use to translate representations from Fermi to usual tensor products in Subsection \ref{SubsecTranslRep}
(see the proofs of Lemma \ref{Vertaalde_u} and Proposition \ref{propFrepToUrep} respectively).

\subsection{Representations from usual transport plans}\label{SubsecTrivgrad}
We also need the case of the usual tensor product in Section \ref{FTPvsTP}, since there structures on the Fermi tensor product are translated into corresponding structures on the usual tensor product. Therefore, now consider 
\[
\om \in T(\m,\n).
\]
In this special case of Subsection \ref{SubSecFTrPl} (for a trivial grading), $\pi_{\om}^0(c)$ is bounded for all $c \in \ca \odot \cb^{\prime}$ hence it is continuously extendable to a cyclic representation $\pi_{\om}$ on $\ch_\om$ in the usual bounded operator sense, since $\om$ can be extended to a 
state on the $C^*$-algebra $\ca \otimes_{\text{max}} \cb'$. 


The same notation as in Subsection \ref{SubSecFTrPl} will be used for the $\om$-representation, except that to avoid any confusion later, the unitary equivalence \eqref{Fu} will rather be written as
\begin{equation}
	\label{GnuHnu}
	u_\n :\cg_\cb\rightarrow \ch_\n^\om \, ,
\end{equation}
being given by
\begin{equation}
\label{v_n}
u_\n b' \L_\n:=\pi_{\nu'}^\om (b')\Omega
\end{equation}
for all $b^{\prime}\in \cb^{\prime}$.
Then
\begin{equation*}
\label{pi_nu'}
\pi_{\n'}^{\om}(b')=u_\n b' u_\n^*
\end{equation*}
for all $b'\in \cb'$.
By setting
\begin{equation*}\label{pi_nu}
\pi_\n^\om (b) := u_\n b u_\n^*
\end{equation*}
for all $b\in \cb$, we also have a cyclic representation $(\ch_{\n}^{\om},\pi_{\nu}^{\om},\Om)$ of $(\cb,\n)$. 

These last two representations can also be connected in the following way:
Denote the modular conjugation for $\cb$ associated to $\Lambda_{\nu}$ by
$J_{\cb}$ and the modular conjugation for $\pi_{\n}^\om(\cb)$ associated to $\Omega$ by $J_{\cb}^\om$.
Then one finds that
\[
J_{\cb}^\om u_\n = u_\n J_{\cb}
\]
from which it follows that
\[
\pi_{\nu}^{\omega}=j_{\cb}^{\omega}\circ\pi_{\nu^{\prime}}^{\omega}\circ
j_{\cb}%
\]
where $j_{\cb}^{\omega}:=J_{\cb}^{\omega}(\cdot)^{\ast}J_{\cb}^{\omega}$ and
$j_{\cb}:=J_{\cb}(\cdot)^{\ast}J_{\cb}$ on $B(\ch_{\nu}^{\omega})$ and
$B(\cg_{\cb})$ respectively. 


Refining this setup, assume that
\[
\om \in T^\g(\m,\n)
\]
for gradings $\g_\ca$ and $\g_\cb$. We can define $h_\om$, $h_\om^{1/2}$, $h_\n$, etc., analogously to Subsection \ref{SubsecGrad&Kl}, but in terms of the usual tensor product. One then again finds for example that
\begin{equation}\label{KleinUsRepIntertw}
	h^{1/2}_\n u_\n = u_\n g^{1/2}_\cb,
\end{equation}
as in \eqref{KleinRepIntertw}.

\section{Fermionic versus usual transport plans}
\label{FTPvsTP}
In this section we establish a one-to-one correspondence between fermionic and usual transport plans.
This is intertwined with the representation of fermionic transport
plans in terms of even channels.  
These two related ideas will form core components in our approach to fermionic
Wasserstein distances in the next section, translating the fermionic to the usual case and back. 
They in turn build on the representations from the previous section.

\subsection{Supercommutation}\label{SubsecSupCom}
The term ``supercommutation'' 
refers to a substitute for the commutation between $\ca\odot 1_{\cb'}$ and $1_\ca \odot \cb'$ in the case of $\ca\ftp 1_{\cb^\wr}$ and $1_\ca \ftp \cb^\wr$.
For our purposes it is needed in an $\om$-representation from the fermionic transport plan being considered.
On the one hand, the results of Subsection \ref{SubsecTranslRep} uses this to obtain a representation of the corresponding usual tensor product.
On the other hand we subsequently also essentially reverse supercommutation, 
with a representation for the Fermi tensor product obtained from that of the usual tensor product. 
These are complementary results of central importance to our approach to fermionic Wasserstein distances, since they lead to the one-to-one correspondence between 
$T^{\F}(\m , \n)$ and $T^\g(\m ,\n )$, 
as well as a channel representation of fermionic transport plans. 

Our basic supercommutation result is the next simple lemma. It builds directly on Subsections \ref{SubSecFTrPl} and \ref{SubsecGrad&Kl}'s notation and arguments. 

\begin{lem}
\label{eq39}
Given any $\om \in T^\text{\emph{F}}(\mu,\nu)$, it follows that for all
$a \in \ca$ and $b^{\wr} \in \cb^{\wr}$ we have
\begin{equation}\label{eq:com}
	\pi_\omega^0(a \ftp 1) \eta_{\omega}^{\pm1/2}\circ\pi_\omega^0(1 \ftp b^\wr) = 
	\eta^{\pm1/2}_\omega \circ \pi_\omega^0(1  \ftp b^\wr) \pi_\omega^0(a \ftp 1).
\end{equation}
\end{lem}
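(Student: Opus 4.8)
The plan is to reduce to homogeneous elements and then track signs carefully. Since both sides of \eqref{eq:com} are linear in $a$ and in $b^\wr$ (recall $\pi_\omega^0$ and $\eta_\omega^{\pm 1/2}$ are linear), and every element splits into even and odd parts via $a = a_+ + a_-$ and $b^\wr = b^\wr_+ + b^\wr_-$, it suffices to verify the identity when $a$ and $b^\wr$ are each homogeneous, i.e. $\gamma_\ca(a) = \pm a$ and $\gamma_{\cb^\wr}(b^\wr) = \pm b^\wr$. Throughout, the only inputs needed are that $\pi_\omega^0$ is a homomorphism and that $h_\omega$ implements the grading; the marginal and balance conditions defining $T^{\F}(\mu,\nu)$ play no role here.

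First I would record the supercommutation in the $\omega$-representation. Because $\pi_\omega^0$ is a homomorphism by \eqref{omRepHom}, and because in the Fermi tensor product one has $(1 \ftp b^\wr)(a \ftp 1) = (-1)^{|a||b^\wr|}(a \ftp 1)(1 \ftp b^\wr)$ for homogeneous $a$ and $b^\wr$, it follows that
\[
\pi_\omega^0(1 \ftp b^\wr)\,\pi_\omega^0(a \ftp 1) = (-1)^{|a||b^\wr|}\,\pi_\omega^0(a \ftp 1)\,\pi_\omega^0(1 \ftp b^\wr).
\]
Second, I would read off the $\eta_\omega$-parities of the two factors from \eqref{etaCompi}: since $\gamma_\ca(1)=1$ and $\gamma_{\cb^\wr}(1)=1$, the operators $\pi_\omega^0(a \ftp 1)$ and $\pi_\omega^0(1 \ftp b^\wr)$ inherit the parities of $a$ and $b^\wr$ respectively. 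In terms of $h_\omega$ this says $h_\omega\, \pi_\omega^0(a \ftp 1) = (\pm 1)\,\pi_\omega^0(a \ftp 1)\, h_\omega$, with sign $+$ for $a$ even and $-$ for $a$ odd.

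Third, I would compute $\eta_\omega^{\pm 1/2}$ on a homogeneous operator. Using the spectral projections $q_\omega^\pm = \tfrac12(1 \pm h_\omega)$ and the expressions $h_\omega^{\pm 1/2} = q_\omega^+ \mp i q_\omega^-$ from \eqref{GNS-Klein}, one checks that $\eta_\omega^{\pm 1/2}$ fixes every $\eta_\omega$-even operator, while on an $\eta_\omega$-odd operator $t$ it acts by $\eta_\omega^{\pm 1/2}(t) = \mp i\, t\, h_\omega$; the key elementary step is the identity $h_\omega^{1/2} t = -i\, t\, h_\omega^{-1/2}$ for odd $t$, obtained from the intertwining relations $t\,q_\omega^\pm = q_\omega^\mp\, t$. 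Applied to $t = \pi_\omega^0(1 \ftp b^\wr)$ this yields $\eta_\omega^{\pm 1/2}(\pi_\omega^0(1 \ftp b^\wr)) = \pi_\omega^0(1 \ftp b^\wr)$ when $b^\wr$ is even, and $= \mp i\, \pi_\omega^0(1 \ftp b^\wr)\, h_\omega$ when $b^\wr$ is odd.

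Finally I would assemble the pieces. When $b^\wr$ is even, both the supercommutation sign and the Klein twist are trivial, so \eqref{eq:com} reduces to ordinary commutation of $\pi_\omega^0(a \ftp 1)$ with $\pi_\omega^0(1 \ftp b^\wr)$, which is the supercommutation relation with sign $+1$. When $b^\wr$ is odd, substituting $\eta_\omega^{\pm 1/2}(\pi_\omega^0(1 \ftp b^\wr)) = \mp i\, \pi_\omega^0(1 \ftp b^\wr)\, h_\omega$ and then moving $h_\omega$ past $\pi_\omega^0(a \ftp 1)$ introduces the sign coming from the parity of $a$, which cancels exactly against the supercommutation sign $(-1)^{|a||b^\wr|}$; the two sides then agree. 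I expect the main obstacle to be precisely this sign bookkeeping — establishing $\eta_\omega^{\pm 1/2}(t) = \mp i\, t\, h_\omega$ for odd $t$, and verifying that the supercommutation sign and the $h_\omega$-commutation sign annihilate each other in the odd–odd case.
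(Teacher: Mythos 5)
Your proof is correct and follows essentially the same route as the paper: the paper likewise decomposes into even and odd parts, derives the formula $\eta_\om^{\pm 1/2}(t)=t_+\pm ih_\om t_-$ (equivalent to your case analysis on homogeneous operators), and finishes by combining this with the homomorphism property of $\pi_\om^0$, which encodes the supercommutation signs you track. Your write-up merely makes explicit the final sign cancellation that the paper leaves to the reader.
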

\begin{proof}
First we recover a formula for Klein isomorphisms which \cite[Section 5]{CDF} used as the definition (and called a twist automorphism), specifically for $\eta^{1/2}_\omega$ from Subsection \ref{SubsecGrad&Kl} in the $\om$-representation, 
allowing for unbounded operators
(not considered in \cite{CDF}).
In this context, Subsection \ref{SubsecGrad&Kl} defined $\eta_{\om}$ and $\eta_{\om}^{1/2}$ as maps from $L(\ch_\omega^0)$ to itself.
We also define
\begin{equation*}
	t_{\pm} :=  \frac{1}{2} \left( t \pm \eta_\om(t) \right) 
\end{equation*}
for all $t \in L(\ch_\om^0)$.
Simple calculations show that
$h_\om t_\pm = \pm t_\pm h_\om$, thus
	\begin{equation*}
		t_+ q^{\pm}_\omega = q^{\pm}_\omega t_+ 
		\quad 
		\text{and} 
		\quad t_{-} q^{\pm}_\omega = q^{\mp}_\omega t_{-}
	\end{equation*}
	in terms of \eqref{GNS-Klein}, and consequently
	\begin{equation}\label{Kleinpm}
		\eta^{\pm1/2}_\om (t) 
		=
		t_+ \pm ih_\om t_{-} \, . 
	\end{equation}
Writing 
$
v:=\pi_{\om}^0(a \ftp 1_{\cb^{\wr}})
$
and
$
w:=\pi_{\om}^0(1_{\ca} \ftp b^{\wr}),
$
it is seen from \eqref{etaCompi} that 
$$v_+=\pi_{\om}^0(a_+ \ftp 1_{\cb^{\wr}}),\,\,\,\,\,\,v_-=\pi_{\om}^0(a_{-} \ftp 1_{\cb^{\wr}})$$
and
$$w_+=\pi_{\om}^0(1_{\ca} \ftp (b^{\wr})_+),\,\,\,\,\,\,w_-=\pi_{\om}^0(1_{\ca} \ftp (b^{\wr})_-).$$
Combining this with \eqref{omRepHom} and \eqref{Kleinpm}, the result is obtained.
\end{proof}

\subsection{Translating representations}\label{SubsecTranslRep}
This subsection sets the stage for Section \ref{SecFW2}, by showing how to translate between representations associated to fermionic and usual transport plans respectively. 

\medskip
\noindent\textbf{From Fermi to usual tensor products.} 
Consider any 
$$\om \in T^\text{F}(\mu,\nu).$$
We aim to define a representation of $\ca \odot \cb'$.
Begin by defining 
$$
\pi_\om^{\Us} : \ca\odot\cb' \to L(\ch_\omega^0)
$$
in the $\om$-representation from Subsection \ref{SubSecFTrPl} through
\begin{equation}\label{fermRepTrans}
	\pi_\om^{\Us} (a \odot b')
	=
	\pi_\om^0(a \ftp 1_{\cb^\wr}) 
	\eta_\om^{-1/2} \circ \pi_\om^0(1_\ca \ftp \g^{1/2}_{\cb}(b')) 
\end{equation}
for all $a \in \ca,\, b' \in \cb'$.
The superscript $\Us$ indicates ``usual," as opposed to fermionic.
Then we have the following result:
\begin{proposition}\label{propFrepToUrep}
	The operator $\pi_\om^{\Us} (c)$ defined above is bounded for every 
	$c \in \ca\odot\cb'$, hence extends continuously and uniquely to an element 
	of $B(\ch_\om)$. 
	This provides us with a cyclic representation 
	$(\ch_\om , \pi_\om^{\Us} , \Om)$ 
	associated to the state $\om_{\odot}$ on $\ca \odot \cb'$ defined by 
	$$
	\om_{\odot}(c)
	=
	\langle \Om, \pi_\om^{\Us} (c) \Om \rangle,
	$$
	for all $c \in \ca\odot\cb'$.
	I.e., $\p_{\om_\odot} = \p_\om^{\Us}$, 
	in terms of Subsection \ref{SubsecTrivgrad}. 
	In addition, 
	$$\om_\odot \in T^\g(\m , \n).$$
\end{proposition}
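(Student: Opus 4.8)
The plan is to verify directly that $\om_\odot$ satisfies the three defining requirements of a graded transport plan from $\m$ to $\n$ in Definition \ref{oordPlan}: the two marginal conditions and the grading (balance) condition. Throughout I will use the parts of the proposition already established, namely that $\pi_\om^{\Us}$ is a unital bounded representation of $\ca\odot\cb'$ with cyclic vector $\Om$ and $\om_\odot(c)=\langle\Om,\pi_\om^{\Us}(c)\Om\rangle$. I also use that $\Om$ is fixed by the grading unitary, $h_\om\Om=\Om$, which is immediate from \eqref{h_om}, and hence $h_\om^{1/2}\Om=\Om$ by \eqref{GNS-Klein}.

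For the first marginal, since $\g_\cb^{1/2}(1_{\cb'})=1_{\cb^\wr}$ the second factor in \eqref{fermRepTrans} reduces to $\eta_\om^{-1/2}(\id)=\id$, so $\pi_\om^{\Us}(a\odot 1_{\cb'})=\pi_\om^0(a\ftp 1_{\cb^\wr})$ and therefore $\om_\odot(a\odot 1_{\cb'})=\om(a\ftp 1_{\cb^\wr})=\m(a)$ by \eqref{GNS} and \eqref{marginals}. For the second marginal, $\pi_\om^{\Us}(1_\ca\odot b')=\eta_\om^{-1/2}(\pi_\om^0(1_\ca\ftp\g_\cb^{1/2}(b')))$; because $h_\om^{1/2}\Om=\Om$ and $h_\om^{1/2}$ is unitary with adjoint $h_\om^{-1/2}$, the conjugation by $h_\om^{\pm1/2}$ is invisible to the vector state at $\Om$, giving $\om_\odot(1_\ca\odot b')=\om(1_\ca\ftp\g_\cb^{1/2}(b'))=\n^\wr(\g_\cb^{1/2}(b'))$. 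Finally \eqref{verwToest} yields $\n^\wr\circ\g_\cb^{1/2}=\n'$, so $\om_\odot(1_\ca\odot b')=\n'(b')$.

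For the grading condition I first reduce it to evenness of $\om_\odot$: writing $\Gamma:=\g_\ca\odot\g_{\cb'}$ for the product grading on $\ca\odot\cb'$ and using $\g_{\cb'}^2=\id$, the required identity $\om_\odot(\g_\ca(a)\odot b')=\om_\odot(a\odot\g_{\cb'}(b'))$ for all $a,b'$ follows once I show $\om_\odot\circ\Gamma=\om_\odot$. The heart of the matter is the intertwining $\pi_\om^{\Us}\circ\Gamma=\eta_\om\circ\pi_\om^{\Us}$; granting it, $\om_\odot(\Gamma(c))=\langle\Om,h_\om\pi_\om^{\Us}(c)h_\om\Om\rangle=\langle\Om,\pi_\om^{\Us}(c)\Om\rangle=\om_\odot(c)$ since $h_\om\Om=\Om$ and $h_\om^*=h_\om$. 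To establish the intertwining on an elementary tensor I will combine three ingredients: the elementary identity $\g_\cb^{1/2}\circ\g_{\cb'}=\g_{\cb^\wr}\circ\g_\cb^{1/2}$, saying the Klein isomorphism carries the grading of $\cb'$ to that of $\cb^\wr$ and read off directly from the definitions exactly as $\g_\ca'=\g_{\ca'}$; relation \eqref{etaCompi} applied to each leg $\g_\ca(a)\ftp 1_{\cb^\wr}$ and $1_\ca\ftp\g_{\cb^\wr}(\g_\cb^{1/2}(b'))$; and the facts that $\eta_\om$ is an automorphism of $L(\ch_\om^0)$ commuting with $\eta_\om^{-1/2}$, both being conjugations by functions of $h_\om$. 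These turn $\pi_\om^{\Us}(\g_\ca(a)\odot\g_{\cb'}(b'))$ into $\eta_\om(\pi_\om^0(a\ftp 1_{\cb^\wr}))\,\eta_\om(\eta_\om^{-1/2}(\pi_\om^0(1_\ca\ftp\g_\cb^{1/2}(b'))))=\eta_\om(\pi_\om^{\Us}(a\odot b'))$, and linearity extends this to all of $\ca\odot\cb'$.

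The main obstacle is exactly this intertwining identity: the product grading $\Gamma$ living on the two tensor legs must be reconciled with the single operator-level grading $\eta_\om$ on $L(\ch_\om^0)$, and this reconciliation depends on the Klein isomorphism conjugating $\g_{\cb'}$ into $\g_{\cb^\wr}$ and on $\eta_\om$ commuting with the Klein automorphism $\eta_\om^{-1/2}$ built into the definition \eqref{fermRepTrans} of $\pi_\om^{\Us}$. The marginal conditions are comparatively routine, the only point being that conjugations by $h_\om^{\pm1/2}$ leave the vacuum state unchanged. The fermionic balance condition \eqref{oordPl} does not appear explicitly in this argument; it has already been used upstream to guarantee that the grading unitary $h_\om$ is well defined and that \eqref{etaCompi} holds, which is all that is invoked here.
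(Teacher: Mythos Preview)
Your argument for $\om_\odot \in T^\g(\m,\n)$ is correct and follows essentially the same route as the paper: both marginals are verified exactly as you do, and for the grading condition both you and the paper rely on \eqref{etaCompi} together with the compatibility $\g_\cb^{1/2}\circ\g_{\cb'}=\g_{\cb^\wr}\circ\g_\cb^{1/2}$ and the invariance $h_\om\Om=\Om$. The only cosmetic difference is that you prove the full operator intertwining $\pi_\om^{\Us}\circ\Gamma=\eta_\om\circ\pi_\om^{\Us}$ and deduce evenness, whereas the paper computes $\pi_\om^{\Us}(\g_\ca(a)\odot b')\Om=h_\om\,\pi_\om^{\Us}(a\odot\g'_\cb(b'))\Om$ directly on the cyclic vector; the underlying manipulations are identical.

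However, you explicitly assume as ``already established'' that $\pi_\om^{\Us}$ is a bounded unital $*$-representation with cyclic vector $\Om$. These are not established anywhere prior to this proposition; they \emph{are} the proposition, and the paper devotes most of its proof to them. Specifically: multiplicativity of $\pi_\om^{\Us}$ requires the supercommutation Lemma~\ref{eq39}; preservation of involution is argued by hand using the restricted adjoint $\pi_\om^0(a\ftp 1)^*|_{\ch_\om^0}=\pi_\om^0(a^*\ftp 1)$ together with Lemma~\ref{eq39}; cyclicity comes from $\pi_{\n^\wr}^\om(\cb^\wr)\Om=\pi_{\n'}^\om(\cb')\Om$; and boundedness is obtained only after one knows $\om_\odot$ is a state, by extending to $\ca\otimes_{\max}\cb'$ and invoking the standard GNS argument. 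None of this is routine given only \eqref{fermRepTrans}, since $\pi_\om^0$ is a priori unbounded and the two factors in \eqref{fermRepTrans} do not commute. If your intent was to prove only the ``In addition'' clause, say so clearly; otherwise the proposal omits the substantive content of the proposition.
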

\begin{proof}
It is easily seen that 
$\pi_\om^{\Us}(a \odot b')\ch_{\om}^0 \subset \ch_{\om}^0$ for all $a \in \ca, b' \in \cb'$. 
It follows 
by \eqref{omRepHom} and Lemma \ref{eq39} that
$$
\pi_\om^{\Us}(cd)=\pi_\om^{\Us}(c)\pi_\om^{\Us}(d)
$$
for all $c, d \in \ca \odot \cb'$.
Next, we need to find an appropriate version of preservation of involution. 
It is straightforward to show from the basic definitions, 
that
$
\pi_{\om}^0(a \ftp 1_{\cb^{\wr}})^*[c \ftp d^{\wr}]
=
\pi_{\om}^0(a^* \ftp 1_{\cb^{\wr}})[c \ftp d^{\wr}]
$
for all $a,c \in \ca, d^\wr \in \cb^\wr.$
In particular it follows that
$\pi_{\om}^0(a \ftp 1_{\cb^{\wr}})^*\ch_{\om}^0 
\subset 
\ch_{\om}^0$.
Similarly 
$\pi_{\om}^0(1_{\ca}\ftp b^{\wr})^*\ch_{\om}^0 
\subset 
\ch_{\om}^0$.
Hence
$$\pi_{\om}^0(a \ftp 1_{\cb^{\wr}})^*|_{\ch_{\om}^0}
=
\pi_{\om}^0(a^* \ftp 1_{\cb^{\wr}})
\text{ and }
\pi_{\om}^0(1_{\ca} \ftp b^{\wr})^*|_{\ch_{\om}^0}
=
\pi_{\om}^0(1_{\ca} \ftp (b^{\wr})^*).$$
From the above, along with Lemma \ref{eq39} and the general operator relation $T^*S^* \subset (ST)^*,$ we have 
\begin{equation*}
	\begin{aligned}
		\pi_\om^{\Us}((a \odot b')^*)
		&=\pi_{\om}^0(a^* \ftp 1_{\cb^{\wr}})
		\eta_{\om}^{-1/2} \circ \pi_{\om}^0(1_{\ca} \ftp {\g_{\cb}}^{1/2}((b')^*))\\
		&=\pi_\om^{\Us}(a \odot b')^*|_{\ch_{\om}^0} \, ,
	\end{aligned}
\end{equation*}
to conclude that $$\pi_\om^{\Us}(c^*)=\pi_\om^{\Us}(c)^*|_{\ch_{\om}^0}$$
for all $c \in \ca \odot \cb'$.
Thus $\om_\odot$ is indeed a state on $\ca \odot \cb'$.

Next we verify that $\Om$ is cyclic for $\pi_\om^{\Us}$. 
For any $a \in \ca$ and $b' \in \cb'$, from the definition of $\pi_\om^{\Us}$ and \eqref{KleinIntertw}
we have 
$$
	\pi_\om^{\Us}(a \odot b')\Om=\pi_{\om}^0(a \ftp 1_{\cb^{\wr}})\pi_{\nu'}^{\om}(b')\Om.
$$
In addition, 
\begin{align}
	\pi _{\nu ^{\wr }}^{\om}(\cb^{\wr })\Omega 
	&= v_\n \cb^{\wr}\Lambda _{\nu }
	= v_\n \g_{\mathcal{B}}^{1/2}(\cb')\Lambda _{\nu }
	= v_\n g_{\cb}^{1/2}\cb'\Lambda_{\nu }
	= v_\n \cb'\Lambda _{\nu }
	\notag\\
	&=\pi _{\nu'}^{\om}(\cb')\Omega,
	\notag
\end{align}
with the second to last equality following from the definition of 
$g_\cb^{\pm 1/2}$ as 
$\frac{1}{2}[( 1+g_\cb )\mp i( 1-g_\cb )]$ 
and the fact that 
$g_{\cb}\cb'\Lambda _{\nu }=\cb'\Lambda _{\nu }$.
Hence
\begin{equation*}
	\begin{aligned}
		\pi_\om^{\Us}(a \odot \cb')\Om
		&=\pi_{\om}^0(a \ftp 1_{\cb^{\wr}})\pi_{\nu^{\wr}}^{\om}(\cb^{\wr})\Om 
		=\pi_{\om}^0(a \ftp 1_{\cb^{\wr}})\pi_{\om}^0(1_{\ca}\ftp \cb^{\wr})\Om\\
		&=\pi_{\om}^0(a \ftp \cb^{\wr})\Om.
	\end{aligned}
\end{equation*}
As a result 
$
\pi_\om^{\Us}(\ca \odot \cb')\Om=\ch_{\om}^0
$ 
is indeed dense in $\ch_{\om}$. 


As the state $\om_{\odot}$ is defined on the usual tensor product, it is bounded in the maximal $C^*$-norm. 
By the usual GNS arguments, it follows that $\pi_\om^{\Us}(c)$ is bounded and has a unique continuous extension $\pi_\om^{\Us}(c) \in B(\ch_{\om})$,
and that
$\pi_\om^{\Us} : \ca \odot \cb' \to B(\ch_{\om})$
is a $*$-homomorphism. 
Hence,
$(\ch_{\om}, \pi_\om^{\Us} , \Om)$ 
is a cyclic representation of $(\ca \odot \cb',\om_{\odot})$.


We proceed to show that 
$\om_\odot \in T^\g(\m , \n).$ 
First,
$
\p_\om^{\Us} (a \odot 1_{\cb'}) |_{\ch_\om^0}
=
\p_\om^0 (a \ftp 1_{\cb^\wr})
$,
hence by \eqref{GNS},
\[
\om_\odot (a \odot 1_{\cb'})
=
\om(a \ftp 1_{\cb^\wr})
=
\m(a).
\]
On the other hand, by \eqref{KleinIntertw},
$
\p_\om^{\Us} (1_\ca \odot b') \Om
=
\pi_{\n'}^\om(b') \Om,
$
thus 
\[
\om_\odot (1_\ca \odot b') 
= 
\left\langle \Om , \pi_{\n'}^\om(b') \Om \right\rangle
=
\n'(b').
\]
Lastly, 
from \eqref{fermRepTrans} and \eqref{etaCompi} 
(and Example \ref{(anti)iso}),
\begin{equation*}
	\begin{aligned}
		\p_\om^{\Us} (\g_\ca(a) \odot b') \Om
		&=
		h_\om \pi_\om^0(a \ftp 1_{\cb^{\wr}}) h_\om
		h_\om^{-1/2} \pi_\om^0(1_\ca \ftp \g^{1/2}_{\cb}(b')) \Om \\
		&=
		h_\om \pi_\om^0(a \ftp 1_{\cb^{\wr}}) 
		h_\om^{-1/2} \pi_\om^0(1_\ca \ftp \g^{1/2}_{\cb}(\g_{\cb'}(b')))  \Om \\ 
		&=
		h_\om \p_\om^{\Us} (a \odot \g'_\cb(b')) \Om ,
	\end{aligned}
\end{equation*}
hence
$
\om_\odot (\g_\ca(a) \odot b') 
= 
\left\langle h_\om\Om , \p_\om^{\Us} ((a \odot \g'_\cb(b')\Om) \right\rangle
=
\om_\odot (a \odot \g_\cb'(b'))
$,
as required.
\end{proof}


In the proof of Theorem \ref{Thm1-to-1}, which is what this section is working toward, we need to express certain objects obtained from the cyclic representation associated to 
$\om_\odot$ 
in terms of the corresponding objects in the $\om$-representation of Subsection \ref{SubSecFTrPl}.
The lemma below,
in terms of the unitary equivalence (\ref{GnuHnu}), 
now for $\om_\odot$,
$$u_\n : \cg_\cb \to \ch_\n^{\om_\odot}\, ,$$ 
will play a role in that.
\begin{lem}	\label{Vertaalde_u}
	Given the above, it follows in terms of (\ref{Fu}) that
	\begin{equation*}
		u_\n = v_\n \, .
	\end{equation*}
\end{lem}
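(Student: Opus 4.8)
The plan is to prove equality of the two unitaries by checking that they agree on the dense subspace $\cb' \L_\n$ of $\cg_\cb$. First I would observe that the two maps actually have the same codomain. By the definition of $\ch_\nu^0$ and $\pi_{\nu^\wr}^\om$ one has $\ch_\nu^0 = \pi_{\nu^\wr}^\om(\cb^\wr)\Om$, and the chain of equalities $\pi_{\nu^\wr}^\om(\cb^\wr)\Om = \pi_{\nu'}^\om(\cb')\Om$ already established inside the proof of Proposition \ref{propFrepToUrep} shows this equals $\pi_\om^{\Us}(1_\ca \odot \cb')\Om$, which is precisely the (pre-completion) space defining $\ch_\n^{\om_\odot}$ for the representation $\pi_{\om_\odot} = \pi_\om^{\Us}$. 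Passing to completions gives $\ch_\nu^\om = \ch_\n^{\om_\odot}$, so that $u_\n$ and $v_\n$ are genuinely comparable maps $\cg_\cb \to \ch_\nu^\om$.

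Next I would compute both sides on an arbitrary $b' \in \cb'$. Applied to $\om_\odot$, the defining relation \eqref{v_n} reads $u_\n b' \L_\n = \pi_{\nu'}^{\om_\odot}(b')\Om = \pi_\om^{\Us}(1_\ca \odot b')\Om$, using $\pi_{\om_\odot} = \pi_\om^{\Us}$ from Proposition \ref{propFrepToUrep}. The crucial input is the identity $\pi_\om^{\Us}(1_\ca \odot b')\Om = \pi_{\nu'}^\om(b')\Om$, which is exactly the step already carried out (via \eqref{KleinIntertw}) in verifying the second marginal in that proposition; I would simply invoke it. For the right-hand side, \eqref{p_n'} gives $\pi_{\nu'}^\om(b') = v_\n b' v_\n^*$, and setting $b^\wr = 1_{\cb^\wr}$ in \eqref{Fu} yields $v_\n \L_\n = \Om$, hence $v_\n^* \Om = \L_\n$. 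Combining, $\pi_{\nu'}^\om(b')\Om = v_\n b' v_\n^* \Om = v_\n b' \L_\n$, so that $u_\n b' \L_\n = v_\n b' \L_\n$ for every $b' \in \cb'$.

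Finally I would conclude by density: since $\cb$ and $\cb'$ are in standard form with $\L_\n$ cyclic and separating, $\cb' \L_\n$ is dense in $\cg_\cb$, and as $u_\n$ and $v_\n$ are both bounded (indeed unitary), agreement on this dense set forces $u_\n = v_\n$. I do not expect a serious obstacle here: the content is essentially bookkeeping, and the one potentially delicate point—that the two representations live on the same Hilbert space with the same $\Om$—is handled by Proposition \ref{propFrepToUrep}, which identifies $\pi_{\om_\odot}$ with $\pi_\om^{\Us}$ on $\ch_\om$. The only care needed is to keep straight which representation the generic $\om$-representation symbols refer to (the $\om_\odot$ one for $u_\n$, the fermionic one for $v_\n$ and $\pi_{\nu'}^\om$), and to reuse rather than reprove the marginal identity from the previous proposition.
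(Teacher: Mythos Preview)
Your proposal is correct and follows essentially the same route as the paper: both arguments verify $u_\n b'\L_\n = \pi_\om^{\Us}(1_\ca \odot b')\Om = \pi_{\nu'}^\om(b')\Om = v_\n b'\L_\n$ for all $b'\in\cb'$ and conclude by density. The only cosmetic difference is that you invoke the identity $\pi_\om^{\Us}(1_\ca \odot b')\Om = \pi_{\nu'}^\om(b')\Om$ by cross-reference to the proof of Proposition~\ref{propFrepToUrep}, whereas the paper re-expands it through the intermediate expression $\eta_\om^{-1/2}\circ\pi_\om^0(1_\ca \ftp \gamma_\cb^{1/2}(b'))\Om$; your additional remarks on matching codomains and the density step are implicit in the paper's one-line proof.
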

\begin{proof}
By \eqref{v_n}, \eqref{erfnu}, \eqref{GNS-Klein}, \eqref{homegares}, \eqref{kleinexpr}, \eqref{KleinIntertw} and \eqref{p_n'},
$$
u_\n  b'\L_\n 
= 
\p_{\om_\odot}(1 \odot b') \Om 
= 
\eta_\om^{-1/2} \circ \pi_\om^0(1_\ca \ftp \g^{1/2}_{\cb}(b')) \Om
=
\p^\om_{\n'}(b') \Om
=
v_\n b' \L_\n,
$$
for all $b' \in \cb'$. 
\end{proof}

\medskip

\noindent\textbf{From usual to Fermi tensor products.} 
Conversely, consider any 
$$\om \in T^\g (\mu,\nu),$$
giving $\pi_\om$ and $h_\om$ as in Subsection \ref{SubsecTrivgrad},
in terms of the usual tensor product. 
This leads to the $\bz_2$-grading $\eta_{\om}$
on $B(\ch_\om)$, 
and the corresponding Klein automorphism $\eta _\om^{1/2}$,
as in Subsection \ref{SubsecGrad&Kl}.


\begin{proposition}\label{propUrepToFrep}
	The map $\pi _{\omega }^{\text{\emph{F}}}$ from  $\mathcal{A}\ftp\mathcal{B}^{\wr }$ to $B(\ch_\om)$ obtained from $\pi _{\omega }$ via
	\begin{equation}\label{UrepToFrep}
		\pi _{\omega }^{\text{\emph{F}}}(a \ftp b^{\wr })
		:=
		\pi _{\omega }(a\odot 1_{\cb'})
		\eta _{\omega }^{1/2}\circ \pi _{\omega}(1_{\ca}\odot \gamma _{\cb}^{-1/2}(b^\wr ))
	\end{equation}
	for all $a\in \mathcal{A}$ and $b^{\wr }\in \mathcal{B}^{\wr }$, gives a cyclic representation $(\ch_{\omega},\pi _{\omega }^{\text{\emph{F}}},\Om)$ associated to the state $\om_{\ftps}$ on $\ca\ftp\cb^{\wr }$ defined by	
	\begin{equation} \label{FermTPfromUs}
		\om_{\ftps}  = 
		\left\langle 
		\Om , \pi _{\om}^{\F}(\cdot )\Om
		\right\rangle .
	\end{equation}
\end{proposition}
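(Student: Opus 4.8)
The plan is to run the proof of Proposition~\ref{propFrepToUrep} in reverse, with one crucial simplification: since $\om\in T^\g(\m,\n)$ extends to a state on the maximal $C^*$-tensor product, $\pi_\om$ is from the outset a genuine bounded $*$-representation of $\ca\odot\cb'$ on $\ch_\om$, so no boundedness issues or work on an incomplete space arise. Writing $\lambda(a):=\pi_\om(a\odot 1_{\cb'})$ and $\rho(b^\wr):=\eta_\om^{1/2}\circ\pi_\om(1_\ca\odot\g_\cb^{-1/2}(b^\wr))$, the definition \eqref{UrepToFrep} reads $\pi_\om^{\F}(a\ftp b^\wr)=\lambda(a)\,\rho(b^\wr)$. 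Because $\eta_\om^{1/2}$ is conjugation by the unitary $h_\om^{1/2}$ of \eqref{GNS-Klein}, it is a $*$-automorphism of $B(\ch_\om)$; since $\g_\cb^{-1/2}\colon\cb^\wr\to\cb'$ is a $*$-isomorphism and $\pi_\om(1_\ca\odot\,\cdot\,)$ a $*$-homomorphism, both $\lambda$ and $\rho$ are $*$-homomorphisms into $B(\ch_\om)$, and a short computation using the analogue of \eqref{eq:etapigamma} together with the fact that $\g_\cb^{-1/2}$ intertwines the gradings shows both are graded, i.e.\ $\eta_\om$-equivariant.

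The heart of the matter, which I expect to be the main obstacle, is to verify that $\lambda(\ca)$ and $\rho(\cb^\wr)$ \emph{super}commute with respect to $\eta_\om$: for homogeneous $a,b^\wr$ one needs $\lambda(a)\rho(b^\wr)=(-1)^{|a|\,|b^\wr|}\rho(b^\wr)\lambda(a)$. This is the reverse of the supercommutation Lemma~\ref{eq39}. The input is that $\pi_\om(a\odot 1)$ and $\pi_\om(1\odot b')$ commute \emph{exactly} in the usual tensor product, the twist being introduced solely by $\eta_\om^{1/2}$. Expanding the latter via the usual-tensor-product analogue of \eqref{Kleinpm}, namely $\eta_\om^{1/2}(t)=t_++ih_\om t_-$, and using that $\pi_\om(a\odot 1)$ commutes with $h_\om$ when $a$ is even and anticommutes when $a$ is odd (while commuting exactly with $\pi_\om(1\odot b')$ throughout), a short computation produces precisely the Koszul sign $(-1)^{|a|\,|b^\wr|}$. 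The delicate point is keeping careful track of parities so that this sign emerges correctly.

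With supercommutation established, both multiplicativity and preservation of involution follow cleanly. The most economical route is to invoke the universal property of the $\bz_2$-graded tensor product: a pair of supercommuting graded $*$-homomorphisms $\lambda,\rho$ into the graded $*$-algebra $(B(\ch_\om),\eta_\om)$ extends to a single graded $*$-homomorphism $\pi_\om^{\F}\colon\ca\ftp\cb^\wr\to B(\ch_\om)$ with $\pi_\om^{\F}(a\ftp b^\wr)=\lambda(a)\rho(b^\wr)$. Alternatively, mirroring Proposition~\ref{propFrepToUrep}, one checks $\pi_\om^{\F}(d_1d_2)=\pi_\om^{\F}(d_1)\pi_\om^{\F}(d_2)$ and $\pi_\om^{\F}(d^*)=\pi_\om^{\F}(d)^*$ directly on homogeneous elementary tensors, the Koszul sign of the Fermi product cancelling against the sign from commuting $\rho(b_1^\wr)$ past $\lambda(a_2)$.

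It remains to check that $\Om$ is cyclic. Since $h_\om\Om=\Om$ we have $h_\om^{\pm1/2}\Om=\Om$, so $\rho(b^\wr)\Om=h_\om^{1/2}\pi_\om(1_\ca\odot\g_\cb^{-1/2}(b^\wr))\Om$; the usual-case intertwining relation \eqref{KleinUsRepIntertw} then collapses this to $u_\n b^\wr\L_\n$, exactly as in the marginal computations of Proposition~\ref{propFrepToUrep}, whence $\rho(\cb^\wr)\Om=\pi_\om(1_\ca\odot\cb')\Om$. Consequently $\pi_\om^{\F}(a\ftp\cb^\wr)\Om=\pi_\om(a\odot\cb')\Om$, and ranging over $a\in\ca$ recovers the dense subspace $\pi_\om(\ca\odot\cb')\Om=\ch_\om^0$. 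Thus $\Om$ is cyclic and, by the standard GNS argument, $\om_{\ftps}$ of \eqref{FermTPfromUs} is a state on $\ca\ftp\cb^\wr$ with cyclic representation $(\ch_\om,\pi_\om^{\F},\Om)$.
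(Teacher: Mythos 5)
Your proposal is correct and follows essentially the same route as the paper: the key input in both is the identity $\eta_\om^{1/2}(t)=t_+ + ih_\om t_-$ together with the fact that $\pi_\om(a\odot 1)$ commutes or anticommutes with $h_\om$ according to the parity of $a$ while commuting exactly with $\pi_\om(1\odot b')$, and the cyclicity argument via \eqref{KleinUsRepIntertw} and $\cb^\wr\L_\n=\cb'\L_\n$ is the paper's verbatim. The only difference is organizational: you isolate the Koszul-sign supercommutation of $\lambda$ and $\rho$ and then invoke the universal property of the graded tensor product, whereas the paper verifies multiplicativity and involution directly on homogeneous elementary tensors via its equivalent identity \eqref{comDev}.
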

\begin{proof}
As in \eqref{Kleinpm}, we have
\begin{equation}
	\eta_\om^{\pm 1/2}(t)=t_{+}\pm ih_{\om}t_{-} \, .  \label{KleinFormAgain}
\end{equation}
for all $t \in B(\ch_\om)$. Writing 
\begin{equation}\label{ef}
	e:=\pi _{\om}(a\odot 1_{\mathcal{B}'})\text{ \ \ and \ \ }
	f:=\pi _{\omega }(1_{\mathcal{A}}\odot b')
\end{equation}
as well as 
\begin{equation*}
	l:=e\eta _{\omega }^{1/2}(f)\text{ \ \ and \ \ }m:=\eta _{\omega }^{1/2}(f)e \, ,
\end{equation*}%
we find by simple calculation, because of the usual tensor product in \eqref{ef},
and using $e_{-}h_{\omega}=-h_{\omega}e_{-}$, that
\begin{equation}
	l=m-2ih_{\omega }e_{-}f_{-}  \label{comDev}
\end{equation}
for any $a\in \mathcal{A}$ and $b'\in \mathcal{B}'$. In
particular, if either $e_{-}=0$ or $f_{-}=0$, we have $l=m$.
This simple identity allows us to define a representation of $\mathcal{A}\ftp\mathcal{B}^{\wr }$ via \eqref{UrepToFrep}.
To prove that $\pi _{\omega }^{\text{F}}$ is
indeed a $\ast $-homomorphism, we use (\ref{comDev})
keeping in mind that 
$\gamma _{\mathcal{B}}^{-1/2}(b^{\wr })_{\pm }=
\gamma _{\mathcal{B}}^{-1/2}(b_{\pm }^{\wr })$.
Simply calculate:
\begin{equation*}
\begin{aligned}
\pi _{\om}^{\text{F}}((a\ftp b_{-}^{\wr })&(c_{-}\ftp d^{\wr })) 
	=-\pi _{\om}^{\text{F}}((ac_{-}) \ftp (b_{-}^{\wr }d^{\wr }))\\
	=&-\pi _{\om}((ac_{-}) \odot 1_{\cb'})
	\eta_{\om}^{1/2}\circ 
	\pi _{\om}(1_{\ca} \odot (\g _{\cb}^{-1/2}(b^{\wr })_{-}\g _{\cb}^{-1/2}(d^{\wr }))) \\
	=&-\pi _{\om}(a\odot 1_{\cb'}) 
	\times \\
	&[\eta _{\om}^{1/2}\circ \pi _{\om}(1_{\ca}\odot (\g_{\cb}^{-1/2}(b^{\wr })_{-})
	\pi _{\om }(c_{-} \odot 1_{\cb'}) 
	\notag\\
	&-2ih_{\om}\pi _{\om}(c_{-}\odot 1_{\cb'})
	\pi_{\om}(1_{\ca}\odot \g_{\cb}^{-1/2}(b^{\wr})_{-})] 
	\times  
	\notag\\
	&\eta _{\om}^{1/2}\circ \pi _{\om}(1_{\ca}\odot \g_{\cb}^{-1/2}(d^{\wr })) 
		\notag\\
	=&-\pi _{\om}^{\text{F}}(a\ftp b_{-}^{\wr })\pi _{\om}^{\text{F}}(c_{-}\ftp d^{\wr })+r \, .
	\notag
\end{aligned}
\end{equation*}
Here $r$ is defined to be
\begin{align}
	r =&\  2i\pi _{\om}(a\odot 1_{\cb'})h_{\om}\pi
	_{\om}(1_{\ca}\odot \g_{\cb}^{-1/2}(b^{\wr})_{-})
	\times  
	\notag\\
	&\pi _{\om}(c_{-}\odot 1_{\cb'})
	\eta _{\om}^{1/2}\circ \pi _{\om}(1_{\ca}\odot \g _{\cb	}^{-1/2}(d^{\wr })) 
	\notag\\
	=&\ 2\pi _{\om}(a\odot 1_{\cb'})
	\eta _{\om}^{1/2}\circ \pi _{\om}(1_{\ca}\odot \g_{\cb}^{-1/2}(b_{-}^{\wr }))
	\pi_{\om }^{\text{F}}(c_{-}\ftp d^{\wr }) 
	\notag\\
	=&\ 2\pi _{\om}^{\text{F}}(a\ftp b_{-}^{\wr })\pi _{\om}^{\text{F}}(c_{-}\ftp d^{\wr }).
	\notag
\end{align}
where we could use (\ref{KleinFormAgain}) for the second equality, 
as it is easily checked that
\begin{equation*}
\begin{aligned}
\eta _{\om}(\pi _{\om}(1_{\ca}\odot \g_{\cb	}^{-1/2}(b^{\wr })_{-}))
	= &\pi _{\om}(\g_{\ca}(1_{\ca}) \odot \g_{\cb'} (\g_{\cb}^{-1/2}(b^{\wr })_{-})) \\
	=&-\pi _{\om}(1_{\ca}\odot \g_{\cb}^{-1/2}(b^{\wr})_{-}),
	\notag
\end{aligned}
\end{equation*}
that is, 
\begin{equation*}
	\pi _{\om}(1_{\ca}\odot \g_{\cb}^{-1/2}(b^{\wr})_{-})
	=
	\pi _{\om}(1_{\ca}\odot \g _{\cb	}^{-1/2}(b^{\wr })_{-})_{-} \, ,
\end{equation*}
thus only the second term from (\ref{KleinFormAgain}) is involved. Hence%
\begin{equation*}
	\pi _{\om}^{\text{F}}((a\ftp b_{-}^{\wr })(c_{-}\ftp d^{\wr }))
	=
	\pi _{\om}^{\text{F}}(a\ftp b_{-}^{\wr })\pi _{\om}^{\text{F}}(c_{-}\ftp d^{\wr }).
\end{equation*}
Similarly (but simpler as one then uses (\ref{comDev}) in the form $l=m$)
for the cases $b_{+}^{\wr },c_{-}$, etc., to show that 
$\pi _\om^{\F}$ is a homomorphism. 
Analogously for preservation of involution to show that it is a $\ast $-homomorphism. 
Therefore
\begin{equation*}
	\pi _\om^{\text{F}}:\mathcal{A}\ftp\mathcal{B}^{\wr }\rightarrow 
	B(\mathcal{H}_{\omega })
\end{equation*}
is indeed a representation.


Next we show that it is cyclic. Note that 
\begin{equation}\label{p_omBep}
	\eta_\om^{1/2} \circ \p_\om(1_\ca \odot b')|_{\ch^\om_\n}
	=
	h_\n^{1/2}\pi _{\n'}^{\om}(b')h_{\nu }^{-1/2}
	=
	\p_{\n^{\wr }} (\g_\cb^{1/2}(b')),
\end{equation}
with 
$
\p_{\n^\wr}  := u_{\nu }(\cdot )u_{\nu }^*
$
serving as a cyclic representation of $\n^\wr$,
since $h_\n^{1/2}u_\n = u_\n g_\cb^{1/2}$ according to (\ref{KleinUsRepIntertw}).
Consequently, 
\begin{align}
	\p_{\om}^{\F}(\ca\ftp\cb^{\wr })\Om
	&=\pi_{\om}(\ca\odot 1_{\cb'})
	\eta _{\om}^{1/2}\circ \pi _{\om }(1_{\ca}\odot \g_{\cb}^{-1/2}(\cb^{\wr }))
	|_{\ch^\om_\n}\Om
	\notag \\
	&=\pi _{\om}(\ca\odot 1_{\cb'})\pi_{\nu^{\wr}} (\cb^{\wr })\Om  
	\notag\\
	&=\pi _{\om}(\ca\odot 1_{\cb'})\pi_{\nu'}^{\om}(\cb')\Om
	\notag\\
	&=\pi _{\om}(\ca\odot \cb')\Om
	\notag
\end{align}
which is indeed dense in $\mathcal{H}_{\omega }$ as $\pi _{\omega }$ is
cyclic. Note that here we've used the fact that 
$
	\pi_{\n^\wr}(\cb^\wr )\Om
	=
	\pi _{\nu'}^{\om}(\cb')\Omega,
$
as in the proof of Proposition \ref{propFrepToUrep}.
It is of course clear that $\omega _{\ftps}$
is a state on $\ca \ftp \cb^\wr$, as $\p_{\om}^{\F}$ is a representation, and indeed
$\p_{\om}^{\F}(1_\ca \ftp b^{\wr })|_{\ch^\om_\n} = \p_{\n^\wr} (b^\wr)$
from \eqref{p_omBep}, in step with Section \ref{SubSecFTrPl}'s conventions for an $\om_{\ftps}$-representation in \eqref{erfnu}.
\end{proof}



\subsection{Fermionic transport plans as channels}\label{SubsecEvsom} 
We take the first steps to show that fermionic transport plans can be expressed through even channels, in analogy to the usual case. In this subsection we restrict ourselves to the fermionic transport plans $\om_{\ftps}$ for $\om\in T^\g(\m , \n)$, but the next subsection will confirm that this in fact covers all fermionic transport plans.


Recall from \cite[Section 3]{DS2} that for $\om \in T(\m , \n)$ there is a uniquely determined function $E_\om : \ca \to \cb$  such that
\begin{equation}\label{omUitEom}
	\om = \d_\nu \circ (E_\om\odot \id_{\cb'}).
\end{equation}
Here $\d_\n$ is the (usual) \textit{identity (or diagonal) transport plan}  from $\n$ to itself,
obtained as the case of $\d^{\F}_\n$ in Definition \ref{diagFTrPl} when the grading is trivial.
This function $E_\om$ is a (necessarily normal, i.e., $\s$-weakly continuous) channel such that 
$\n \circ E_\om = \m$. 
It can be also be defined by the formula
\begin{equation}\label{Eformule}
	E_\om = u_\n^* P_\n \pi _\om(a\odot 1_{\cb'})u_\n
\end{equation}
in terms of the projection $P_\n \in B(\ch_\om)$ of $\ch_\om$ onto $\ch_{\nu}^{\om}$, using the notation of Subsection \ref{SubsecTrivgrad}
(\textit{cf.} \cite[Proposition 3.1 and Theorem 3.2]{DS2}).
Conversely, given a channel $E : \ca \to \cb$ such that 
$\n \circ E = \m$, we have 
$\d_\nu \circ (E \odot \id_{\cb'}) \in T(\m , \n)$.

The dual of $E_\om$ corresponds to the \emph{dual} transport plan 
$\om' \in T(\n' , \m')$,
which can be defined via
\begin{equation}\label{om'}
	E_{\om'} = E_\om' \, ,
\end{equation}
or equivalently by 
$\om'( b' \odot a ) = \om( a \odot b' )$.


We proceed with the first step toward the fermionic version of all this, with the argument being completed in the next subsection. 
\begin{prop}
	\label{FermiKanaal}
For any $\om \in T^\g(\m , \n)$, the channel $E_\om$ is the unique function $\ca \to \cb$ such that in terms of \eqref{FermTPfromUs},
\begin{equation}\label{FdiagE}
	\om_{\ftps} = \d^{\F}_\n \circ (E_\om \ftp \id_{\cb^\wr}), 
\end{equation}
where 
$E_{\om} \ftp \id_{\cb^\wr} : \ca \ftp \cb \to B(\cg_\cb)$ 
is given by 
\begin{equation*}
	E_\om \ftp \id_{\cb^\wr }(a \ftp b^\wr  ) = E_\om(a) \ftp b^\wr.
\end{equation*}
In addition, $E_\om$ is even, and consequently
\[
\om_{\ftps} \in T^{\F}(\m , \n).
\]
\end{prop}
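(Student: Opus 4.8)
The plan is to reduce all three claims to the single evaluation
\[
\om_{\ftps}(a \ftp b^\wr) = \langle \L_\n, E_\om(a) b^\wr \L_\n \rangle
\]
for all $a \in \ca$ and $b^\wr \in \cb^\wr$, after which each assertion follows from the cyclic and separating properties of $\L_\n$. First I would compute $\om_{\ftps}(a \ftp b^\wr) = \langle \Om, \pi_\om^{\F}(a \ftp b^\wr)\Om\rangle$ directly from \eqref{FermTPfromUs} and \eqref{UrepToFrep}. Since $\Om \in \ch_\n^\om$, the restriction identity \eqref{p_omBep} applies to the Klein-twisted factor and collapses it to
\[
\eta_\om^{1/2} \circ \pi_\om(1_\ca \odot \g_\cb^{-1/2}(b^\wr))\Om = \pi_{\n^\wr}(b^\wr)\Om = u_\n b^\wr \L_\n,
\]
using that $\g_\cb^{1/2}$ and $\g_\cb^{-1/2}$ are mutually inverse and $u_\n^*\Om = \L_\n$. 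Hence $\pi_\om^{\F}(a \ftp b^\wr)\Om = \pi_\om(a \odot 1_{\cb'})u_\n b^\wr \L_\n$. Because $P_\n\Om = \Om$, I can insert the projection $P_\n$ on the left and invoke \eqref{Eformule} in the form $P_\n \pi_\om(a \odot 1_{\cb'})u_\n = u_\n E_\om(a)$ (using $u_\n u_\n^* = P_\n$); cancelling $u_\n$ via $u_\n^* u_\n = 1$ yields the displayed evaluation. Comparing with Definition \ref{diagFTrPl}, the right-hand side of \eqref{FdiagE} evaluated on $a \ftp b^\wr$ equals $\d^{\F}_\n(E_\om(a) \ftp b^\wr) = \langle \L_\n, E_\om(a)b^\wr \L_\n\rangle$, which is the same expression, establishing \eqref{FdiagE}.

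Next I would settle uniqueness and evenness, both by the separating argument. For uniqueness, any $E$ satisfying \eqref{FdiagE} gives $\langle \L_\n, (E(a) - E_\om(a))b^\wr \L_\n\rangle = 0$ for all $b^\wr$; density of $\cb^\wr \L_\n$ forces $(E(a) - E_\om(a))^*\L_\n = 0$, and separation gives $E = E_\om$. For evenness, I would feed the graded condition $\om(\g_\ca(a) \odot b') = \om(a \odot \g_\cb'(b'))$ through the representation \eqref{omUitEom}, which writes $\om(x \odot y') = \langle \L_\n, E_\om(x)y'\L_\n\rangle$. Using $g_\cb \L_\n = \L_\n$ (recall $g_\cb$ fixes $\L_\n$ for even $\n$) together with $g_\cb^* = g_\cb$ to move the grading from $\cb'$ onto $E_\om(a)$ inside the inner product, both sides become pairings against $b'\L_\n$, so the same separating argument yields $E_\om(\g_\ca(a)) = \g_\cb(E_\om(a))$.

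Finally, with the evaluation formula and evenness in hand, I would verify the conditions of Definition \ref{TPforSta} for $\om_{\ftps}$. The marginals \eqref{marginals} follow from $\n \circ E_\om = \m$ (first marginal) and from $E_\om(1_\ca) = 1_\cb$ together with \eqref{verwToest} (second marginal, giving $\n^\wr(b^\wr)$). The balance condition \eqref{oordPl} follows by using evenness of $E_\om$ and $g_\cb \L_\n = \L_\n$ to rewrite both $\om_{\ftps}(\g_\ca(a) \ftp b^\wr)$ and $\om_{\ftps}(a \ftp \g_{\cb^\wr}(b^\wr))$ as $\langle \L_\n, E_\om(a) g_\cb b^\wr \L_\n\rangle$. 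I expect the main obstacle to be the opening computation rather than these verifications: one must keep track that \eqref{p_omBep} is valid only on $\ch_\n^\om$, so its use is legitimate precisely because $\Om \in \ch_\n^\om$, and that the insertion of $P_\n$ needed to apply \eqref{Eformule} is justified by $P_\n\Om = \Om$. Once this bookkeeping is correct, the remaining steps reduce to repeated use of cyclicity and separation of $\L_\n$.
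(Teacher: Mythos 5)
Your proposal is correct and follows essentially the same route as the paper: the same opening computation (restricting to $\ch_\n^\om$ via $P_\n\Om=\Om$, using the Klein intertwining to cancel $\g_\cb^{1/2}\g_\cb^{-1/2}$, and applying \eqref{Eformule}), followed by the same separating-vector arguments for uniqueness and evenness. The only cosmetic difference is that you verify \eqref{oordPl} by a direct $g_\cb\L_\n=\L_\n$ computation, where the paper cites Definition \ref{verwDuaalDef} and Example \ref{(anti)iso}; both are valid.
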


\begin{proof}
 Using \eqref{UrepToFrep} along with Subsection \ref{SubsecTrivgrad} and (\ref{Eformule}), we have 
\begin{align}
	\om_{\ftps}(a\ftp b^{\wr }) 
	&=\left\langle 
	\Om ,
	\pi _{\om}(a\odot 1_{\cb'})
	h_{\om}^{1/2}\pi _{\om }(1_{\ca}\odot \g _{\cb}^{-1/2}(b^{\wr })) \Om
	\right\rangle 
	\notag\\
	&=\left\langle 
	P_{\nu }\Omega ,
	\pi _{\om}(a\odot 1_{\cb'})
	h_{\nu }^{1/2}\pi _{\nu'}^{\om}(\g_{\cb}^{-1/2}(b^{\wr}))\Omega
	\right\rangle  
	\notag\\
	&=\left\langle 
	\Lambda _\n ,
	u_\n^* P_\n \pi _\om(a\odot 1_{\cb'})u_\n
	g_{\cb}^{1/2}\g_{\cb}^{-1/2}(b^{\wr})\Lambda _{\nu }
	\right\rangle  
	\notag\\
	&=\left\langle 
	\L_\n ,
	E_{\om}(a)b^{\wr }\Lambda _{\nu }
	\right\rangle  
	\notag\\
	&=\d _\n^{\F}\circ (E_{\om}\ftp \id_{\cb^{\wr}}) ( a\ftp b^\wr )
	\notag
\end{align}
as required. 
As $\cb^\wr \L_\n$ is dense in $\cg_\cb$, while $\L_\n$ is separating for $\cb$, the value $E_\om(a)$ is indeed uniquely determined by 
$
\left\langle 
\Lambda _{\nu} , E_{\om}(a)b^{\wr }\Lambda _{\nu }
\right\rangle
$
for $b^\wr$ ranging over the whole of $\cb^\wr$. 

That $E_\om$ is even, follows from
$
\omega(\gamma_\ca(a) \odot b')
=
\omega(a \odot \gamma_\cb'(b'))
$
in Definition \ref{oordPlan}, by rewriting it as
$$
\langle \L_\n , E_\om(\g_\ca(a)) b' \L_\n \rangle
=
\langle \L_\n , E_\om(a) \g'_\b(b') \L_\n \rangle
=
\langle \L_\n , \g_\b(E_\om(a)) b' \L_\n \rangle,
$$
which gives $E_\om \circ \g_\ca =  \g_\cb \circ E_\om$ by the same uniqueness argument above.
Lastly we show that $\om_{\ftps} \in T^{\F}(\m , \n)$. 
It is a state by Proposition \ref{propUrepToFrep}.
The required coupling properties (\ref{marginals}) follow from (\ref{FdiagE}), 
as does (\ref{oordPl}) by Definition \ref{verwDuaalDef} (and Example \ref{(anti)iso}, ensuring the existence of the twisted dual).
\end{proof}

A key role of this channel $E_\om$ is that it can be used to characterize transport plans between $\bz_2$-graded systems, in the same way as for the usual case in \cite[Theorem 4.1]{DS2}. 
We provide a proof directly in the $\bz_2$-graded case along the lines of evenness above, 
avoiding reliance on Hilbert space representations of the dynamics as contractions. 
This ensures that we can cover u.p. maps without further assumptions (like Schwarz inequalities).

\begin{prop}\label{fermBalE}
	Given two $\bz_2$-graded systems $\va$ and $\vb$ and a (usual) transport plan 
	$\om \in T^\g(\m, \n )$, we have
	$$
	\va \om_{\ftps} \vb
	$$
	if and only if (recalling Notation \ref{SupInd})
	$$
	E_\om \circ \a = \b \circ E_\om \, .
	$$
\end{prop}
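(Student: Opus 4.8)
The plan is to reduce the fermionic balance condition $\va \om_{\ftps} \vb$ to an identity of inner products by invoking the explicit formula for $\om_{\ftps}$ furnished by Proposition \ref{FermiKanaal}, and then to strip away the test vectors $b^\wr \L_\n$ using the same cyclicity-and-separation argument already deployed at the end of that proof. Concretely, $\va \om_{\ftps} \vb$ is condition \eqref{FOordPl}, namely $\om_{\ftps}(\a(a) \ftp b^\wr) = \om_{\ftps}(a \ftp \b^\wr(b^\wr))$ for all $a \in \ca$ and $b^\wr \in \cb^\wr$, and the whole argument is a chain of equivalences rather than two separate computations.

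First I would substitute the identity $\om_{\ftps}(a \ftp b^\wr) = \langle \L_\n, E_\om(a) b^\wr \L_\n \rangle$ from Proposition \ref{FermiKanaal} into both sides of \eqref{FOordPl}. The left-hand side becomes $\langle \L_\n, E_\om(\a(a)) b^\wr \L_\n \rangle$, while the right-hand side becomes $\langle \L_\n, E_\om(a) \b^\wr(b^\wr) \L_\n \rangle$. The twisted dual $\b^\wr$ exists here by Theorem \ref{twDuSys}, since each $\b$ is an even u.p.\ map with $\n \circ \b = \n$. Applying its defining property \eqref{verwrDuaal} (taken with $\ca = \cb = \cb$ and $\m = \n = \n$) to $E_\om(a) \in \cb$ and $b^\wr \in \cb^\wr$, the right-hand side rewrites as $\langle \L_\n, \b(E_\om(a)) b^\wr \L_\n \rangle$.

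Thus $\va \om_{\ftps} \vb$ is equivalent to $\langle \L_\n, E_\om(\a(a)) b^\wr \L_\n \rangle = \langle \L_\n, \b(E_\om(a)) b^\wr \L_\n \rangle$ for all $a \in \ca$ and $b^\wr \in \cb^\wr$. Since both $E_\om(\a(a))$ and $\b(E_\om(a))$ lie in $\cb$, and since $\cb^\wr \L_\n$ is dense in $\cg_\cb$ while $\L_\n$ is separating for $\cb$, this family of scalar identities is equivalent to the operator identity $E_\om(\a(a)) = \b(E_\om(a))$ for every $a$, that is, $E_\om \circ \a = \b \circ E_\om$. As each step is reversible, both implications follow at once, recalling from Notation \ref{SupInd} that $\a$ and $\b$ range over the respective indexed families.

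I do not expect a serious obstacle: the result is essentially the fermionic shadow of the evenness computation already carried out in Proposition \ref{FermiKanaal}, so the only points requiring care are selecting the correct form \eqref{verwrDuaal} of the twisted-dual relation and confirming the existence of $\b^\wr$ via Theorem \ref{twDuSys}. The one subtlety worth flagging is that the argument is an equivalence, so the cyclicity-and-separation step must be reversible, which it is: the scalar values $\langle \L_\n, X b^\wr \L_\n \rangle$, as $b^\wr$ ranges over $\cb^\wr$, determine $X^* \L_\n$ by density of $\cb^\wr \L_\n$ in $\cg_\cb$, and hence determine $X \in \cb$ because $\L_\n$ is separating for $\cb$.
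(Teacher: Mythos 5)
Your proposal is correct and follows essentially the same route as the paper: substitute the formula $\om_{\ftps}(a \ftp b^\wr) = \langle \L_\n, E_\om(a) b^\wr \L_\n \rangle$ from Proposition \ref{FermiKanaal}, rewrite the right-hand side via the defining relation of the twisted dual $\b^\wr$ (Theorem \ref{twDuSys} with Proposition \ref{DuaEkw}), and conclude by the density of $\cb^\wr \L_\n$ together with $\L_\n$ being separating for $\cb$. The paper's proof is just a more compressed version of the same chain of equivalences; your explicit remark that the cyclicity-and-separation step is reversible is a point the paper leaves implicit.
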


\begin{proof}
	The condition $\va \om_{\ftps} \vb$ (Definition \ref{TF(A,B)}) can be expressed as
	$$
	\langle \L_\n , E_\om(\a(a)) b^\wr \L_\n \rangle
	=
	\langle \L_\n , E_\om(a) \b^\wr(b^\wr) \L_\n \rangle
	=
	\langle \L_\n , \b(E_\om(a)) b^\wr \L_\n \rangle
	$$
	for all $a\in\ca$ and $b^\wr \in \cb^\wr$, 
	using Proposition \ref{FermiKanaal} followed by Theorem \ref{twDuSys} (and Proposition \ref{DuaEkw}).
\end{proof}
Given $\bz_2$-graded systems $\va$ and $\vb$, and a transport plan 
$\om \in T^\g(\m, \n )$, this proposition tells us that 
 $\va \om_{\ftps} \vb$ is equivalent to $\va \om \vb$ (see Definition \ref{T(A,B)}), as both are characterized by 
 $E_\om \circ \a = \b \circ E_\om$ 
 (for $\va \om \vb$ as special case of Proposition \ref{fermBalE} when ignoring the grading).
 
 \begin{remark}\label{diag&IdCh}
 	A simple point worth noting is that $\d _\n^{\F}$ corresponds to the identity channel $\id_\cb$. This is ultimately related to the Wasserstein distance between a state and itself being zero. See \cite[Proposition 6.1]{D22} for the usual case.
 \end{remark}

\subsection{The one-to-one correspondence between fermionic and usual transport plans}
\label{SecFTPvsTP}
We now turn to the key result so far in the paper. 
%
%
It is the culmination of the work in Section \ref{SecCR} as well as this section, and will be of cardinal importance in the next section, in particular through Corollary \ref{sigma1-to-1}.

\begin{theorem}\label{Thm1-to-1}
	The state $\om_{\ftps}$ defined in (\ref{FermTPfromUs}),
	leads to a well defined bijection
\begin{equation}\label{1-to-1}
	f : T^\g(\m , \n) \rightarrow T^{\F}( \m ,\n )
	  : \om \mapsto \om_{\ftps} \, ,
\end{equation}
with inverse given by $f^{-1}(\om) = \om_{\odot}$
for all $\om \in T^{\F}(\m ,\n )$.
\end{theorem}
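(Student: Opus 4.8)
The plan is to show that the two maps $f:\omega\mapsto\omega_{\ftps}$ and $g:\omega\mapsto\omega_\odot$ constructed in Propositions~\ref{propUrepToFrep} and~\ref{propFrepToUrep} are well defined on the stated domains and are mutually inverse. Well-definedness of $f$ (i.e.\ $\omega_{\ftps}\in T^{\F}(\mu,\nu)$ for $\omega\in T^\g(\mu,\nu)$) is essentially Proposition~\ref{FermiKanaal}, and well-definedness of $g$ (i.e.\ $\omega_\odot\in T^\g(\mu,\nu)$ for $\omega\in T^{\F}(\mu,\nu)$) is Proposition~\ref{propFrepToUrep}; so the only real content left is the pair of inverse relations $g\circ f=\id$ and $f\circ g=\id$. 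I would state this reduction first, so that the theorem becomes purely a matter of verifying $(\omega_{\ftps})_\odot=\omega$ for $\omega\in T^\g(\mu,\nu)$ and $(\omega_\odot)_{\ftps}=\omega$ for $\omega\in T^{\F}(\mu,\nu)$.

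For the first relation, I would start from an $\omega\in T^\g(\mu,\nu)$ and compare the two representations at the level of their defining formulas. Proposition~\ref{propUrepToFrep} builds $\pi_\omega^{\F}$ on $\ch_\omega$ from $\pi_\omega$ via \eqref{UrepToFrep}; feeding this $\pi_\omega^{\F}$ into the construction of Subsection~\ref{SubSecFTrPl} produces an $\omega_{\ftps}$-representation, and then applying \eqref{fermRepTrans} should recover $\pi_\omega$ back on the usual tensor product. The key is that the two transformations \eqref{fermRepTrans} and \eqref{UrepToFrep} are built from $\eta_\omega^{-1/2}$ and $\eta_\omega^{1/2}$ together with the Klein intertwining $\g_\cb^{\pm1/2}$, and these are genuine inverses of one another, so the composite collapses to the identity on elementary tensors $a\odot b'$. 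Since both $\omega$ and $(\omega_{\ftps})_\odot$ are states determined by their values $\langle\Om,\pi(a\odot b')\Om\rangle$ on such elementary tensors, agreement there gives agreement as states. The same Klein--intertwining bookkeeping, run in the opposite order, gives $(\omega_\odot)_{\ftps}=\omega$.

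The subtlety I expect to be the main obstacle is that these computations happen at two \emph{a priori} distinct levels of completion. When one starts with $\omega\in T^{\F}(\mu,\nu)$, the space $\ch_\omega^0$ is only an inner product space and $\pi_\omega^0$ consists of possibly unbounded operators (Remark~\ref{begrensdeBelofte}); it is precisely Proposition~\ref{propFrepToUrep} that, via $\omega_\odot$, furnishes boundedness and the completed representation $(\ch_\omega,\pi_\omega^{\Us},\Om)$. So to even speak of $(\omega_\odot)_{\ftps}$ one must feed the completed $\pi_{\omega_\odot}=\pi_\omega^{\Us}$ into Proposition~\ref{propUrepToFrep}, and then check that the resulting $\pi^{\F}$ agrees with the original $\pi_\omega^0$ after identifying the Hilbert spaces. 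Here Lemma~\ref{Vertaalde_u}, which asserts $u_\nu=v_\nu$ for the two unitary equivalences $\cg_\cb\to\ch_\nu^\omega$, is what canonically matches the $\nu^\wr$-subrepresentations in the two pictures; I would lean on it to identify the ambient $\ch_\omega$ of both constructions and transfer the cyclic vector $\Om$ consistently. Once the identification of spaces and cyclic vectors is pinned down, the algebraic cancellation of $\eta_\omega^{1/2}$ against $\eta_\omega^{-1/2}$ and of $\g_\cb^{1/2}$ against $\g_\cb^{-1/2}$ is routine.

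Finally, I would note that bijectivity is immediate once both inverse relations hold, but it is worth remarking that well-definedness already needs the full strength of the earlier propositions: the balance condition \eqref{oordPl} for $\omega_{\ftps}$ is \emph{not} automatic from it being a state, and is supplied by the evenness of $E_\omega$ established in Proposition~\ref{FermiKanaal}. I would therefore organize the proof so that the grading/balance conditions are cited from those results rather than re-derived, keeping the theorem's proof focused on the clean inverse computation.
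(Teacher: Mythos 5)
Your proposal is correct in outline but organizes the proof differently from the paper, and it leaves the hardest step more implicit than it should be. The paper does not prove the two-sided inverse relations $(\om_{\ftps})_\odot=\om$ and $(\om_\odot)_{\ftps}=\om$ by Klein-automorphism cancellation at the representation level. Instead it gets injectivity of $f$ essentially for free from the channel correspondence: Proposition \ref{FermiKanaal} shows $E_\om$ is uniquely determined by $\om_{\ftps}$, and \eqref{omUitEom} recovers $\om$ from $E_\om$, so $f$ is injective with no computation. It then proves only the single identity $(\om_\odot)_{\ftps}=\om$ for $\om\in T^{\F}(\m,\n)$, and does so via the channel formula \eqref{Eformule} for $E_{\om_\odot}$ together with Lemma \ref{Vertaalde_u}, reducing everything to a short chain of inner products ending in \eqref{GNS}. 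Injectivity plus this one right-inverse relation gives bijectivity and identifies the inverse. Your symmetric two-sided approach buys conceptual transparency (the two translation maps visibly undo each other) but costs you a second computation and, more importantly, forces you to confront the identification issue in both directions.

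On that issue: you correctly flag that one must identify the abstract GNS space of $\om_{\ftps}$ (a quotient of $\ca\ftp\cb^\wr$) with the concrete cyclic representation $(\ch_\om,\pi_\om^{\F},\Om)$, but the cancellation of $\eta^{1/2}$ against $\eta^{-1/2}$ that you call routine requires more than matching the spaces and cyclic vectors. The Klein automorphism appearing in \eqref{fermRepTrans} for $(\om_{\ftps})_\odot$ is built from the grading unitary $h_{\om_{\ftps}}$ of the \emph{new} representation (defined via \eqref{h_om} on the quotient space), while the one in \eqref{UrepToFrep} is built from $h_\om$ of the original representation; these live in a priori different pictures, so "$\eta^{1/2}$ and $\eta^{-1/2}$ are genuine inverses" is not yet a meaningful statement until you verify that the unitary identification intertwines the two grading operators. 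This does hold (both are determined by their action on the dense cyclic subspace, where the defining formulas can be matched using $\g_\cb^{1/2}\circ\g_{\cb'}=\g_{\cb^\wr}\circ\g_\cb^{1/2}$), but it is a genuine step, not bookkeeping, and is precisely the kind of check the paper's channel-based route avoids. If you carry out that verification explicitly, your argument goes through.
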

\begin{proof}
For any $\om \in T^\g(\m , \n)$, Proposition \ref{FermiKanaal} tells us that indeed $\om_{\ftps} \in T^{\F}(\m , \n)$.
Thus (\ref{1-to-1}) is well-defined.
In addition, Proposition \ref{FermiKanaal} says that $E_\om$ is uniquely determined by $\om_{\ftps}$, while $\om$ is determined by $E_\om$ through (\ref{omUitEom}), proving that $f$ is injective.

For any $\om \in T^{\F}(\m , \n)$, Proposition \ref{propFrepToUrep} tells us that $\om_\odot \in T^\g(\m , \n)$.
Hence $f(\om_\odot)$ is well-defined. 
Moreover, again using \eqref{Eformule}  for $E_{\om_{\odot }}$, in terms of the projection 
$P_\n$ of $\ch_\om$ onto $\ch^{\om_\odot}_\n$,
together with Proposition \ref{FermiKanaal}, Lemma \ref{Vertaalde_u} and \eqref{Fu},
it follows that 
\begin{eqnarray*}
	(\omega _{\odot })_{\ftps}(a\ftp b^{\wr }) 
	&=&\d_\n^{\F}(E_{\om_{\odot }}(a)\ftp b^\wr ) \\
	&=&\left\langle 
		  	\L_\n ,
			u_\n^* P_\n \pi _{\om_\odot}(a \odot 1_{\cb'})u_\n b^\wr \L_\n
			\right\rangle  \\
	&=&\left\langle 
			\L_\n ,
			v_\n^* P_\n \p_\om^{\Us} (a \odot 1_{\cb'})v_\n b^\wr \L_\n
			\right\rangle  \\
	&=&\left\langle 
			\Om , 
			\pi_\om^{\Us}(a \odot 1_{\cb'}) \pi _{\n^\wr}^\om(b^{\wr }) \Om
			\right\rangle,
\end{eqnarray*}
since $P_\n v_\n \L_\n = P_\n \Om = \Om$. 
By the definitions \eqref{fermRepTrans} and \eqref{erfnu} of $\pi_\om^{\Us}$ and $\pi _{\nu ^{\wr }}^{\omega }$ respectively, 
and because
$\pi _{\n ^\wr}^\om (b^\wr) \Om \in \ch_\om^0$,
we obtain  
\begin{eqnarray*}
	(\om_\odot)_{\ftps} ( a \ftp b^\wr ) 
	&=&
	\left\langle 
	\Om , \pi_\om^{0}(a\ftp 1_{\cb'}) \pi_\om^0 (1_\ca \ftp b^\wr )\Om
	\right\rangle  \\
	&=&
	\left\langle 
	\Om , \pi _{\omega }^{0}(a\ftp b^{\wr })\Omega 
	\right\rangle  \\
	&=&\omega (a\ftp b^{\wr }),
\end{eqnarray*}
by \eqref{GNS}, that is, 
	$(\om_{\odot })_{\ftps} = \om$.
Thus we have shown \eqref{1-to-1} to be surjective, hence bijective with inverse as claimed.
\end{proof}
\begin{rem}\label{FGNSb}
	Combining this theorem with Proposition \ref{propUrepToFrep}, we see that the cyclic representation associated to any 
	$\om \in T^{\F} ( \m , \n )$
	is indeed as bounded operators, as promised in Remark \ref{begrensdeBelofte}.
	That is, the cyclic representation $\p_\om^0$ from Subsection \ref{SubSecFTrPl}
	can be written as
	$$
	\p_\om : \ca \ftp \cb^\wr \to B(\ch_\om)
	$$
	by simply extending $\p_\om^0 (c)$ continuously to $\ch_\om$.
	Consequently $\om$ itself is bounded. In our translation approach, this is ultimately because $\om_\odot$ is.
\end{rem}

Theorem \ref{Thm1-to-1}
allows us to complete the representation of fermionic transport plans in terms of channels started in the previous subsection,
now telling us that \emph{every} fermionic transport plan $\om$ is represented by a corresponding even channel:

\begin{corollary}\label{ferE}
	For $\om \in T^{\F} (\m , \n)$ there is a uniquely determined function 
	$E_\om : \ca \to \cb$ 
	such that
		$$
		\om = \d^{\F}_\n \circ (E_\om \ftp \id_{\cb^\wr}), 
		$$
	This function is the normal even channel given by 
	$$E_\om = E_{\om_\odot},$$
	and satisfies $\n \circ E_\om = \m$.
\end{corollary}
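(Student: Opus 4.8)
The plan is to read this off almost directly from the bijection in Theorem \ref{Thm1-to-1} combined with Proposition \ref{FermiKanaal}. The latter already produces the channel representation, but only for fermionic transport plans of the special form $\om_{\ftps}$ with $\om \in T^\g(\m,\n)$; the sole new ingredient here is to transfer that representation across the bijection to an \emph{arbitrary} $\om \in T^{\F}(\m,\n)$, and then to argue uniqueness directly on the Fermi side.

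Concretely, given $\om \in T^{\F}(\m,\n)$, I would first pass to the usual transport plan $\om_\odot \in T^\g(\m,\n)$ supplied by Proposition \ref{propFrepToUrep} (equivalently, by the inverse $f^{-1}$ of Theorem \ref{Thm1-to-1}). Applying Proposition \ref{FermiKanaal} to $\om_\odot$ yields the even channel $E_{\om_\odot} : \ca \to \cb$ with
$$(\om_\odot)_{\ftps} = \d^{\F}_\n \circ (E_{\om_\odot} \ftp \id_{\cb^\wr}).$$
Surjectivity in Theorem \ref{Thm1-to-1} gives $(\om_\odot)_{\ftps} = \om$, so substitution produces $\om = \d^{\F}_\n \circ (E_{\om_\odot} \ftp \id_{\cb^\wr})$. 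Setting $E_\om := E_{\om_\odot}$, every claimed property is inherited: evenness from Proposition \ref{FermiKanaal}, and normality together with $\n \circ E_\om = \m$ from the standing facts about $E_{(\cdot)}$ for usual transport plans recorded just after \eqref{omUitEom}.

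For uniqueness I would unwind the defining relation into $\om(a \ftp b^\wr) = \langle \L_\n, E_\om(a) b^\wr \L_\n \rangle$ for all $a \in \ca$ and $b^\wr \in \cb^\wr$, and then run the separating-vector argument used in the proof of Proposition \ref{FermiKanaal}: since $\cb^\wr \L_\n$ is dense in $\cg_\cb$ and $\L_\n$ is separating for $\cb$, the element $E_\om(a) \in \cb$ is pinned down by these inner products, so two functions satisfying the relation must coincide. I do not expect any genuine obstacle; this is a bookkeeping corollary whose substance was delivered by Theorem \ref{Thm1-to-1} and Proposition \ref{FermiKanaal}. The only point needing a moment's attention is that the concrete identification of $E_{\om_\odot}$ via \eqref{Eformule} uses the projection $P_\n$ onto $\ch^{\om_\odot}_\n$ and the equality $u_\n = v_\n$ from Lemma \ref{Vertaalde_u}, but this was already carried out inside the proof of Theorem \ref{Thm1-to-1}, so it can simply be invoked.
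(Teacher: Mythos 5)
Your proposal is correct and follows essentially the same route as the paper's own proof: set $E_\om := E_{\om_\odot}$, invoke Theorem \ref{Thm1-to-1} and Proposition \ref{FermiKanaal} for existence and the listed properties, with uniqueness coming from the same density/separating-vector argument already used inside Proposition \ref{FermiKanaal}. The only difference is that you spell out that uniqueness argument explicitly, whereas the paper merely points back to it.
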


\begin{proof}
	The existence of such a channel follows from Proposition \ref{FermiKanaal} and Theorem \ref{Thm1-to-1} by simply setting $E_\om := E_{\om_\odot}$, and its uniqueness as a function follows from the same. 
	Since 
	$\om_\odot \in T^\g (\m , \n)$, 
	we know that $E_\om$ is even 
	(Proposition \ref{FermiKanaal}), 
	while normality and $\n \circ E_\om = \m$ was recalled in Subsection \ref{SubsecEvsom}.
\end{proof}
This in turn allows the following convenient restatement of Proposition \ref{fermBalE}, to be used in Subsection \ref{OndAfdWSim}.
\begin{cor}\label{ferBalEDirek}
	Given two $\bz_2$-graded systems $\va$ and $\vb$ and a fermionic transport plan 
	$\om \in T^{\F}(\m, \n )$, we have
	$$
	\va \om \vb
	$$
	if and only if
	$$
	E_\om \circ \a = \b \circ E_\om \, .
	$$
\end{cor}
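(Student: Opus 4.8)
The plan is to reduce this statement directly to Proposition \ref{fermBalE} by transporting it along the bijection of Theorem \ref{Thm1-to-1}, so that no new computation is required. Given $\om \in T^{\F}(\m , \n)$, the first step is to invoke Theorem \ref{Thm1-to-1} to write $\om = (\om_\odot)_{\ftps}$, where $\om_\odot \in T^\g(\m , \n)$ is the usual transport plan associated to $\om$. By Corollary \ref{ferE}, the channel attached to $\om$ is precisely $E_\om = E_{\om_\odot}$, so the condition $E_\om \circ \a = \b \circ E_\om$ is literally the same as $E_{\om_\odot} \circ \a = \b \circ E_{\om_\odot}$.

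Next I would apply Proposition \ref{fermBalE} to the usual transport plan $\om_\odot \in T^\g(\m , \n)$. That proposition states that $\va (\om_\odot)_{\ftps} \vb$ holds if and only if $E_{\om_\odot} \circ \a = \b \circ E_{\om_\odot}$. Since $(\om_\odot)_{\ftps} = \om$, the left-hand assertion $\va (\om_\odot)_{\ftps} \vb$ is exactly the condition $\va \om \vb$ of Definition \ref{TF(A,B)}: both are the balance condition \eqref{FOordPl} read off from the single state $\om$ on $\ca \ftp \cb^\wr$. Combining this with the identification $E_\om = E_{\om_\odot}$ from the previous step yields the claimed equivalence.

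I do not expect a genuine obstacle here, since all the analytic content has already been carried out — the evenness and twisted-dual argument in Proposition \ref{fermBalE}, and the surjectivity of $f$ (i.e.\ $(\om_\odot)_{\ftps} = \om$) in Theorem \ref{Thm1-to-1}. The only point requiring care is to confirm that the fermionic balance condition $\va \om \vb$ in Definition \ref{TF(A,B)} is genuinely the same assertion as the $\va (\om_\odot)_{\ftps} \vb$ appearing in Proposition \ref{fermBalE}, rather than a superficially different reformulation. This holds because \eqref{FOordPl} depends only on the state itself, and Theorem \ref{Thm1-to-1} gives the equality of states $\om = (\om_\odot)_{\ftps}$; hence the corollary follows purely by substitution.
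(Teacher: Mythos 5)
Your proposal is correct and is precisely the reduction the paper intends: the corollary is presented as a ``convenient restatement'' of Proposition \ref{fermBalE}, justified by writing $\om = (\om_\odot)_{\ftps}$ via Theorem \ref{Thm1-to-1} and identifying $E_\om = E_{\om_\odot}$ via Corollary \ref{ferE}. No further comment is needed.
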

\begin{rem}
	The bijection in Theorem \ref{Thm1-to-1} can be expected to preserve structure. For example, convex combinations of transport plans, since indeed for any
	$\om, \psi \in T^\g ( \m , \n )$ and $0 \leq \la \leq 1$ 
	we see from \eqref{omUitEom} that
	$$
	E_{\la\om + (1 - \la)\psi} = \la E_\om + (1 - \la)E_\psi \, ,
	$$
	while
	$
	\om_{\ftps} = \d^{\F}_\n \circ (E_\om \ftp \id_{\cb^\wr}).
	$
	For our purposes in Section \ref{SecFW2}, however, the bijection itself is sufficient.
\end{rem}
Corollary \ref{ferE} also allows us to define transport plans corresponding to the duals of $E_\om$. For example, the KMS dual
of $E_\om$ corresponds to the transport plan 
$\om^\s \in T(\n , \m )$,
which can be defined via
\begin{equation}\label{om^s}
	E_{\om^\s} = E_\om^\s \, .
\end{equation}

The theorem implies the corresponding results for systems:

\begin{corollary}
\label{sigma1-to-1}
The maps
\begin{equation}
	\label{1}
	T (\va^\g , \vb^\g) 
	\to T^{\F} (\va , \vb)
	: \om \mapsto \om_{\ftps}, 
\end{equation}
\begin{equation*}
	T_\s (\va^\g , \vb^\g) 
	\to T^{\F}_\s (\va , \vb)
	: \om \mapsto \om_{\ftps} 
\end{equation*}
and
\begin{equation*}
	T_{\s\s} (\va^\g , \vb^\g) 
	\to T^{\F}_{\s\s} (\va , \vb)
	: \om \mapsto \om_{\ftps} 
\end{equation*}
are well defined bijections.
\end{corollary}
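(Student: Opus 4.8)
The plan is to obtain all three bijections as restrictions of the single bijection $f$ from Theorem \ref{Thm1-to-1}, by verifying that the extra balance conditions carving out each subset on the left are matched, in both directions, by the conditions defining the corresponding subset on the right. The common mechanism is the channel $E_\om$ together with Proposition \ref{fermBalE}: for $\om \in T^\g(\m,\n)$ the fermionic balance $\va\,\om_{\ftps}\,\vb$ is equivalent to $E_\om \circ \a = \b \circ E_\om$, and, reading that proposition with a trivial grading, the usual balance $\va\,\om\,\vb$ is characterized by the \emph{same} identity. Since $f$ is already a bijection $T^\g(\m,\n) \to T^{\F}(\m,\n)$, restricting it to subsets cut out by conditions that $f$ transports bijectively automatically gives a bijection onto the target subset.

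First I would unpack $T(\va^\g,\vb^\g)$. As $\va^\g$ adds only the single dynamics $\a^\g_\g = \g_\ca$ to $\va$, and $(\g_\cb)' = \g_{\cb'}$ (noted after Definition \ref{duaalDef}), the balance condition indexed by $\g$ is exactly the grading condition defining $T^\g(\m,\n)$; hence $T(\va^\g,\vb^\g) = \{\om \in T^\g(\m,\n) : \va\,\om\,\vb\}$, and correspondingly $T^{\F}(\va,\vb)$ consists of the $\om_{\ftps}$ with $\om \in T^\g(\m,\n)$ and $\va\,\om_{\ftps}\,\vb$. The equivalence $\va\,\om\,\vb \Leftrightarrow \va\,\om_{\ftps}\,\vb$ from the previous paragraph then yields the bijection \eqref{1} immediately. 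For the modular refinement I would regard $(\ca,\s^\m,\m)$ and $(\cb,\s^\n,\n)$ as $\bz_2$-graded systems in their own right — their dynamics are even since $\g_\ca$ commutes with the modular group of the $\g_\ca$-invariant state $\m$ — and apply Proposition \ref{fermBalE} to them, so that both the usual and the fermionic modular conditions collapse to $E_\om \circ \s^\m = \s^\n \circ E_\om$. This upgrades the first bijection to the second.

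The KMS case is where the real work lies, and I expect it to be the main obstacle, since I must reconcile $(\va^\g)^\s$ with the fermionic KMS system. The crux is the identity $(\g_\ca)^\s = \g_\ca$: writing $\g_\ca = g_\ca(\cdot)g_\ca$ and using Example \ref{(anti)iso} to get $(\g_\ca)' = \g_{\ca'}$, one computes $j_\ca \circ \g_{\ca'} \circ j_\ca = g_\ca(\cdot)g_\ca$ after collapsing $J_\ca g_\ca J_\ca = g_\ca$ via \eqref{Jg=gJ} and $J_\ca^2 = 1$. Consequently $(\va^\g)^\s = (\va^\s)^\g$, so the $\g$-indexed part of $(\va^\g)^\s\,\om\,(\vb^\g)^\s$ is again just the grading condition already enforced by $\om \in T^\g(\m,\n)$, leaving exactly $\va^\s\,\om\,\vb^\s$. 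Applying Proposition \ref{fermBalE} to the systems $\va^\s$ and $\vb^\s$, both $\va^\s\,\om\,\vb^\s$ and $\va^\s\,\om_{\ftps}\,\vb^\s$ reduce to $E_\om \circ \a^\s = \b^\s \circ E_\om$. Combining this with the modular bijection gives the third bijection and completes the argument.
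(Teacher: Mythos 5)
Your proposal is correct and follows essentially the same route as the paper: restrict the bijection $f$ of Theorem \ref{Thm1-to-1}, use Proposition \ref{fermBalE} (and its non-graded case) to identify both the usual and fermionic balance conditions with $E_\om \circ \a = \b \circ E_\om$, and handle the KMS case via $\g_\ca^\s = \g_\ca$ so that $(\va^\g)^\s = (\va^\s)^\g$. Your write-up merely spells out more explicitly the unpacking of $T(\va^\g,\vb^\g)$ and the computation behind $\g_\ca^\s = \g_\ca$, which the paper leaves as brief parenthetical remarks.
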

\begin{proof}
	We simply have to restrict the bijection $f$ from Theorem \ref{Thm1-to-1} progressively, verifying that we obtain a well-defined surjection in each case.
	
Assuming that $\om \in T( \va^\g , \vb^\g )$,
that is $\om \in T^\g(\m , \n)$
and
\[
E_\om \circ \a = \b \circ E_\om
\] 
by the non-graded case of Proposition \ref{fermBalE}, 
we conclude that 
$\om_{\ftps} \in T^{\F} (\va , \vb)$ by Theorem \ref{Thm1-to-1}  and Proposition \ref{fermBalE}, which means (\ref{1}) is well-defined. 

On the other hand, for $\om \in T^{\F}( \va , \vb )$ we similarly have 
$\om \in T^{F}(\m , \n)$ 
and 
$E_\om \circ \a = \b \circ E_\om$, 
while $E_{\om_\odot} = E_\om$, hence 
$\om_\odot \in T(\va^\g , \vb^{\g})$.
Hence (\ref{1}) is surjective, thus bijective.

The other two follow in a similar way, keeping in mind that $\g_\ca^\s = \g_\ca$ and 
$\g_\cb^\s = \g_\cb$ 
(because of \eqref{Jg=gJ} and Example \ref{(anti)iso})
when considering the KMS duality requirements in $T_{\s\s}$.
\end{proof}

Corollary \ref{ferE} was used in the proof of this key corollary,
but will also be of great direct value in expressing the cost of transport subsequently leading to Wasserstein distances between $\bz_2$-graded systems treated in the next section.

\section{Fermionic Wasserstein distances}
\label{SecFW2}


Given the setup developed in the previous sections, it is now a fairly straightforward matter to translate known results about Wasserstein distances in the case of the usual tensor product to the Fermi case. 

To define fermionic Wasserstein distances, we first need to define a suitable cost of transport.
Our focus is on the quadratic case, extending quadratic Wasserstein distances in the classical case where the probability space is a bounded closed subset of $\br^n$.
As will be seen, we can define a cost leading to fermionic Wasserstein distances in the same way as for the usual tensor product. Subsequently fermionic Wasserstein distances are defined in terms of fermionic transport plans, but using Corollary \ref{sigma1-to-1} they can be reformulated in terms of usual transport plans. This will allow us to derive the metric properties. 
Our framework for fermionic transport plans in the preceding sections has been set up carefully with this goal in mind.


Throughout this section $\va$ and $\vb$ will denote $\bz_2$-graded systems as defined in Definition \ref{stelsel}. 
It will either be clear from context, or stated explicitly, whether the $\bz_2$-grading is being taken into account or not. 
All $\bz_2$-graded systems in this section 
are assumed to have the same
number of coordinates $d$ and index set $\U$, 
both taken as fixed throughout.
It is only now that the coordinates
$k_1 ,..., k_d \in \ca$ and $l_1 ,..., l_d \in \cb$
become relevant, serving as a generalization of classical coordinate functions on a bounded closed subset of $\br^d$, to be used in the definition of a quadratic Wasserstein distance between systems. 
In practice one could choose them as corresponding observables in the two systems respectively, for example 
1-particle state number operators (i.e., occupation numbers) in multi-particle  systems.

A simple way to define the cost of a fermionic transport plan, motivated by the usual case \cite[Section 7]{DSS}, is in terms of the corresponding channel $E_\omega$ from Corollary \ref{ferE}:

\begin{definition}
	\label{I}
	Assuming that $\va$ and $\vb$ are $\bz_2$-graded, the \emph{cost} of a fermionic transport plan 
	$\om \in T^{\F} (\va,\vb)$ is defined as
	\begin{equation}
		\label{ferKoste}
		I_{\va,\vb}(\om) 
		= 
		\sum_{i=1}^{d}
		\left[  \m(k_i^* k_i) + \n(l_i^* l_i) - \n(E_\om (k_i)^*l_{i}) - \n(l_i^* E_\om (k_i))
		\right]. 
	\end{equation}
\end{definition}
\begin{rem}
	To gain intuition about this form for the cost, view the transport plan $\om$ as dynamics, or a process, taking an operator
	$a$ (self-adjoint in case of an observable) of the first system, to an operator $E_\om (a)$ of the second.
	Then one could view 
	\[
	\om (a^* \odot b) = \n ( E_\om (a)^* b )
	\]
	as a measure of correlation between $a$
	after transport, and an operator $b$ of the second system.
	We can thus think of the transport plan $\om$ itself in these terms.
	To clarify it further, specialize to
	\[
	\om (a^* \odot a) = \n (E_\om (a)^* a) \, ,
	\]
	where we now consider transport on the same algebra (classically, within the same space),
	viewing $\n (E_\om (a)^* a)$ as a measure of correlation between $E_\om (a)$ and $a$. 
	Then we expect higher correlation to correspond to lower cost.
	That is: if $E_\om$ does less, then $\om$ costs less.
	
	One can also build intuition about symmetry of a Wasserstein distance from this correlation view.
	Expressing cost in terms of correlations 
	$
	\n ( E_\om (a)^* b ) 
	$
	or 
	$
	\n ( b^* E_\om (a) ) \, ,
	$
	%
	%
	we could expect that sufficient symmetry of correlation under an appropriate reversal of $\om$ as dynamics, may lead to 
	symmetry of a Wasserstein distance. Plausibly we could require this in terms of the KMS dual,
	\begin{equation}\label{KorSim}
	\n (E_\om (a)^* b) = \m (a^* E^\s_\om (b) ) \, ,
	\end{equation}
	as the KMS dual is the representation of the (twisted) dual in terms of the algebras themselves.
	The transport plan $\om^\s$ corresponds to $E_\om^\s$, as in \eqref{om^s}, in this sense thus being the reverse transport plan to $\om$.
	The form $\m ( E^\s_\om (b) a )$
	on the right would be more precise in corresponding to $\om^\s$ in the arguments above,
	but below, \eqref{KorSim} will be seen to be sufficient.
	Imposing \eqref{KorSim},
	we indeed immediately find symmetry of the general cost term
	$$
	Q_{a,b} (\om) 
	:=
	\m (a^* a) + \n(b^* b) - \n( E_\om (a)^* b ) - \n( b^* E_\om (a) )
	=
	Q_{b,a} (\om^\s) \, .
	$$
	Since this holds for all $a$ and $b$, it is strictly speaking more than we need for symmetry of a Wasserstein distance,
	but \eqref{KorSim} is equivalent to the condition
	$
	( \ca , \s^\m , \m ) \, \om \, ( \cb , \s^\n , \n ) 
	$
	defining $T^{\F}_\s ( \va , \vb )$ (see \cite[Section 5]{D22}),
	which is natural in our system context.
	To ensure symmetry of a Wasserstein distance between systems, however, we also need to know that $\om^\s$ is allowed as a transport plan from $\vb$ to $\va$, that is, we can expect to require $\vb \om^\s \va$.
	Because of Corollary \ref{ferBalEDirek}, this is equivalent to the additional condition $\va^\s \om \vb^\s$ set in the definition of 
	$T^{\F}_{\s\s} ( \va , \vb )$.
	
\end{rem}
Notice that the formula \eqref{ferKoste} for cost does not depend in any way on the tensor product being used. It depends on the transport plan only via the channel $E_\om$ representing it, which we saw in Corollary \ref{ferE} is the same for the corresponding fermionic and usual transport plans in terms of the bijection given by Theorem \ref{Thm1-to-1}.
This is a key point in our strategy to obtain fermionic Wasserstein distances.

We use the symbol
$$X$$
to denote a set of $\bz_2$-graded systems on which we intend to define Wasserstein distances.
As in the usual case, we have to optimize the cost over a particular set of fermionic transport plans in order to obtain (the square of) the corresponding fermionic Wasserstein distance on $X$:

\begin{definition}
	\label{W2}
	The \emph{fermionic Wasserstein distance} $W$\ on $X$ is defined  by
	\[
	W^{\F}(\va,\vb) := \inf_{\om \in T^{\F}(\va,\vb)} I_{\va,\vb}(\om)^{1/2},
	\]
	the \emph{modular fermionic Wasserstein distance} $W^{\F}_\s$\ on $X$ by
	\[
	W^{\F}_\s(\va,\vb) := \inf_{\om \in T^{\F}_\s(\va,\vb)} I_{\va,\vb}(\om)^{1/2},
	\]
	and the \emph{KMS fermionic Wasserstein distance} $W^{\F}_{\s\s}$\ on $X$ by
	\[
	W^{\F}_{\s\s}(\va,\vb) := \inf_{\om\in T^{\F}_{\s\s}(\va,\vb)} I_{\va,\vb}(\om)^{1/2},
	\]
	for all $\va,\vb\in X$, in terms of Definitions \ref{TF(A,B)} and
	\ref{I}.
\end{definition}

\begin{remark}
	\label{Ialt}Because $E_\om = E_{\om_\odot}$,
	it follows as in \cite[Section 2 and 3]{D22} that in terms of the representations from Subsection \ref{SubsecTrivgrad}, 
	one has
	$$
	I_{\va,\vb}(\om)
	=
	\left\|  
	\pi^{\om_\odot}_\m (k) \Om - \pi^{\om_\odot}_\n (l) \Om
	\right\| _{\oplus\om} ^2,
	$$
	where 
	$
	\p^{\om_\odot}_\m (k) \Om 
	\equiv 
	\left( 
	\p^{\om_\odot}_\m (k_1) \Om ,..., \p^{\om_\odot}_\m (k_d) \Om
	\right)
	\in \bigoplus_{l=1}^n \ch_{\om_\odot}
	$, 
	with $\left\| \cdot \right\| _{\oplus\om}$ the norm on the latter,
	etc. 
	This tells us that the distances defined in Definition \ref{W2} are indeed real and non-negative.
	It also gives a more intuitive
	idea of why we expect to obtain the triangle
	inequality for Wasserstein distances from this cost. 
	Indeed, in the usual case it follows from this form using relative tensor products of bimodules. 
	However, unlike \eqref{ferKoste}, tensor products play an explicit role here, both in setting up the representations from $\om$ and in the relative tensor product, which is why we rather opt for \eqref{ferKoste} when formulating the fermionic cost. 
\end{remark}

From the definition we clearly have
\[
W^{\F}(\va,\vb) 
\leq 
W^{\F}_{\s}(\va,\vb)
\leq
W^{\F}_{\s\s}(\va,\vb)
\]
for all $\va,\vb \in X$.

%
%

We are now in a position to prove metric properties of the functions 
$W^{\F}$, $W^{\F}_{\s}$ and $W^{\F}_{\s\s}$,
as well as the existence of optimal transport plans, by appealing to the corresponding properties from the usual case studied in \cite[Section 7]{DSS}. The triangle inequality in particular will substantiate the term ``distance", though it should be kept in mind that as in the usual case (see the examples in \cite[Sec 5.3]{D23} and \cite[Sec 4.3]{DSS}), not all these fermionic Wasserstein distances need be symmetric.

The Wasserstein distances in the usual case will be denoted as 
$$
W, W_\s \text{ and } W_{\s\s},
$$
respectively, obtained when ignoring the
gradings in Definition \ref{W2}.

\begin{definition}
	An \emph{optimal} transport plan for 
	$W^{\F}_{\s\s} (\va,\vb)$
	is an 
	$\om \in T^{\F}_{\s\s}(\va,\vb)$ 
	such that
	$I_{\va,\vb}(\om)^{1/2} = W^{\F}_{\s\s}(\va,\vb)$. 
	If an optimal transport plan exists for 
	$W^{\F}_{\s\s}(\va,\vb)$ 
	for all $\va,\vb \in X$,
	then we say that optimal transport plans for $W^{\F}_{\s\s}$
	\emph{always exist}.
	Analogously for $W^{\F}$ and $W^{\F}_\s$.
\end{definition}

We remind the reader that for any set $Y$, 
a real-valued function $\r$ on $Y \times Y$ is called 
an \emph{asymmetric pseudometric} if it satisfies the triangle
inequality, $\r(x,y) \geq 0$ and $\r(x,x) = 0$ for all $x,y\in Y$. 
We call $\r$ a \emph{pseudometric}
if furthermore $\r(x,y) = \r(y,x)$ for all $x,y \in Y$. 

\begin{theorem}
	\label{metries}
	The functions $W^{\F}$ and $W^{\F}_\s$ are asymmetric
	pseudometrics, 
	while $W^{\F}_{\s\s}$ is a pseudometric. 
	In addition, optimal transport plans always exist for 
	$W^{\F}$, $W^{\F}_\s$ and $W^{\F}_{\s\s}$.
\end{theorem}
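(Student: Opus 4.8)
The plan is to reduce everything to the corresponding results for the usual case in \cite[Section 7]{DSS}, exactly as the surrounding narrative anticipates. The crucial first step is to establish the identities
\[
W^{\F}(\va,\vb) = W(\va^\g,\vb^\g), \quad
W^{\F}_\s(\va,\vb) = W_\s(\va^\g,\vb^\g), \quad
W^{\F}_{\s\s}(\va,\vb) = W_{\s\s}(\va^\g,\vb^\g)
\]
for all $\va,\vb \in X$. These rest on two facts already in hand. First, Corollary \ref{sigma1-to-1} provides bijections $T(\va^\g,\vb^\g) \to T^{\F}(\va,\vb)$, $\om \mapsto \om_{\ftps}$, and likewise restricting to the modular and KMS subsets. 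Second, Corollary \ref{ferE} gives $E_{\om_{\ftps}} = E_\om$, so that the channel representing a usual transport plan and the channel representing its fermionic image coincide. Since the cost \eqref{ferKoste} depends on a transport plan only through its channel (and through the fixed data $\m,\n,k,l$, which are common to $\va,\vb$ and to $\va^\g,\vb^\g$), one has $I_{\va,\vb}(\om_{\ftps}) = I_{\va^\g,\vb^\g}(\om)$. Taking the infimum over each matching pair of plan sets then yields the three displayed identities.

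Granting these, every assertion of the theorem is inherited from the usual case. Reality and non-negativity of the cost follow from Remark \ref{Ialt}, so each square root is well defined; vanishing on the diagonal is immediate from $W^{\F}(\va,\va) = W(\va^\g,\va^\g) = 0$. The triangle inequality is obtained by applying the usual triangle inequality to the augmented systems $\va^\g,\vb^\g,\vc^\g$: these are legitimate systems in the sense of Definition \ref{stelsel}, sharing the common augmented index set $\U \sqcup \{\g\}$ and the same $d$, so the results of \cite[Section 7]{DSS} apply verbatim, giving $W^{\F}(\va,\vc) = W(\va^\g,\vc^\g) \leq W(\va^\g,\vb^\g) + W(\vb^\g,\vc^\g) = W^{\F}(\va,\vb) + W^{\F}(\vb,\vc)$, and likewise for the modular and KMS versions. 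Finally, the symmetry of $W^{\F}_{\s\s}$ follows from that of $W_{\s\s}$. For the existence of optimal transport plans, the cost-preserving bijection carries an optimal usual plan for $W(\va^\g,\vb^\g)$ to an optimal fermionic plan for $W^{\F}(\va,\vb)$, so existence transfers from the usual case; the same argument applies in the modular and KMS settings.

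I do not expect a genuine obstacle here, since the difficulty has been front-loaded into the correspondence theory of Section \ref{FTPvsTP}. The only point that warrants care, rather than real work, is confirming that the cost is genuinely preserved under the bijection \emph{and} that the constrained subsets $T^{\F}_\s, T^{\F}_{\s\s}$ correspond to $T_\s, T_{\s\s}$ on the augmented systems; but this correspondence is precisely the content of Corollaries \ref{ferE} and \ref{sigma1-to-1}, so it may be invoked directly. Once the three identities relating $W^{\F}$ to $W$ are recorded, the metric statements and the existence of optimal plans are immediate transfers.
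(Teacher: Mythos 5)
Your proposal is correct and follows essentially the same route as the paper: both reduce the fermionic distances to the usual ones via the identities $W^{\F}(\va,\vb)=W(\va^\g,\vb^\g)$ (and their modular and KMS analogues), justified by the cost-preservation $E_\om=E_{\om_\odot}$ from Corollary \ref{ferE} together with the bijections of Corollary \ref{sigma1-to-1}, and then invoke \cite[Theorem 7.10]{DSS}. No gaps.
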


\begin{proof}
	By Corollary \ref{ferE} we know that for all $\om \in T^{\F}(\m , \n)$,
	$$E_\om = E_{\om_\odot}$$
	for the corresponding usual transport plan $\om_\odot \in T(\va , \vb)$.
	Therefore  \eqref{ferKoste} is exactly the cost used for the usual tensor product in \cite{D22, D23, DSS} when applying it to $\om_\odot$  instead of $\om$.
	Consequently, for a (usual) transport plan between not necessarily $\bz_2$-graded systems  $\va$ and $\vb$,
	$$\om \in T(\va , \vb),$$
    we can still use the notation 
		\begin{equation}\label{Koste}
			I_{\va,\vb}(\om) 
			= 
			\sum_{i=1}^{d}
			\left[  \m(k_i^* k_i) + \n(l_i^* l_i) - \n(E_\om (k_i)^*l_{i}) - \n(l_i^* E_\om (k_i))
			\right]
			\end{equation}
	without causing confusion. 
	In particular, given $\bz_2$-graded systems  $\va$ and $\vb$, and any
	$$\om \in T^{\F}(\va,\vb),$$ 
	we then have
	$$
	I_{\va,\vb} (\om) = I_{\va^\g,\vb^\g} (\om_\odot).
	$$
	Therefore, by Corollary \ref{sigma1-to-1}, we have
	\[
	W^{\F}(\va,\vb) 
	  = \inf_{\om \in T(\va^\g,\vb^\g)} I_{\va^\g,\vb^\g}(\om)^{1/2} 
   	  = W(\va^\g,\vb^\g),
	\]
	\[
	W^{\F}_\s (\va,\vb) 
	  = \inf_{\om \in T_\s(\va^\g,\vb^\g)} I_{\va^\g,\vb^\g}(\om)^{1/2}
	  = W_\s (\va^\g,\vb^\g),
	\]
	and 
	\[
	W^{\F}_{\s\s} (\va,\vb) 
	  = \inf_{\om\in T_{\s\s}(\va^\g,\vb^\g)} I_{\va^\g,\vb^\g}(\om)^{1/2}
	  = W_{\s\s} (\va^\g,\vb^\g) 
	\]
	for all $\va,\vb\in X$,
	which expresses the fermionic Wasserstein distances in terms of the corresponding usual Wasserstein distances from \cite[Section 7]{DSS}. 
	The required properties now follow from the corresponding properties of the usual case, \cite[Theorem 7.10]{DSS}.
\end{proof}

The remaining property, namely ``faithfulness", will be obtained next in the case of appropriate coordinates, 
also taking account of the fact that we allow systems on different algebras with different coordinate systems. 
By faithfulness of a pseudometric $\r$ as above, we of course mean that $\r(x,y) = 0$ for any $x,y \in Y$ implies $x = y$, which would tell us that $\r$ is in fact a metric.
For an asymmetric pseudometric the natural condition making it an asymmetric metric may instead be that $\r(x,y) = 0$ together with $\r(y,x) = 0$ should imply $x = y$.
Developing asymmetric metrics along these lines is discussed at length in \cite{Men13, Men14}.
 	
%

%
%
%

To prepare the ground, we introduce a refinement of the coordinates:

\begin{definition}
	A system $\va$, whether $\bz_2$-graded or not, is called \emph{hermitian} if 
	$\{k_1^* ,..., k_d^*\} = \{k_1 ,..., k_d\}$.
\end{definition}

Note that here we do not require $k_i^* = k_i$. 
To obtain faithfulness of our fermionic Wasserstein distances in the hermitian case, we again employ the translation strategy, using the corresponding result from \cite{DSS}. 
Since we allow systems on different von Neumann algebras, we have to formulate it via isomorphism of systems, as was done in \cite{DSS}. 

\begin{definition}
	\label{isoDef}
	Two $\bz_2$-graded systems $\va$ and $\vb$ are called \emph{isomorphic} if
	there is a $*$-isomorphism $\io : \ca \to \cb$ such that
	$\io \circ \g_\ca = \g_\cb \circ \io$,
	$\io \circ \a = \b \circ \io$, 
	$\n \circ \io = \m$, and
	$\io(k_i)=l_i$ for $i = 1,...,d$.
	Then $\io$ is called an isomorphism of $\va$ and $\vb$.
\end{definition}

The point is that the $*$-isomorphism $\io$ in this definition simply preserves all the structure in the definition of a $\bz_2$-graded system, hence $\va$ and $\vb$  are at most different representations of the same $\bz_2$-graded system.
If the gradings in Definition \ref{isoDef} are taken to be trivial, we recover isomorphism of systems as defined in \cite[Definition 8.12]{DSS}.
As the latter is the special case of trivial grading, and we focus on the graded setting, no confusion will arise about the type of isomorphism in question, though we'll emphasize the point in the proofs below.

First up is faithfulness of $W^{\F}$, which we need to prove in some detail, as this aspect was only covered for $W_\s$, and thus effectively also for $W_{\s\s}$, in \cite{DSS}. Similar to the latter paper, $W^{\F}_\s$ and $W^{\F}_{\s\s}$ are the more important cases  for us, but it is instructive to contrast them with $W^{\F}$.

As mentioned above, keep in mind that if there is a possibility that $\r$ may not be symmetric, then it is natural to postulate both $\r(x,y)=0$ and $\r(y,x) = 0$ as requirements to imply $x = y$. 
This is indeed what is done in the next proposition in the case of $W^{\F}$.

\begin{prop}
	\label{AsimMetrieseGetrouheid}
	Consider two hermitian $\bz_2$-graded systems $\va$ and $\vb$, 
	with $\ca$ generated by $k_1,...,k_d$ and $\cb$ by $l_1,...,l_d$. 
	If $W^{\F}(\va,\vb) = 0$ and $W^{\F}(\vb,\va) = 0$, then $\va$ and $\vb$ are isomorphic.
\end{prop}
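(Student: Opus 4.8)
The plan is to realise the required isomorphism as the channel attached to an optimal fermionic transport plan for $W^{\F}(\va,\vb)=0$, and to use the reverse distance $W^{\F}(\vb,\va)=0$ to invert it. By Theorem \ref{metries} optimal transport plans always exist, so there are $\om\in T^{\F}(\va,\vb)$ with $I_{\va,\vb}(\om)=0$ and $\psi\in T^{\F}(\vb,\va)$ with $I_{\vb,\va}(\psi)=0$. Corollary \ref{ferE} attaches to them even normal channels $E_\om:\ca\to\cb$ and $F:=E_\psi:\cb\to\ca$ with $\n\circ E_\om=\m$ and $\m\circ F=\n$, and by Corollary \ref{ferBalEDirek} the intertwinings $E_\om\circ\a=\b\circ E_\om$ and $F\circ\b=\a\circ F$. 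The goal is to show that $E_\om$ is a $*$-isomorphism with inverse $F$; it then satisfies every requirement of Definition \ref{isoDef}, the evenness condition $\io\circ\g_\ca=\g_\cb\circ\io$ being exactly evenness of $E_\om$.

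First I would identify the action on generators. Writing the cost in the vector form of Remark \ref{Ialt}, the vanishing $I_{\va,\vb}(\om)=0$ forces $\pi^{\om_\odot}_\m(k_i)\Om=\pi^{\om_\odot}_\n(l_i)\Om$ for each $i$. Feeding this into formula \eqref{Eformule} for $E_{\om_\odot}=E_\om$, and using $u_\n\L_\n=\Om$ together with $\pi^{\om_\odot}_\n(l_i)\Om\in\ch^{\om_\odot}_\n$ (so that $P_\n$ acts trivially), one finds $E_\om(k_i)\L_\n=l_i\L_\n$; since $\L_\n$ is separating for $\cb$ this gives $E_\om(k_i)=l_i$, and symmetrically $F(l_i)=k_i$. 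The same vector identity yields the norm bookkeeping $\m(k_i^{*}k_i)=\n(l_i^{*}l_i)$, and, read off the adjoint coordinate $k_i^{*}$, which the hermitian hypothesis keeps among the generators (with $E_\om$ being $*$-preserving forcing $E_\om(k_i^{*})=l_i^{*}$), also $\m(k_ik_i^{*})=\n(l_il_i^{*})$.

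The crux is to upgrade $E_\om$ from a map sending generators to generators into a genuine homomorphism, and here the hermitian assumption does the work. As a channel, $E_\om$ obeys both Schwarz inequalities $E_\om(k_i)^{*}E_\om(k_i)\le E_\om(k_i^{*}k_i)$ and $E_\om(k_i)E_\om(k_i)^{*}\le E_\om(k_ik_i^{*})$. Applying $\n$ and invoking the bookkeeping above turns each into an equality of $\n$-values, and faithfulness of $\n$ promotes these to operator equalities. Thus each $k_i$ lies in the multiplicative domain of $E_\om$, which for a normal channel is a $\sigma$-weakly closed $*$-subalgebra and therefore contains $\{k_1,\dots,k_d\}''=\ca$. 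Hence $E_\om|_\ca$ is a normal $*$-homomorphism, necessarily onto $\{l_1,\dots,l_d\}''=\cb$, and the same holds for $F$. I expect this step --- extracting \emph{both} multiplicative-domain conditions, which is precisely where hermitianness is indispensable --- to be the main obstacle; the remainder is bookkeeping.

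Finally I would invert. The composite $F\circ E_\om$ is a normal $*$-homomorphism of $\ca$ fixing each generator $k_i$, hence the identity on the $\sigma$-weakly dense $*$-algebra they generate and, by normality, $F\circ E_\om=\id_\ca$; symmetrically $E_\om\circ F=\id_\cb$. Therefore $\io:=E_\om$ is a $*$-isomorphism of $\ca$ onto $\cb$ with $\io(k_i)=l_i$, $\n\circ\io=\m$, $\io\circ\a=\b\circ\io$ and $\io\circ\g_\ca=\g_\cb\circ\io$; by Definition \ref{isoDef}, $\va$ and $\vb$ are isomorphic.
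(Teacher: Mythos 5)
Your proposal is correct, and its skeleton coincides with the paper's: obtain optimal plans $\om$ and $\psi$ from Theorem \ref{metries}, show their channels are mutually inverse $*$-isomorphisms carrying $k_i$ to $l_i$, and check the remaining conditions of Definition \ref{isoDef} via Corollaries \ref{ferE} and \ref{ferBalEDirek}. The difference is in how the key step is handled. The paper first reduces to the non-graded case (using $W^{\F}(\va,\vb)=W(\va^\g,\vb^\g)$ from the proof of Theorem \ref{metries}) and then cites \cite[Lemma 8.10]{DSS} as a black box for the fact that zero cost forces $E_\om$ to be a $*$-homomorphism with $E_\om(k_i)=l_i$. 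You instead work directly with the fermionic plans (legitimate, since $E_\om=E_{\om_\odot}$ and the cost is tensor-product independent) and prove that fact from scratch: zero cost gives $\pi^{\om_\odot}_\m(k_i)\Om=\pi^{\om_\odot}_\n(l_i)\Om$, hence $E_\om(k_i)\L_\n=l_i\L_\n$ and $\m(k_i^*k_i)=\n(l_i^*l_i)$, and the two Schwarz inequalities together with faithfulness of $\n$ and the hermitian hypothesis place each $k_i$ in the multiplicative domain of $E_\om$, which is a $\sigma$-weakly closed $*$-subalgebra for a normal channel and therefore all of $\ca$. This is a clean, self-contained substitute for the cited lemma, and it isolates exactly where the hermitian hypothesis enters (the second multiplicative-domain condition $E_\om(k_ik_i^*)=E_\om(k_i)E_\om(k_i)^*$); the paper's route buys brevity by leaning on \cite{DSS}, yours buys a proof readable without that reference. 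Both correctly use the two-sided hypothesis $W^{\F}(\va,\vb)=W^{\F}(\vb,\va)=0$ to produce the inverse, which is the point of the proposition.
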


\begin{proof}
	In keeping with our translation strategy, we first prove it for usual systems $\va$ and $\vb$ (with trivial gradings) using results from \cite{DSS}, and then obtain the $\bz_2$-graded case from it. Assume that $W(\va,\vb) = 0$ and $W(\vb,\va) = 0$. We need to show that $\va$ and $\vb$ are isomorphic as in Definition \ref{isoDef} for the case of trivial gradings.
	
	From the non-graded case of Theorem \ref{metries} 
	we know optimal transport plans exist, that is, 
	$\om \in T(\va , \vb)$ and $\psi \in T(\vb , \va)$
	such that 
	$I_{\va , \vb}(\om) = 0$ and $I_{\vb , \va}(\psi) = 0$.
	According to \cite[Lemma 8.10]{DSS} we then have that $E_\om$ and $E_\psi$ are $*$-homomorphisms respectively satisfying 
	$E_\om(k_i)  = l_i$ and $E_\psi(l_i) = k_i$.
	It follows that they are each other's inverses, making 
	$$\io = E_\om$$
	a $*$-isomorphism from $\ca$ to $\cb$. Since $\om \in T(\va , \vb)$, we know from the non-graded cases of Corollaries \ref{ferE} and \ref{ferBalEDirek} that $E_\om$ indeed has all the properties to make it an isomorphism of the systems $\va$ and $\vb$.
	
	Thus the result holds for trivial gradings. 
	To translate it to the general graded case, simply note that 
	$W^{\F}(\va,\vb) = 0$ and $W^{\F}(\vb,\va) = 0$ 
	means that 
	$W(\va^\g , \vb^\g) = 0$ and $W(\vb^\g , \va^\g) = 0$,
	as can be seen from the proof of Theorem \ref{metries}.
	Applying the result above to this, we indeed obtain the proposition.
\end{proof}

For $W^{\F}_\s$ and $W^{\F}_{\s\s}$ we can instead simply translate the corresponding result from \cite{DSS} for the usual case. Here only the assumption $\r(x,y)=0$ is needed, even for $W^{\F}_\s$. 
As can seen in the proof of \cite[Theorems 8.5 and 8.11]{DSS}, the modular condition in $T^{\F}_\s$ provides enough structure. 
In particular, it gives
$E_\om \circ \s^\m = \s^\n \circ E_\om$,
ensuring that we can reverse a transport plan from $\va$ to $\vb$ sufficiently well to obtain a transport plan from $\vb^\s$ to $\va^\s$. This is good enough to set up an isomorphism between $\va$ and $\vb$ despite the KMS duals $\vb^\s$ and $\va^\s$ not appearing in the definition of $T_{\s}(\va , \vb)$.
This structure is missing in Proposition \ref{AsimMetrieseGetrouheid}, which is why we required both
$W^{\F}(\va,\vb) = 0$ and $W^{\F}(\vb,\va) = 0$ 
in it.

\begin{prop}
	\label{MetrieseGetrouheid}
	Consider two hermitian $\bz_2$-graded systems $\va$ and $\vb$, 
	with $\ca $ generated by $k_1,...,k_d$ and $\cb$ by $l_1,...,l_d$. 
	If $W^{\F}_\s(\va,\vb) = 0$, then $\va$ and $\vb$ are isomorphic.
\end{prop}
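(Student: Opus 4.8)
The plan is to follow the same translation strategy used in the proof of Proposition \ref{AsimMetrieseGetrouheid}, but now invoking the stronger modular-case faithfulness result, so that only the single hypothesis $W^{\F}_\s(\va,\vb) = 0$ is required. First I would observe, exactly as in the proof of Theorem \ref{metries}, that
$$W^{\F}_\s(\va,\vb) = W_\s(\va^\g,\vb^\g),$$
so the hypothesis is equivalent to $W_\s(\va^\g,\vb^\g) = 0$ for the associated non-graded systems $\va^\g$ and $\vb^\g$, in which the grading has been absorbed into the dynamics. Since the construction of $\va^\g$ leaves the coordinates untouched, $\va^\g$ and $\vb^\g$ are again hermitian and generated by $k_1,\ldots,k_d$ and $l_1,\ldots,l_d$ respectively, so the hypotheses of the usual-case result apply verbatim.

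Next I would apply the faithfulness result for $W_\s$ in the usual case, namely \cite[Theorem 8.5]{DSS}, to conclude that $\va^\g$ and $\vb^\g$ are isomorphic as (non-graded) systems. The point emphasized in the discussion preceding the proposition is that, by the non-graded case of Theorem \ref{metries}, an optimal plan $\om \in T_\s(\va^\g,\vb^\g)$ with vanishing cost exists, and the modular condition built into $T_\s$ supplies the intertwining $E_\om \circ \s^\m = \s^\n \circ E_\om$. It is precisely this extra structure that lets one reverse $\om$ well enough to manufacture an isomorphism from a single vanishing distance, without additionally assuming $W_\s(\vb^\g,\va^\g) = 0$. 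The output is a $*$-isomorphism $\io = E_\om : \ca \to \cb$ intertwining all the dynamics of $\va^\g$ and $\vb^\g$, preserving the states, and sending $k_i \mapsto l_i$.

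Finally I would upgrade this to an isomorphism of $\bz_2$-graded systems in the sense of Definition \ref{isoDef}. The crucial observation — and the only place the grading re-enters — is that $\va^\g$ and $\vb^\g$ carry $\g_\ca$ and $\g_\cb$ as one of their dynamics, namely the component indexed by the symbol $\g$. Hence the relation $\io \circ \a^\g = \b^\g \circ \io$ for all indices, delivered by the usual-case isomorphism, contains in particular $\io \circ \g_\ca = \g_\cb \circ \io$. Combined with $\io \circ \a = \b \circ \io$, $\n \circ \io = \m$, and $\io(k_i) = l_i$, this is exactly the data demanded by Definition \ref{isoDef}, so $\va$ and $\vb$ are isomorphic as $\bz_2$-graded systems.

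The anticipated obstacle is not conceptual but one of bookkeeping: one must check that the isomorphism produced by \cite[Theorem 8.5]{DSS} genuinely respects the grading-as-extra-dynamics, i.e. that $\g_\ca$ and $\g_\cb$ sit as the matching components $\a^\g_\g$ and $\b^\g_\g$ carrying the \emph{same} index $\g$, so that their intertwining is forced rather than merely plausible. This is guaranteed by the very construction of $\va^\g$, but it should be stated explicitly, since it is the whole reason the device of folding the grading into the dynamics succeeds in transporting faithfulness from the usual to the fermionic setting.
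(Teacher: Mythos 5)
Your proposal is correct and follows essentially the same route as the paper: translate via $W^{\F}_\s(\va,\vb) = W_\s(\va^\g,\vb^\g)$, invoke the usual-case faithfulness of $W_\s$ from \cite{DSS} (the paper cites \cite[Corollary 8.14]{DSS}, the version for systems on possibly different algebras), and note that isomorphism of $\va^\g$ and $\vb^\g$ as usual systems is by construction exactly isomorphism of $\va$ and $\vb$ as $\bz_2$-graded systems, since the gradings sit as matching components of the dynamics. Your extra bookkeeping remarks are sound but not needed beyond what the paper's one-line proof already contains.
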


\begin{proof}
	Note that  $W_\s (\va^\g,\vb^\g) = W^{\F}_\s(\va,\vb) = 0$ implies that 
	$\va^\g$ and $\vb^\g$ are isomorphic (as usual systems) 
	according to \cite[Corollary 8.14]{DSS}, which means precisely that $\va$ and $\vb$ are isomorphic as $\bz_2$-graded systems.
\end{proof}

As $W^{\F}_{\s} \leq W^{\F}_{\s\s}$, this proposition clearly holds for $W^{\F}_{\s\s}$ as well. This completes our basic theory for fermionic Wasserstein distances. 

\section{Deviation from fermionic detailed balance}
\label{SecAfwVanFB}

To illustrate the use of fermionic Wasserstein distances, we employ them to place bounds on the deviation of a $\bz_2$-graded system from satisfying fermionic detailed balance (FDB), in terms of any $\bz_2$-graded system that does satisfy it. This form of detailed balance was introduced in concrete form in \cite{D18} and studied further in an abstract setting in \cite{CDF}.

However, \cite{CDF} just gave a bare bones abstract formulation, rather than a complete development.
Here we develop the abstract von Neumann algebraic framework for FDB further by refining it, motivated by properties found in a concrete lattice example, and to fit it more precisely into the framework of standard quantum detailed balance (SQDB) convenient for our Wasserstein distances.

The core idea behind FDB
is that (a copy of) a $\bz_2$-graded system is equal to a twisted dual of the system, in step with a standard approach to usual detailed balance conditions. The problem is how to fully realize this idea in the fermionic case, which is what we investigate in the first and second subsection below.
In Subsection \ref{OndAfdSQDB} we show that FDB is in fact a case of SQDB with respect to a reversing operation. 
However, we argue that our approach via ``copying'' rather than using a reversing operation from the outset, 
is a conceptually clearer route to SQDB, 
preserving the simple view of detailed balance as dynamics being its own reverse.

\begin{sloppypar}
In Subsection \ref{OndAfdWSim} we turn to symmetries of the fermionic Wasserstein distances, 
which when combined with the triangle inequality,
lead to the mentioned bounds. The symmetries form an analogue of symmetries studied in the usual (i.e., non-graded) case in \cite[Section 8]{DSS}, and relate to a twisted dual and copying transformation of a $\bz_2$-graded systems relevant to FDB.
These symmetries are a key component of our approach to deviation from detailed balance, and illustrate the significance of studying symmetries of quantum Wasserstein distances.
\end{sloppypar}

As a simple picture, think of Fermionic Wasserstein distances as based on 
correlations allowed between systems involving indistinguishable fermions. For instance, two systems connected to the same reservoir of fermions, along the lines described in the lattice example below. This restriction on correlations is thus incorporated in our approach to deviation from FDB.

A very different approach to measure deviation from equilibrium in quantum systems in terms of a Wasserstein-type distance between states was developed in \cite{Ag}.

\subsection{Fermionic detailed balance}
\label{SubsecFDB}

Throughout this subsection we consider a $\bz_2$-graded system 
$\va = (\ca,\a,\mu,k)$,
but the coordinates $k$ won't play a role. 
They will only become relevant in Subsection \ref{OndAfdWSim}. 

FDB of $\va$ is formulated abstractly as a condition of the form
$\a^\vk = \a^\wr$.
We still need to describe precisely what is meant by this equation, with the left being a copy of $\va$'s dynamics expressed on the von Neumann algebra $\ca^\wr$, while the right is the twisted dual dynamics.


We briefly review a concrete lattice example to motivate our abstract setup for FDB. 
In particular, we point out properties of copying in the concrete case which were not discussed in \cite{D18}, nor built into the theory in \cite{CDF}, but will become relevant in the theory to follow.
\begin{tra} 
%
%
%
%
	Consider a finite dimensional or separable infinite dimensional
	Hilbert space $h$, as the state space (of pure states) of a single fermion in a finite or countable lattice.
	Denote the resulting Fermi 
	Fock space as $\cg$, with $\langle \cdot,\cdot\rangle$ as inner product,
	and $f_{\varnothing }$ as
	the vacuum vector. 
	
	The lattice $L$ indexes an orthonormal basis for $h$, 
	for example a discrete set of momentum states $(e_{l})_{l\in L}$ in finite volume, and we use the notation
		$a_{l}^{\dagger}$ and $a_{l}$, with $l\in L$,
	for the corresponding creation and annihilation operators. 
	Refer for instance to \cite[Section 5.2]{BR2} for a mathematical treatment of this. 
	We also write
	\begin{equation*}
		f_{(l_{1},\ldots,l_{n})}:=a_{l_{1}}^{\dagger}...a_{l_{n}}^{\dagger}f_{\varnothing}
	\end{equation*}
	for $l_{1},\ldots,l_{n}\in L$, and $n\geq 1$. For sequences $s=(s_1, \ldots, s_m)$ and $t=(t_1, \ldots, t_n)$ in $L$, we write
	$st := (s_1,\ldots,s_m , t_1,\ldots,t_n) \in D_L$. 
	In case $t=\varnothing$ or $s=\varnothing$, one has $st=s$ or $st=t$, respectively.
	
	For any subset $M$ of $L$, let $D_M$ be a set of finite sequences
	$(s_1,...,s_m)$ in $M$, for $m=0,1,2,3,...$, with $s_q\neq s_r$ when
	$q\neq r$, such that each finite subset of $M$ corresponds to exactly one
	element of $D_M$. The empty subset of $M$ corresponds to the sequence with
	$n=0$, denoted by $\varnothing\in D_M$. 
	Note that $D_M$ is countable for infinite $M$, 
	and that $(f_s)_{s\in D_L}$ is an orthonormal basis for $\cg$. 
	
	Let 
	$$\ca(M)$$
	be the von Neumann algebra generated by $\{ a_l : l\in M \}$.
	The goal is to study a system on $\ca(M)$, the latter thus playing the role of $\ca$ in the abstract theory.
	At least algebraically, 
	$\ca(M) \ftp \ca(L \backslash M)$ is then the algebra for the entire lattice $L$.
	In addition, the remaining part $L \backslash M$ of the lattice will soon provide us with the twisted commutant of $\ca(M)$, 
	placing us suitably in the abstract framework of the paper.
	To achieve this, let $\io : M \to L$ be an injection such that 
	$M \cap \io(M) = \varnothing$, 
	where we of course assume that $M$ is chosen that this is possible.
	For our purposes we assume that $L = M \cup  \io(M)$,
	thus we work in terms of
		$\iota :M \to L\backslash M$.  \label{komplBij}
	
	One realization of this consists of two identical but mutually non-interacting lattice systems (e.g., bounded cavities allowing only discrete sets of 1-particle momentum states), both connected to the same large reservoir of indistinguishable fermions. The one lattice is $M$, the other $L \backslash M$, while $\io$ maps between corresponding points in the two copies.
	
	
	Consider a faithful normal state $\m = \tr(\r_M\,\cdot)$, where in this example $\r_M$ is taken as a diagonal density matrix
	\begin{equation*}
		\r_M = \sum_{s\in D_M}p_{s}f_{s}\Join f_{s}\,,  \label{roI}
	\end{equation*}
	$(p_{s})_{s\in D_M}$ being non-zero probabilities, with  
	$x\Join y\in B(\cg)$ defined as
		$(x\Join y)z := x \left\langle y,z \right\rangle = \left\langle y,z \right\rangle x$
	for all $x,y,z \in \cg$.
	Construct the corresponding fermionic entangled vector
	\begin{equation*}
		\L_\m := \sum_{s \in D_M} p_s^{1/2} f_{s\io (s)} \in \cg  
	\end{equation*}
	where 
	$\io (s)=(\io (s_1),...,\io (s_m))$ if $s = (s_1,...,s_m)$.
	It is straightforward to show that
	$$\m(a) = \left\langle \L_\m , a\L_\m \right\rangle$$
	for all $a \in \ca(M)$.
	Let $g :\cg \to \cg$ be the self-adjoint unitary defined by
	\begin{equation*}
		g f_s
		=
		\left\{
		\begin{array}{cc}
			  f_{s} & \text{ \ if } s \text{ has even length}\,, \\
			-f_{s} & \text{ \ if } s \text{ has odd length}\,,
		\end{array}
		\right.  \label{tralGrad}
	\end{equation*}
	for all $s\in D_L$. This induces a $\bz_2$-grading $\g$ on $\ca(M)$, making $\m$ even.

	As described in \cite[Section 12]{CDF}, a result from \cite{BJL} (in terms of \cite{A1}, \cite{A2} and \cite{A1987}), can then be used to confirm the following:
	
	The vector $\L_\m$ 
     is cyclic and separating  for $\ca(M)$ in $\cg$, 
     so $\ca(M)$ is in standard form, and
	\begin{equation*}
		\ca(M)^{\wr }=\ca(L\backslash M). 
	\end{equation*}
	

	In \cite{D18}, fermionic detailed balance was defined for this lattice in the case of finite $M$. Here we briefly recall the main points, though without the assumption that $M$ is finite.
	We consider a channel
	$\a :\ca(M) \to \ca(M)$, which we aim to copy to $\ca(M)^\wr$.
	To do this, let 
	$$\vk : \ca(M) \to \ca(M)^\wr$$ 
	be the $*$-isomorphism satisfying 
	\begin{equation*}
		\vk (a_{l}) = a_{\io (l)}\,, \quad l\in M\,.   
	\end{equation*}
	One can check that it is given by 
	$\vk(a) = KaK^*$ 
	where the unitary $K : \cg \to \cg$ is defined through
	$$Kf_{s \io(t)} = (-1)^{ (|-s| + 1) |t| }  f_{t \io(s)}$$
	for all $ s, t, \in D_M$, with $| \cdot |$ indicating the length of a string.
	Then we take
	\begin{equation*}
		\a^\vk := \vk \circ \a \circ \vk^{-1} : \ca(M)^\wr \to \ca(M)^\wr 
	\end{equation*}
	as the copy of $\a$ on $\ca(M)^\wr$ given by $\vk$.
	Motivated by standard quantum detailed balance w.r.t. a reversing operation originally studied in the series of papers \cite{FU, FR10, FU2012, FR, FR15b, FU2012b}, 
	\emph{fermionic detailed balance} (FDB) of $(\ca(M) , \a , \m)$ was defined in
	\cite{D18} by a condition which is equivalent to 
	\begin{equation}\label{einddinFFB}
		\a^\vk = \a ^\wr,
	\end{equation}
	assuming the invariance
	$$\m \circ \a = \m.$$ 
	Here we used some duality theory mentioned in \cite[Theorem 12.4]{CDF} and its proof, to rewrite \cite{D18}'s condition, as was indeed also done in \cite[Section 13]{CDF}.
	
	In the usual case, without a grading, the dual $\a'$ of $\a$ represents a reverse of the latter in terms of what one can call ``elementary transitions" in $\a$, as was argued in finite dimensions in \cite[Section 5]{DOV}.
	This is a quantum analogue of replacing each transition in the transition matrix of a classical Markov chain by its reverse. 
	Furthermore, the channel $\a$ being its own reverse is exactly (standard) quantum detailed balance as discussed in \cite[Section 5]{DOV}, in the absence of parity in the system involved. 
	At least in a finite dimensional setting one can express the dual channel directly on the algebra itself, since the commutant of the standard form of $M_n$ can also be represented as $M_n$ because of $(M_n \otimes 1_n)' = 1_n \otimes M_n $. This indeed makes direct comparison between a $M_n \to M_n$ channel and its dual possible, as was done in \cite[Example 5.2]{DSS} and \cite[Section 5]{DOV}.
	
	This is in exact analogy to the simple classical condition of equality of the probability of each transition to that of the opposite transition, as set out in \cite{DOV}.
	This singles out SQDB among other forms of quantum detailed balance as the most appropriate quantum extension of classical detailed balance, as was argued in \cite{DOV}. 
	
	In the fermionic case above we correspondingly use the twisted dual instead of the dual. However, in order to express equality of the channel $\a$ to its twisted dual $\a^\wr$, we first have to copy $\a$ to $\a^\vk$ on the twisted commutant.
	This is a heuristic explanation of the definition \eqref{einddinFFB} of FDB above, as an adaptation of the usual quantum case (and therefore also of the classical case from the point of view of \cite{DOV}) to the fermionic lattice by simply replacing the usual dual by its fermionic version.
	 
	
	However, we still need to highlight some properties of $\vk$ that, along with $K$'s formula above, were not explicitly pointed out in \cite{D18}, nor incorporated into the abstract framework of \cite{CDF}.
	
	Similar to $\a^\vk$, the copy of $\m$ on $\ca(M)^\wr$ is simply
	$$\m^\vk := \m \circ \vk^{-1}.$$
	One can check 
	that 
	$$\m^\vk = \m^\wr.$$
	Since
	$\m \circ \a = \m$, we have
	$\m^\wr \circ \a^\vk = \m^\wr$.
	
	It is easily verified that $\g(a_l) = - a_l$ by applying it to $f_s$.
	Consequently it is seen that $\vk$ is even. 
	Furthermore, 
	$K\L_\m = \L_\m$ from the formulas above, hence
	$$Ka \L_\m = \vk(a) \L_\m$$
	for all $a$ in $\ca(M)$, with $K$ thus representing $\vk$ in our standard form of $\ca(M)$.
	It can also be seen to satisfy
	$$K^2 = g$$  
	by direct calculation. 
	\hfill$\square$
\end{tra}
This example gives us a concrete picture motivating the abstract case to which we now return.
In addition, we have highlighted the properties of copying which, as we'll see in the next two subsections, are relevant to build a sensible theory for deviation from FDB in terms of fermionic Wasserstein distances between $\bz_2$-graded systems. 

Abstractly we also need to ``copy" $\a$ directly to $\ca^\wr$ by an appropriate map, enabling us to compare $\a$ to its twisted dual.
The map $\vk_\ca$ doing the copying needs to satisfy certain properties to ensure systems remain systems after copying, 
but also  properties of copying highlighted in the concrete case of a lattice above. Because of the latter, the eventual copying map we'll use, does depend to some extent on the system $\va$ being copied, in particular on the state $\m$ of the system.

To define such a copying map of a $\bz_2$-graded von Neumann algebra $\ca$ abstractly, ignoring dynamics for the moment, we therefore consider an even $*$-isomorphism
$$\vk_\ca : \ca \to \ca^\wr \, ,$$
and for any $\m \in \cf_+(\ca)$ we define a faithful normal state $\m^\vk$ on $\ca^\wr$ by
$$
\m^\vk = \m \circ \vk_\ca^{-1}.
$$
Since $\m^\vk \circ \vk_\ca = \m$,
we have a well defined unitary representation 
$$K_\ca : \cg_\ca \to \cg_\ca$$
determined by
$$K_\ca a\L_\m = \vk_\ca(a)\L_{\m^\vk},$$
with $\L_{\m^\vk}$ given by the standard form as usual,
leading to $K_\ca \L_\m = \L_{\m^\vk}$ and
$$\vk_\ca(a) = K_\ca a K_\ca^*$$
for all $a$ in $\ca$.
Since $\vk_\ca$ is even, we have 
$$K_\ca g_\ca = g_\ca K_\ca 
\quad \text{and} \quad 
K_\ca g_\ca^{1/2} = g_\ca^{1/2} K_\ca.$$
Consequently, since $\ca^\wr = \vk_\ca(\ca) = K_\ca \ca K_\ca^* \,$,
$$
\ca = \ca^{\wr\wr} 
      = \g_{\ca^\wr}^{1/2}( (K_\ca \ca K_\ca^*)' ) 
      = K_\ca \g_\ca^{1/2}(\ca') K_\ca^*
      = K_\ca \ca^\wr K_\ca^*,
$$
which means that
$$
\vk_{\ca^\wr} : \ca^\wr \to \ca :
a^\wr
\mapsto
K_\ca a^\wr K_\ca^*
$$
is a well-defined even $*$-isomorphism. 
In addition, we then define the normal faithful state $\m^{\vk\vk}$ on $\ca$ by
$$
\m^{\vk\vk} = \m^\vk \circ \vk_{\ca^\wr}^{-1}.
$$
This puts us in a position to introduce the basic definition which was not yet present in \cite{CDF},
needed for our complete definition of FDB.
\begin{definition}
	\label{KopDef}
	For a $\bz_2$-graded von Neumann algebra $\ca$, 
	an even $*$-isomorphism 
	$\vk_\ca : \ca \to \ca^\wr$ 
	will be called a \emph{copying map} for $\ca$,
	with 
	$\m^\vk \in \cf_+(\ca^\wr)$ then the corresponding \emph{copy} of 
	$\m \in \cf_+(\ca)$. 
	If in addition 
	$$\m^\vk = \m^\wr$$ 
	and
	$$\vk_{\ca^\wr}  \circ \vk_\ca = \g_\ca,$$
	we say that $\vk_\ca$ is a $\m$-\emph{copying map} for $\ca$.
\end{definition}

We split the definition above into two parts to clarify later on which properties are being used at any given point, but $\m$-copying maps are the natural structure needed to give a complete definition of fermionic detailed balance below and to execute our previously described plans.

As seen above, if $\vk_\ca$ is a copying map, then so is 
$\vk_{\ca^\wr}$.
When $\vk_\ca$ is a $\m$-copying map, the condition 
$\vk_{\ca^\wr}  \circ \vk_\ca = \g_\ca$ 
is equivalent to requiring 
$$
K_\ca ^2 = g_\ca
$$
in line with the lattice example, which implies that
$$\vk_\ca  \circ \vk_{\ca^\wr} = \g_{\ca^\wr}.$$ 
In addition $\vk_{\ca^\wr}  \circ \vk_\ca = \g_\ca$ tells us that
$$\m^{\vk\vk} = \m,$$ 
which can also be written as 
$ \m^{\vk\vk} = (\m^{\vk})^\wr$, because of $\m^\vk = \m^\wr$.
Hence $\vk_{\ca^\wr}$ 
is in turn a $\m^\wr$-copying map from 
$\ca^\wr$ 
to 
$\ca^{\wr\wr} = \ca$.

\begin{remark}
	In the abstract setup, $\m$-copying maps allows for more than just the direct copying seen in the lattice example. 
	As briefly described in Subsection \ref{OndAfdSQDB} below,
	this can be viewed as catering for parity conditions in detailed balance, but we'll stay with the ``copy" terminology.
\end{remark}

We can now complete the main goal of this subsection, namely to formulate the definition of fermionic detailed balance, refining the definition proposed in \cite{CDF} where $\vk_\ca : \ca \to \ca^\wr$ was merely assumed to be a $*$-isomorphism. 
Below we also motivate this definition conceptually in terms of our setup, based on the same core idea from both the classical and usual quantum cases, 
with recourse to arguments from \cite{DOV},
independently from (but in line with) the lattice example above.

\begin{RD}
As mentioned in the lattice example, it was argued in \cite{DOV} that the dual $E'$ of a channel in Definition \ref{duaalDef}, can be physically interpreted as the reverse of the channel, but expressed in terms of the commutants of the algebras, i.e., 
$E' : \cb' \to \ca'$.
Although the arguments in \cite{DOV} were in a finite dimensional context, we heuristically take the same reverse point of view of $E'$ in our general setup.

The expression of the reverse of $E$ as $E'$ in terms of the commutants of the algebras appears to be its ``native'' mathematical form, 
considering the natural duality relation in Definition \ref{duaalDef}.  The arguments for its existence are comparatively elementary, not involving Tomita-Takesaki theory, but making essential use of the commutants. 

We can represent the reverse of $E$ on other pairs of algebras by using appropriate copying maps.
To express it on the twisted commutants, we simply carry $E'$ over (or copy it) to the twisted commutants in a natural way using the Klein isomorphisms. That is, the reverse of $E$ expressed in terms of the twisted commutants is precisely the twisted dual
$E^\wr = \g_\ca^{1/2} \circ E' \circ \g_\cb^{-1/2}$
 of $E$ as given in Proposition \ref{DuaEkw}.
As $E^\wr$ is copied from $E'$ by the very maps defining $\ca^\wr$ and $\cb^\wr$, it fits naturally into the graded structure being used.
This representation of the reverse can in turn be copied to the original algebras $\ca$ and $\cb$ 
(see Subsection \ref{OndAfdSQDB}).
%

Returning to detailed balance, the reverse of the dynamics of $\va$ expressed on $\ca^\wr$, is given by
$\a^\wr$,
whereas the dynamics $\a$ itself expressed on $\ca^\wr$ is its copy 
$\a^\vk : \ca^\wr \to \ca^\wr$
on the latter, i.e.,
$$\a^\vk := \vk_\ca \circ \a \circ \vk_\ca^{-1}.$$
As in the non-graded quantum case (see \cite{DOV}), and indeed in a classical Markov chain as well, FDB simply states that the dynamics must be equal to its (or in our general setup, ``their") reverse.
In other words, we must have the following, 
as a refinement of \cite[Definition 13.2]{CDF}. 
\end{RD}
\begin{definition}\label{fdb}
	A $\bz_2$-graded system $\va$ with a $\m$-copying map $\vk_\ca$ for $\ca$,
	is said to satisfy \emph{fermionic detailed balance} (with respect to $\vk_\ca$), 
	or \emph{FDB}, if
	\begin{equation*}
		\a^\vk = \a^\wr.
	\end{equation*}
\end{definition}
Strictly speaking, FDB is defined with respect to a chosen $\vk_\ca$,
which can include parity.
However, 
we won't need more complete terminology  like $\vk_\ca$-FDB,
as no ambiguity will arise.

The condition 
$\vk_{\ca^\wr}  \circ \vk_\ca = \g_\ca$ 
was set above, since on the one hand it is satisfied in the lattice example, while on the other hand it plays a role in the further theory. With $\m$-copying maps we have essentially abstracted the minimum conditions from the lattice example needed to clarify the connection to standard quantum detailed balance below, and needed in the optimal transport development presented in the Subsection \ref{OndAfdWSim}.

\subsection{Standard quantum detailed balance} 
\label{OndAfdSQDB}
Although FDB was inspired by $\th$-SQDB, that is, SQDB with respect to a reversing operation $\th$, the connection wasn't completely clarified in \cite{D18} and \cite{CDF}.
On the other hand, to connect Definition \ref{fdb} to our Wasserstein distances, we need to write it in terms of systems. But defining a copy $\va^\vk$ and dual $\va^\wr$ of a system $\va$ (as will be done in the next subsection) and then simply writing 
$\va^\vk = \va^\wr$, will place undue restrictions on the coordinates $k$ of $\va$. 
Addressing this point unambiguously, involves expressing FDB as $\th$-SQDB for an appropriate reversing operation $\th$.
	The view promoted here is that Definition \ref{fdb} brings the idea of dynamics being its own inverse to the fore, while $\th$-SQDB is a less direct derived formulation.

To achieve this, a $\m$-copying map
$\vk_\ca : \ca \to \ca^\wr$
needs to be converted into a reversing operation 
$\th_\ca : \ca \to \ca$, 
which will be built into the system $\va$ as part of its dynamics.
This will allow us to incorporate the $\m$-copying map easily in the Wasserstein distances in a way suitable for our goals.
A \emph{reversing operation} on $\ca$ is a $*$-anti-automorphism 
$\th : \ca \to \ca$
such that 
$\th \circ \th = \id_\ca$.

We simply 
define 
$$
\th_\ca : \ca \to \ca
$$
corresponding to $\vk_\ca$ as
\begin{equation}\label{RevOpDef}
\th_\ca := j_\ca \circ \g_\ca^{-1/2} \circ \vk_\ca.
\end{equation}
The next proposition tells us that this is indeed a reversing operation.
\begin{prop}
	\label{omkOp}
	Given a $\m$-copying map $\vk_\ca$ for a $\bz_2$-graded von Neumann algebra $\ca$, the map $\th_\ca$ above is an even reversing operation such that
	$\m\circ\th_\ca = \m$.
\end{prop}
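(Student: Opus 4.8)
The plan is to verify the three defining features of a reversing operation for $\th_\ca = j_\ca\circ\g_\ca^{-1/2}\circ\vk_\ca$ in turn: that it is a $*$-anti-automorphism of $\ca$, that it is even and $\m$-preserving, and that it is involutive. Throughout I would use the operator forms $\vk_\ca(a)=K_\ca aK_\ca^*$, $\g_\ca^{-1/2}(a^\wr)=g_\ca^{-1/2}a^\wr g_\ca^{1/2}$ and $j_\ca(x)=J_\ca x^*J_\ca$, together with the relations recorded before the statement: $J_\ca g_\ca=g_\ca J_\ca$ (so $J_\ca g_\ca^{\mp 1/2}=g_\ca^{\pm 1/2}J_\ca$), $K_\ca g_\ca^{1/2}=g_\ca^{1/2}K_\ca$ and $K_\ca^2=g_\ca$. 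The three easy properties are formal. First, $\th_\ca$ is the composite $\ca\xrightarrow{\vk_\ca}\ca^\wr\xrightarrow{\g_\ca^{-1/2}}\ca'\xrightarrow{j_\ca}\ca$ of two $*$-isomorphisms and the $*$-anti-isomorphism $j_\ca$, hence a $*$-anti-automorphism of $\ca$ onto itself. Invariance follows by chaining intertwiners: $\m\circ j_\ca=\m'$ on $\ca'$, then $\m'\circ\g_\ca^{-1/2}=\m^\wr$ by \eqref{verwToest}, then $\m^\wr=\m^\vk$ by the $\m$-copying hypothesis, and finally $\m^\vk\circ\vk_\ca=\m$, giving $\m\circ\th_\ca=\m$. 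Evenness is likewise a chain at the level of maps: $\vk_\ca$ is even, the Klein isomorphism intertwines the gradings ($\g_\ca^{-1/2}\circ\g_{\ca^\wr}=\g_{\ca'}\circ\g_\ca^{-1/2}$, since $g_\ca^{1/2}$ commutes with $g_\ca$), and $j_\ca\circ\g_{\ca'}=\g_\ca\circ j_\ca$ is a one-line consequence of \eqref{Jg=gJ}; composing the three yields $\th_\ca\circ\g_\ca=\g_\ca\circ\th_\ca$.

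The substance is involutivity, $\th_\ca\circ\th_\ca=\id_\ca$. Collapsing the operator forms and using $J_\ca g_\ca^{-1/2}=g_\ca^{1/2}J_\ca$, one rewrites $\th_\ca$ as an anti-isomorphism in the sense of Example \ref{(anti)iso}, namely $\th_\ca(a)=Va^*V^*$ with the anti-unitary $V:=g_\ca^{1/2}J_\ca K_\ca$. Hence $\th_\ca^2(a)=V^2a(V^2)^*$, and a short manipulation using $K_\ca g_\ca^{1/2}=g_\ca^{1/2}K_\ca$ and $J_\ca g_\ca^{1/2}=g_\ca^{-1/2}J_\ca$ reduces $V^2$ to $J_\ca K_\ca J_\ca K_\ca$. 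So the whole problem is the single operator identity $J_\ca K_\ca J_\ca=K_\ca^*$, equivalently $V^2=1$.

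I would obtain this from modular theory in two steps. First, since $\ca^\wr=g_\ca^{1/2}\ca'g_\ca^{-1/2}$ with $g_\ca^{1/2}$ unitary and $g_\ca^{1/2}\L_\m=\L_\m$ (because $g_\ca\L_\m=\L_\m$), conjugating the modular data of $\ca'$ (whose conjugation is $J_\ca$) by $g_\ca^{1/2}$ identifies the modular conjugation of $\ca^\wr$ for $\L_\m$ as $J_{\ca^\wr}=g_\ca^{1/2}J_\ca g_\ca^{-1/2}=g_\ca J_\ca$. Second, the copying hypothesis $\m^\vk=\m^\wr$ forces $\L_{\m^\vk}=\L_\m$, whence $K_\ca\L_\m=\L_\m$: here one uses that $\L_\m$ lies in the natural cone $\cp_{\ca^\wr}$ of the standard form of $\ca^\wr$ (again because $g_\ca^{1/2}\L_\m=\L_\m$) together with uniqueness of the cone representative of the common state $\m^\wr=\m^\vk$. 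Then $K_\ca$ implements $\vk_\ca$ and fixes $\L_\m$, so it coincides with the canonical standard implementation of $\vk_\ca$ and therefore intertwines the conjugations, $K_\ca J_\ca=J_{\ca^\wr}K_\ca=g_\ca J_\ca K_\ca$. Feeding this back in, $K_\ca J_\ca=g_\ca J_\ca K_\ca$ gives $J_\ca K_\ca=K_\ca g_\ca J_\ca$, so $V^2=(J_\ca K_\ca)(J_\ca K_\ca)=(K_\ca g_\ca J_\ca)(J_\ca K_\ca)=K_\ca g_\ca K_\ca=K_\ca^2 g_\ca=g_\ca^2=1$, as required.

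The main obstacle is exactly the second modular step: showing $K_\ca J_\ca=J_{\ca^\wr}K_\ca$, i.e.\ that the implementing unitary $K_\ca$ of the copying map is the natural-cone-preserving standard implementation of $\vk_\ca$. This is the single place where the $\m$-copying condition $\m^\vk=\m^\wr$ must be combined with Tomita--Takesaki theory rather than formal bookkeeping, and it is what makes $\m$-copying (as opposed to a bare $*$-isomorphism) the right hypothesis; once it is in hand, the remaining computation collapses using only $K_\ca^2=g_\ca$ and the commutation relations.
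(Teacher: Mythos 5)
Your proposal is correct, and it follows the same overall strategy as the paper: the $*$-anti-automorphism, invariance and evenness properties are disposed of by formal chaining exactly as in the paper's proof, and involutivity is reduced to a commutation relation among $K_\ca$, $J_\ca$ and $g_\ca^{1/2}$. The pivotal relation you derive, however, is \emph{not} the one the paper invokes, and yours appears to be the correct one. The paper's proof uses $K_\ca J_\ca = J_\ca K_\ca$, which it gets by combining \eqref{UJ=JU} with the Appendix's identity $J_{\ca^\wr}=J_\ca$; that identity is obtained there by moving $J_\ca$ past $g_\ca^{1/2}$ in the polar decomposition of $S_{\m^\wr}=g_\ca^{1/2}S_{\m'}g_\ca^{-1/2}$ as if the two commuted, whereas by \eqref{Jg=gJ} and the anti-linearity of $J_\ca$ one has $g_\ca^{1/2}J_\ca=J_\ca g_\ca^{-1/2}$, so that $S_{\m^\wr}=(g_\ca J_\ca)\Delta_{\m'}^{1/2}$ and hence $J_{\ca^\wr}=g_\ca J_\ca$, exactly as you compute. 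The difference is not cosmetic: with $V=g_\ca^{1/2}J_\ca K_\ca$ one has $\th_\ca^2(a)=V^2a(V^2)^*$ and $V^2=J_\ca K_\ca J_\ca K_\ca$, which equals $g_\ca$ if $K_\ca J_\ca=J_\ca K_\ca$ (giving $\th_\ca^2=\g_\ca$ rather than the identity) but equals $1$ under your relation $K_\ca J_\ca=g_\ca J_\ca K_\ca$. A two-by-two check ($\ca=M_2$ in standard form with the tracial state, $\g_\ca=\operatorname{Ad}\sigma_z$, and $\vk_\ca$ the flip onto the commutant composed with $\g_\ca^{1/2}$, for which $\th_\ca$ comes out as the transpose map) confirms $K_\ca J_\ca=g_\ca J_\ca K_\ca\neq J_\ca K_\ca$. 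Your intermediate steps are all sound: $\L_{\m^\vk}=\L_\m$ follows from $\m^\vk=\m^\wr$ together with $\L_\m\in\cp_{\ca^\wr}$ and uniqueness of cone representatives, whence $K_\ca\L_\m=\L_\m$ and $K_\ca J_\ca=J_{\ca^\wr}K_\ca$ by the (correct) general intertwining \eqref{UJ=JU}. In short, your argument is a valid proof of the proposition and, as far as I can check, a necessary repair of the step the paper delegates to the Appendix; the remaining bookkeeping coincides with the paper's.
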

\begin{proof}
	It is clearly a $*$-anti-automorphism, since $j_\ca$ is. 
	By the basic definitions,
	$\m\circ\th_\ca = \m^\wr \circ \vk_\ca = \m^\vk \circ \vk_\ca = \m$.
	From $\m^\wr \circ \vk_\ca = \m$ we know (see the Appendix) that 
	$
	K_\ca J_\ca = J_\ca K_\ca.
	$
	Combining this with the commutation of both $J_\ca$ and $K_\ca$ with $g_\ca^{1/2}$, along with $K_\ca ^2 = g_\ca$,
	it follows that 
	$\th_\ca \circ \th_\ca = \id_\ca$
	as required.
    Similarly, $\th_\ca \circ \g_\ca = \g_\ca \circ \th_\ca$, that is, $\th_\ca$ is even.
\end{proof}

\begin{conv}\label{revConv}
	For transport plans to have the relevant properties with respect to $\th_\ca$ in \eqref{RevOpDef}, it will be assumed to be part of $\va$'s dynamics whenever a $\m$-copying map is present, i.e., 
for some $\u \in \U$,
$$
\a_\u = \th_\ca \, .
$$
\end{conv}
Then, in line with \cite[Definition 8.3]{DSS}:
\begin{definition}\label{revDef}
	A $\bz_2$-graded system $\va$ with a $\m$-copying map is called \emph{reversible}.
\end{definition}

\begin{parity}
	Although for us the primary role of a $\m$-copying map $\vk_\ca$ will be to copy dynamics, it may also include parity; see \cite[Section 6]{DOV}.
	That is, the $\m$-copying map allows for parity in the definition of FDB, as was mentioned in the previous subsection.
	
	In our abstract setup we cannot separate copying and parity in $\vk_\ca$, but in concrete cases this should typically be possible.
	For instance, in the lattice example in Subsection \ref{SubsecFDB}, one can see on physical grounds that $\vk$ just copies, with no parity included.
	In finite dimensions and without a grading, \cite[Section 6]{DOV} shows exactly how to separate copying and parity in $\vk_\ca$ in general.
	Abstractly, however, we'll continue to simply speak of copying, even though parity may be included.
	
It could nevertheless be of interest to investigate
whether there are further abstract conditions which can be implemented to limit the variety of $\m$-copying maps, while allowing various concrete cases of interest.
\end{parity}

With reversing operations in hand, we proceed to connect it to our Definition \ref{fdb} of FDB. The latter is stated for reversible systems, in the terminology above.
As discussed in the  previous subsection, the twisted dual $E^\wr$ in Definition \ref{verwDuaalDef} of the even channel 
$E : \ca \to \cb$ with $\n \circ E = \m$,
can be viewed as the reverse of $E$. 
The equality between dynamics $\a$ and its reverse $\a^\wr$, expresses FDB. This equality is formulated on the twisted commutant $\ca^\wr$ by copying the dynamics to the latter. We'll now see that when this is expressed on $\ca$, 
it is exactly $\th_\ca$-SQDB.

In terms of the more general situation in Definition \ref{verwDuaalDef}, and given copying maps $\vk_\ca$ and $\vk_\cb$, we introduce 
$E^\lp : \cb \to \ca$
by setting
\begin{equation}\label{Eomk}
	E^\lp = \vk_\ca^{-1} \circ E^\wr \circ \vk_\cb
\end{equation}
to copy $E^\wr$ to the original algebras. 
As $\vk_\ca$ and $\vk_\cb$ are $*$-isomorphisms, it is justified to view $E^\lp$ as a copy of $E^\wr$,
which is in turn a copy of $E'$ as explained in Subsection \ref{SubsecFDB}. 
Hence
$E^\lp$
is the reverse $E'$ of $E$ (relative to $\m$ and $\n$), but represented in terms of $\ca$ and $\cb$.

As in the lattice example, $\vk_\ca$ and $\vk_\cb$ are determined by the physical setup.
Therefore, representing the reverse of $E$ as $E^\lp$ in terms of $\ca$ and $\cb$, is expected to be mathematically less natural than $E^\wr$, rather describing a specific physical situation, including parity.

However, in order to be viewed as the reverse of $E$, a second reversal ought to lead back to $E$, that is,
\begin{equation}\label{dubbelOmk}
E^{\lp\lp} = E \, ,
\end{equation}
where of course this reverse of $E^\lp$ is taken relative to the copied states 
$\n^\wr \circ \vk_\cb$ and $\m^\wr \circ \vk_\ca$
on $\cb$ and $\ca$ respectively.
One can check that \eqref{dubbelOmk} is ensured if we assume that
$\vk_\ca$ and $\vk_\cb$
are $\m$- and $\n$-copying maps respectively,
again emphasizing the essential role of the latter property when representing relevant structures on the original algebras.

Continuing with these $\m$- and $\n$-copying maps, it is easily confirmed from the definitions that
\begin{equation*}
	E^\lp = \th_\ca \circ E^\s \circ \th_\cb
\end{equation*}
in terms of the KMS dual from \eqref{KMS}.
In particular,
$$
\th_\ca^\lp = \th_\ca \, ,
$$
as in the non-graded case (\cite[Section 8.1]{DSS}).

In terms of \eqref{Eomk}, FDB of a reversible $\va$ can therefore equivalently be expressed as 
$$
\a^\lp = \a \, ,
$$
which is in fact the definition of $\th_\ca$-SQDB. 
See \cite{FR} and \cite{BQ2} (emphasizing the von Neumann algebra $B(H)$) for this formulation of $\th_\ca$-SQDB in terms of the  
$\th_\ca$-\emph{KMS adjoint} of $\a$, defined in \cite{BQ2} as
$\th_\ca \circ \a^\s \circ \th_\ca$.
Our copying approach recovered this adjoint from a different point of view as the reverse $\a^\lp$.
To summarize the key point in terms of \eqref{RevOpDef}:
\begin{prop}
	FDB in Definition \ref{fdb} is equivalent to $\th_\ca$-SQDB.
\end{prop}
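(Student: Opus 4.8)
The plan is to unwind Definition \ref{fdb} through the $*$-isomorphism $\vk_\ca$, so that the defining equality, which lives on $\ca^\wr$, is transported to an equality of maps on $\ca$, and then to recognise the latter as $\th_\ca$-SQDB. First I would recall that FDB reads $\a^\vk = \a^\wr$ with $\a^\vk = \vk_\ca \circ \a \circ \vk_\ca^{-1} : \ca^\wr \to \ca^\wr$. Since $\vk_\ca : \ca \to \ca^\wr$ is a $*$-isomorphism, composing on the left with $\vk_\ca^{-1}$ and on the right with $\vk_\ca$ is a bijective operation on maps, so $\a^\vk = \a^\wr$ is equivalent to
$$
\a = \vk_\ca^{-1} \circ \a^\wr \circ \vk_\ca .
$$

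Next I would identify the right-hand side as $\a^\lp$. Specialising \eqref{Eomk} to $\cb = \ca$ and $\vk_\cb = \vk_\ca$ gives exactly $\a^\lp = \vk_\ca^{-1} \circ \a^\wr \circ \vk_\ca$, so FDB becomes $\a^\lp = \a$. Invoking the identity $E^\lp = \th_\ca \circ E^\s \circ \th_\cb$ recorded just above the proposition, in the same specialisation, rewrites this as $\a = \th_\ca \circ \a^\s \circ \th_\ca$; the right-hand side is the $\th_\ca$-KMS adjoint of $\a$, so $\a^\lp = \a$ is precisely $\th_\ca$-SQDB. Reading the chain of equivalences in both directions yields the proposition.

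Since the argument is an unwinding of definitions, no deep obstacle arises; the only points needing care are that the compositions match up on the correct algebras and that the $\m$-copying hypothesis is genuinely used. The latter enters in two places: through Proposition \ref{omkOp}, which guarantees that the $\th_\ca$ of \eqref{RevOpDef} is an even reversing operation with $\m \circ \th_\ca = \m$, so that $\th_\ca$-SQDB is well posed; and through the double-reversal \eqref{dubbelOmk}, which is what makes $E^\lp$ a legitimate reverse and relies on $\vk_\ca$ being a $\m$-copying map. I would therefore state the equivalence for a reversible $\va$, so that $\th_\ca$ is available via Convention \ref{revConv}, and then conclude.
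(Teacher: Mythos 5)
Your proposal is correct and follows essentially the same route as the paper: conjugating the defining equality $\a^\vk=\a^\wr$ by $\vk_\ca$ to obtain $\a=\a^\lp$, and then identifying $\a^\lp=\th_\ca\circ\a^\s\circ\th_\ca$ as the $\th_\ca$-KMS adjoint. Your remarks on where the $\m$-copying hypothesis enters (Proposition \ref{omkOp} and the double-reversal \eqref{dubbelOmk}) match the paper's surrounding discussion.
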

\begin{remark}
	The proof of this proposition was technically quite simple.
	However, viewing $E^\wr$, and thus its copy $E^\lp$, as (representations of) the reverse of $E$, when working with $\m$- and $\n$-copying maps, makes 
	$E^\lp = \vk_\ca^{-1} \circ E^\wr \circ \vk_\cb$ 
	more intuitive than the formula
	$\th_\ca \circ E^\s \circ \th_\cb$.
	
	In the latter $\th_\ca$ and $\th_\cb$ are not $*$-isomorphisms.
	Similarly for $j_\ca$ and $j_\cb$ in the definition of $E^\s$ from $E'$.
	They are $*$-anti-isomorphisms, hence they don't preserve all structure, and therefore do not copy like $\vk_\ca$ and $\vk_\cb$.
	Thus the direct interpretation of $E^\lp$ as a copy of $E'$
	(the native form of the reverse of $E$), is somewhat hidden in $\th_\ca \circ E^\s \circ \th_\cb$.
	Likewise, the simple role of the reverse of dynamics is not explicit in the original view of $\th_\ca$-SQDB. Our framing of quantum detailed balance in terms of copying, clarifies it conceptually by transparently bringing the simple notion of dynamics being its own reverse to centre stage.
	
	Nevertheless, the proposition above shows that this leads back to SQDB. These arguments to clarify the conceptual meaning of SQDB are in line with \cite{DOV}'s approach (though only in finite dimensions there), and are just as relevant to the usual case of trivial grading.
	In the graded case, however, we first needed the full formulation Definition \ref{KopDef} of copying.
\end{remark}

Lastly, in our setting of Wasserstein distance between systems, it is more appropriate to express FDB of a reversible system $\va$ as 
\begin{equation}\label{fdbStelsel}
	\va^\lp = \va
\end{equation}
in terms of the \emph{reverse} of $\va$ defined by
\begin{equation}\label{stelselOmk}
	\va^\lp = ( \ca , \a^\lp , \m , k )
\end{equation}
as in \cite[Definition 8.3]{DSS}, but now for the graded case. 
Although only $\a$ is relevant in this definition, in the next subsection we explore deviation from FDB in terms of distances between systems, with the other facets of a systems then also being relevant.


\subsection{Symmetries of fermionic Wasserstein distances}
\label{OndAfdWSim}

We present symmetries of fermionic Wasserstein distances, allowing us to formulate bounds on deviation from FDB in a similar way to the usual case in \cite[Sec 8.1]{DSS}. This is possible due to the refined setup for FDB developed above. The symmetries relate to the two sides of the equation 
$\a^\wr = \a^\vk$, 
as well as to the reversal appearing in the system formulation \eqref{fdbStelsel} of FDB.
The latter symmetry in particular is relevant to express deviation from FDB in terms of Wasserstein distance between systems.

We begin with twisted duals, which simply build on the twisted dual maps given by Theorem \ref{twDuSys}:
\begin{definition}
	\label{VerwDuaalStDef}
	The \emph{twisted dual} of the $\bz_2$-graded system $\va$, is the $\bz_2$-graded system
	$$
	\va^{\wr} = (\ca^\wr, \a^\wr, \mu^\wr, k^\wr)
	$$ 
	on the twisted commutant $\ca^{\wr}$ of $\ca$, 
	where we write
	$$
	k^\wr = (k_1^\wr,...,k_d^\wr)
	 \quad  \text{with}  \quad 
	k_i^\wr := \g_\ca^{1/2} \circ j_\ca(k_i^*).
	$$
\end{definition}

Consider $\bz_2$-graded systems $\va$ and $\vb$, as well as any transport plan $\om \in T^{\F}(\m , \n)$. Define a linear functional $\om^\wr$ on 
$\cb^\wr \ftp \ca$ by
\begin{equation*} 
	\om^\wr 
	=
	\d^{\F}_{\m^\wr} \circ ( E_\om^\wr \ftp \id_\ca ).
\end{equation*}
Using Definition \ref{verwDuaalDef} this simply  means that 
$$
\om^\wr(b^\wr \ftp a)
=
\om( ( a^* \ftp b^{\wr*} )^* )
=
\om(a_+ \ftp b^\wr_+ + a_+ \ftp b^\wr_- + a_- \ftp b^\wr_+ - a_- \ftp b^\wr_-)
$$
for all $a$ in $\ca$ and $b^\wr$ in $\cb^\wr$.
Note that by Theorem \ref{Thm1-to-1} 
(also keep in mind \eqref{omUitEom}, Proposition 
\ref{FermiKanaal} and Corollary \ref{ferE})
\begin{equation}\label{verwOP}
	\om^\wr = \psi_{\ftps} \in T^{\F} ( \n^\wr , \m^\wr )
\end{equation}
in terms of
$$
\psi 
:= 
\d_{\m^\wr} \circ ( E_\om^\wr  \odot \id_{ (\ca^\wr)' } )
\in
T^\g ( \n^\wr , \m^\wr ).
$$
Because of Corollary \ref{ferE} we have
\begin{equation}\label{verwE}
	E_{\om^\wr} = E_\om^\wr,
\end{equation}
hence analogously
$E_{\om^{\wr\wr}} = E_{\om^\wr}^\wr = E_\om,$ thus
\begin{equation}\label{dubVerwOP}
	\om^{\wr\wr} = \om.
\end{equation}

In terms of the transport plan \eqref{verwOP} we can state and easily prove the following simple but useful fact.
\begin{prop}
	\label{verwBalKar}
	Consider $\bz_2$-graded systems $\va$ and $\vb$, as well as any transport plan $\om \in T^{\F}(\m , \n)$. Then
	$$\va \om \vb \iff \vb^\wr \om^\wr \va^\wr.$$ 
\end{prop}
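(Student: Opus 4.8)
The plan is to reduce both balance conditions to intertwining relations between the associated channels via Corollary \ref{ferBalEDirek}, and then to move between the two relations by taking twisted duals. First I would use Corollary \ref{ferBalEDirek} to rewrite $\va \om \vb$ as the channel identity $E_\om \circ \a = \b \circ E_\om$, where $E_\om : \ca \to \cb$. Applying the same corollary to the systems $\vb^\wr$ and $\va^\wr$ (whose dynamics are $\b^\wr$ and $\a^\wr$ by Definition \ref{VerwDuaalStDef}) together with the transport plan $\om^\wr \in T^{\F}(\n^\wr , \m^\wr)$ from \eqref{verwOP}, the condition $\vb^\wr \om^\wr \va^\wr$ becomes $E_{\om^\wr} \circ \b^\wr = \a^\wr \circ E_{\om^\wr}$. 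Since $E_{\om^\wr} = E_\om^\wr$ by \eqref{verwE}, the target identity reads $E_\om^\wr \circ \b^\wr = \a^\wr \circ E_\om^\wr$, a relation between maps $\cb^\wr \to \ca^\wr$.

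The core step is to show these two channel identities are equivalent. For the forward direction I would apply the twisted dual to $E_\om \circ \a = \b \circ E_\om$ and invoke the contravariance of the twisted dual in Proposition \ref{compDu}, giving $(E_\om \circ \a)^\wr = \a^\wr \circ E_\om^\wr$ and $(\b \circ E_\om)^\wr = E_\om^\wr \circ \b^\wr$, whence $\a^\wr \circ E_\om^\wr = E_\om^\wr \circ \b^\wr$. The converse follows by applying $(\cdot)^\wr$ a second time and using the involutivity $E^{\wr\wr} = E$ together with $\a^{\wr\wr} = \a$ and $\b^{\wr\wr} = \b$, so that the two identities are genuinely equivalent and the biconditional follows. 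Before each application of Proposition \ref{compDu} I would verify its state-intertwining hypotheses, namely $\m \circ \a = \m$, $\n \circ \b = \n$ and $\n \circ E_\om = \m$ (and the copied conditions on the twisted commutants), which guarantee that every composition in sight is admissible.

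The only delicate point, rather than a genuine obstacle, is the bookkeeping: confirming that after dualizing, $\b^\wr$ plays the role of the dynamics of the \emph{first} system $\vb^\wr$ and $\a^\wr$ that of the \emph{second} system $\va^\wr$, so that Corollary \ref{ferBalEDirek} yields the intertwining relation in exactly the orientation above. Once this matching is pinned down the argument is purely formal, resting entirely on the contravariant functoriality and involutivity of the twisted dual.
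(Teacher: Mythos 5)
Your proposal is correct and follows essentially the same route as the paper: reduce both balance conditions to channel intertwining relations via Corollary \ref{ferBalEDirek}, identify $E_{\om^\wr}=E_\om^\wr$, and pass between the two relations using the contravariance of the twisted dual from Proposition \ref{compDu}. The paper's proof is just a one-line compression of exactly this argument.
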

\begin{proof}
	By Corollary \ref{ferBalEDirek} and Proposition \ref{compDu} we have
	$E_\om \circ \a = \b \circ E_\om$
	if and only if 
	$\a^\wr \circ E_\om^\wr = E_\om^\wr \circ \b^\wr$.
\end{proof}
Consequently we obtain the following, to be used in deriving our first set of symmetries of fermionic Wasserstein distances.
\begin{prop}
	\label{verwBij}
	For $\bz_2$-graded systems $\va$ and $\vb$ the map 
	$$\gt_\wr : T^{\F} ( \m , \n ) \to T^{\F} ( \n^\wr , \m^\wr ) : \om \mapsto \om^\wr$$
	and its restrictions
	$$
	\gt_\wr|_{T^{\F}_\s ( \va , \vb )}  : 
	T^{\F}_\s ( \va , \vb ) \to T^{\F}_\s ( \vb^\wr , \va^\wr )
	$$
	and
	$$
	\gt_\wr|_{T^{\F}_{\s\s} ( \va , \vb )}  : 
	T^{\F}_{\s\s} ( \va , \vb ) \to T^{\F}_{\s\s} ( \vb^\wr , \va^\wr )
	$$
	are well-defined bijections.
\end{prop}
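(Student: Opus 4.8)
The plan is to establish the top-level map first and then peel off the two restrictions, in each case reducing membership to balance conditions that are already controlled by Proposition \ref{verwBalKar}.

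\emph{Well-definedness and bijectivity at the top level.} That $\om^\wr$ lands in $T^{\F}(\n^\wr,\m^\wr)$ is precisely the content of \eqref{verwOP}, so $\gt_\wr$ is well defined. Running the same construction on the pair $(\n^\wr,\m^\wr)$ — legitimate since $\ca^{\wr\wr}=\ca$, $\cb^{\wr\wr}=\cb$, $\m^{\wr\wr}=\m$ and $\n^{\wr\wr}=\n$ — produces a map $T^{\F}(\n^\wr,\m^\wr)\to T^{\F}(\m,\n)$, and by \eqref{dubVerwOP} this is a two-sided inverse of $\gt_\wr$. Hence $\gt_\wr$ is an involutive bijection at the top level; this step is essentially free, being a repackaging of \eqref{verwOP} and \eqref{dubVerwOP}.

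\emph{Reducing the restrictions to balance conditions.} The heart of the argument is that $T^{\F}_\s$ and $T^{\F}_{\s\s}$ are carved out of $T^{\F}(\m,\n)$ by balance conditions of the shape $\va'\om\vb'$ for suitable auxiliary systems, and Proposition \ref{verwBalKar} converts each such condition into its counterpart for $\om^\wr$. For the modular case I would apply Proposition \ref{verwBalKar} to the two systems $(\ca,\s^\m,\m)$ and $(\cb,\s^\n,\n)$, which are genuine $\bz_2$-graded systems because $\m$ and $\n$ are even and the grading commutes with the modular data through \eqref{Jg=gJ}. This yields that $(\ca,\s^\m,\m)\,\om\,(\cb,\s^\n,\n)$ holds if and only if $(\cb,\s^\n,\n)^\wr\,\om^\wr\,(\ca,\s^\m,\m)^\wr$ does; combined with $\va\om\vb\iff\vb^\wr\om^\wr\va^\wr$ (Proposition \ref{verwBalKar} applied to $\va,\vb$ directly) this shows $\om\in T^{\F}_\s(\va,\vb)\iff\om^\wr\in T^{\F}_\s(\vb^\wr,\va^\wr)$, so $\gt_\wr$ restricts to a bijection here, its inverse being $\gt_\wr$ again by the involution property. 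For the KMS case I would additionally apply Proposition \ref{verwBalKar} to $\va^\s$ and $\vb^\s$, turning the extra requirement $\va^\s\om\vb^\s$ into $(\vb^\s)^\wr\,\om^\wr\,(\va^\s)^\wr$, and then invoke \eqref{KMSverwDu} to identify $(\vb^\s)^\wr=(\vb^\wr)^\s$ and $(\va^\s)^\wr=(\va^\wr)^\s$, which is exactly the defining condition of $T^{\F}_{\s\s}(\vb^\wr,\va^\wr)$.

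\emph{Main obstacle.} The one nontrivial bookkeeping point is to verify that the twisted dual of the auxiliary modular system $(\cb,\s^\n,\n)$ is again a modular system, i.e.\ that $(\s^\n_t)^\wr=\s^{\n^\wr}_t$ as automorphisms of $\cb^\wr$, so that $(\cb,\s^\n,\n)^\wr=(\cb^\wr,\s^{\n^\wr},\n^\wr)$ (Definition \ref{VerwDuaalStDef}) and the output of Proposition \ref{verwBalKar} truly coincides with the modular condition defining $T^{\F}_\s(\vb^\wr,\va^\wr)$. I would derive this from $(\s^\n_t)^\wr=\g_\cb^{1/2}\circ(\s^\n_t)'\circ\g_\cb^{-1/2}$ (Proposition \ref{DuaEkw}) together with $(\s^\n_t)'=\s^{\n'}_t$ and the observation that the Klein $*$-isomorphism $\g_\cb^{1/2}:\cb'\to\cb^\wr$ carries $\n'$ to $\n^\wr$, hence by uniqueness of the modular group intertwines $\s^{\n'}$ and $\s^{\n^\wr}$. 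The identity $(\s^\m_t)^\wr=\s^{\m^\wr}_t$ is obtained identically, and for the KMS layer the required compatibility $(\a^\s)^\wr=(\a^\wr)^\s$ is already \eqref{KMSverwDu}; once these three compatibilities are secured, the three claimed bijections follow uniformly.
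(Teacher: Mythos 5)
Your proposal is correct and follows essentially the same route as the paper: well-definedness and bijectivity at the top level via \eqref{verwOP} and \eqref{dubVerwOP}, the modular restriction via Proposition \ref{verwBalKar} applied to $(\ca,\s^\m,\m)$ and $(\cb,\s^\n,\n)$, and the KMS restriction via Proposition \ref{verwBalKar} together with \eqref{KMSverwDu}. The only (harmless) difference is that you re-derive the compatibility $(\s^\n_t)^\wr=\s^{\n^\wr}_t$ from the Klein isomorphism and \eqref{modGroepTransf}, whereas the paper simply cites \eqref{verwMod}, which is established in the Appendix by the polar decomposition of $S_{\n^\wr}$.
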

\begin{proof}
	By \eqref{verwOP} we know that $\gt_\wr$ is well-defined, as is the correspondingly defined map
	$T^{\F} ( \n^\wr , \m^\wr ) \to T^{\F} ( \m , \n ) : \psi \mapsto \psi^\wr$,
	which means that $\gt_\wr$ is bijective because of \eqref{dubVerwOP}.
	Similarly for it's restriction
	$T^{\F} ( \va , \vb ) \to T^{\F} ( \vb^\wr , \va^\wr )$ 
	because of Proposition \ref{verwBalKar}.
	
	For 
	$\om \in T^{\F}_\s (\va , \vb)$ 
	we have
	$(\ca , \s^\m , \m) \om (\cb , \s^\n , \n)$
	in addition to the conditions involved above,
	hence
	$
	(\cb^\wr , \s^{\n^\wr} , \n^\wr) 
	\om^\wr 
	(\ca^\wr , \s^{\m^\wr} , \m^\wr)
	$
	because of Proposition \ref{verwBalKar}, and \eqref{verwMod} in the Appendix.
	This means that $\om^\wr \in T^{\F}_\s (\vb^\wr , \va^\wr)$. 
	An analogous argument for $\psi \in T^{\F}_\s (\vb^\wr , \va^\wr)$ completes the proof for $\gt_\wr|_{T^{\F}_\s ( \va , \vb )}$.
	
	Lastly, if we consider 
	$\om \in T^{\F}_{\s\s} (\va , \vb)$,
	then it means we are adding the further condition 
	$\va^\s \om \vb^\s$.
	By Proposition \ref{verwBalKar} and \eqref{KMSverwDu} it follows that 
	$(\vb^\wr)^\s \om^\wr (\va^\wr)^\s$, 
	which together with the earlier assumed properties above means that
	$\om^\wr \in T^{\F}_{\s\s} (\vb^\wr , \va^\wr)$.
	Analogously starting with
	$\psi \in T^{\F}_{\s\s} (\vb^\wr , \va^\wr)$.
\end{proof}

Hence we find the following symmetries of $W^{\F}_\s$ and $W^{\F}_{\s\s}$.
\begin{theorem} \label{verwSim}
	For $\bz_2$-graded systems $\va$ and $\vb$ we have
	$$
	W^{\F}_\s (\vb^\wr , \va^\wr) =  W^{\F}_\s (\va , \vb)
	\quad \text{and} \quad 
	W^{\F}_{\s\s} (\va^\wr , \vb^\wr) =  W^{\F}_{\s\s} (\va , \vb).
	$$
\end{theorem}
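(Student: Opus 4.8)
The plan is to reduce both equalities to a single \emph{cost-preservation} statement under the bijection $\gt_\wr$ of Proposition \ref{verwBij}, and then pass to infima. Concretely, I would show that for every $\om \in T^{\F}_\s(\va,\vb)$ one has
\[
I_{\vb^\wr,\va^\wr}(\om^\wr) = I_{\va,\vb}(\om).
\]
Granting this, since $\gt_\wr$ restricts to a bijection $T^{\F}_\s(\va,\vb) \to T^{\F}_\s(\vb^\wr,\va^\wr)$, taking the infimum of $I^{1/2}$ over each side immediately gives $W^{\F}_\s(\vb^\wr,\va^\wr) = W^{\F}_\s(\va,\vb)$, which is the first equality. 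The same computation applied to the $\gt_\wr$ bijection on $T^{\F}_{\s\s}$ yields $W^{\F}_{\s\s}(\vb^\wr,\va^\wr) = W^{\F}_{\s\s}(\va,\vb)$; the stated form $W^{\F}_{\s\s}(\va^\wr,\vb^\wr) = W^{\F}_{\s\s}(\va,\vb)$ then follows by composing with the symmetry of $W^{\F}_{\s\s}$ established in Theorem \ref{metries}, i.e. $W^{\F}_{\s\s}(\va^\wr,\vb^\wr) = W^{\F}_{\s\s}(\vb^\wr,\va^\wr)$. (Note that the order $\vb^\wr,\va^\wr$ in the first equality is forced, since $W^{\F}_\s$ is only an asymmetric pseudometric.)

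Recalling $E_{\om^\wr} = E_\om^\wr$ from \eqref{verwE}, and that $I$ depends on a transport plan only through its channel, the cost $I_{\vb^\wr,\va^\wr}(\om^\wr)$ is \eqref{ferKoste} evaluated with states $\n^\wr,\m^\wr$, coordinates $l^\wr,k^\wr$, and channel $E_\om^\wr : \cb^\wr \to \ca^\wr$. The diagonal terms are routine: writing $k_i^\wr = \g_\ca^{1/2}(j_\ca(k_i^*))$ and using \eqref{verwToest}, the unitarity of $g_\ca^{1/2}$ together with $g_\ca^{\pm1/2}\L_\m = \L_\m$ and $J_\ca \L_\m = \L_\m$, one peels off the Klein factors to obtain $\m^\wr((k_i^\wr)^* k_i^\wr) = \m'(j_\ca(k_i^* k_i)) = \m(k_i^* k_i)$, and likewise $\n^\wr((l_i^\wr)^* l_i^\wr) = \n(l_i^* l_i)$. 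Thus the first two summands of $I_{\vb^\wr,\va^\wr}(\om^\wr)$ already match those of $I_{\va,\vb}(\om)$.

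The hard part will be the cross terms. Here I would combine $E_\om^\wr = \g_\ca^{1/2}\circ E_\om'\circ \g_\cb^{-1/2}$ (Proposition \ref{DuaEkw}) with $\g_\cb^{-1/2}(l_i^\wr) = j_\cb(l_i^*)$, so that $E_\om^\wr(l_i^\wr) = \g_\ca^{1/2}(E_\om'(j_\cb(l_i^*)))$. Feeding this and $k_i^\wr$ into $\m^\wr((k_i^\wr)^* E_\om^\wr(l_i^\wr)) = \langle\L_\m, (k_i^\wr)^* E_\om^\wr(l_i^\wr)\L_\m\rangle$ and again cancelling the $g_\ca^{\pm1/2}$ factors reduces it to $\m'\big(j_\ca(k_i^*)^*\, E_\om'(j_\cb(l_i^*))\big)$. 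Rewriting $E_\om'(j_\cb(l_i^*)) = j_\ca(E_\om^\s(l_i^*))$ via the KMS dual formula \eqref{KMS}, and using the antiunitarity of $J_\ca$ with $J_\ca\L_\m=\L_\m$, collapses this to $\m(E_\om^\s(l_i^*)\,k_i)$. At this step the modular hypothesis enters decisively: because $\om \in T^{\F}_\s$, the correlation identity \eqref{KorSim}, $\n(E_\om(a)^* b) = \m(a^* E_\om^\s(b))$, holds, and (taking complex conjugates where needed) gives $\m(E_\om^\s(l_i^*)k_i) = \n(l_i^* E_\om(k_i))$, with the partner term yielding $\n(E_\om(k_i)^* l_i)$. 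Hence the two cross terms transform exactly into those of $I_{\va,\vb}(\om)$.

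The main obstacle is thus bookkeeping through the Klein isomorphisms $\g_\ca^{\pm1/2}$ and modular conjugations $j_\ca, j_\cb$ in the cross-term reduction, and in particular recognizing that the identity closes only after invoking the modular condition \eqref{KorSim}; this is why the argument is run for $W^{\F}_\s$ and $W^{\F}_{\s\s}$ rather than for $W^{\F}$. The remaining relations needed ($g_\ca^{\pm1/2}\L_\m=\L_\m$, $J_\ca\L_\m=\L_\m$, and $J_\ca g_\ca = g_\ca J_\ca$ from \eqref{Jg=gJ}, together with $\g_\ca^\s=\g_\ca$ used in Proposition \ref{verwBij}) are all already available, so once cost preservation is in hand the passage to the Wasserstein distances is immediate.
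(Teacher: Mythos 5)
Your proposal is correct and follows essentially the same route as the paper: cost preservation $I_{\vb^\wr,\va^\wr}(\om^\wr)=I_{\va,\vb}(\om)$ under the bijections of Proposition \ref{verwBij}, followed by the pseudometric symmetry of $W^{\F}_{\s\s}$ from Theorem \ref{metries} to reorder the arguments in the second equality. The only difference is presentational: the paper first unwinds the Klein factors to get $I_{\vb^\wr,\va^\wr}(\om^\wr)=I_{\vb',\va'}((\om_\odot)')$ and then cites the non-graded identity $I_{\vb',\va'}((\om_\odot)')=I_{\va,\vb}(\om_\odot)$ from \cite{DSS}, whereas you carry out that modular-condition step in-house via the correlation identity \eqref{KorSim} and the KMS dual formula, which amounts to the same computation.
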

\begin{proof}
	For any $\om \in T^{\F} (\va , \vb)$ we have
	$$
	E_{\om^\wr} 
	= E_\om^\wr 
	= \g_\cb^{1/2} \circ E_{(\om_\odot)'} \circ \g_\cb^{-1/2},
	$$
	by \eqref{verwE} and using Corollary \ref{ferE} and \eqref{om'}. 
	Consequently, in terms of Definition \ref{I}, and the cost \eqref{Koste} of transport plans for the usual tensor product, as well as the non-graded case
	$k_i' = j_\ca (k_i^*)$ and $l_i' = j_\cb (l_i^*)$
	of Definition \ref{VerwDuaalStDef},
	\begin{align*}
		I_{\vb^\wr,\va^\wr} & (\om^\wr)
		  =
		\sum_{i=1}^d
		\left[  
		\n^\wr (l_i^{\wr*}l_i^\wr) + \m^\wr(k_{i}^\wr{}^*k_{i}^\wr)
		-\m^\wr(E_{\om^\wr}(l_i^\wr)^*k_i^\wr) 
		- \m^\wr(k_i^\wr{}^*E_{\om^\wr}(l_i^\wr))
		\right] \\
		&  =
		\sum_{i=1}^d
		\left[  
		\n'(l_i'^*l_i') + \m'(k_i'^*k_i') - \m'(E_{(\om_\odot)'}(l_i')^*k_i') - \m'(k_i'^*E_{(\om_\odot)'}(l_i'))
		\right] \\
		&  =
		I_{\vb' , \va'}((\om_\odot)').
	\end{align*}
	If $\om \in T^{\F}_\s (\va , \vb)$, Corollary \ref{sigma1-to-1} says that
	$\om_\odot \in T_\s (\va^\g , \vb^\g) \subset T_\s (\va , \vb)$.
	Hence as in the proof of \cite[Theorem 8.5]{DSS} we have 
	$I_{\vb' , \va'}((\om_\odot)') = I_{\va , \vb}(\om_\odot)$,
	making use of the defining property of $T^{\F}_\s ( \va , \vb )$,
	not just those of $T^{\F} ( \va , \vb )$.
	Again, since $E_{\om_\odot} = E_\om$, 
	and because of \eqref{ferKoste} in Definition \ref{I}'s independence of the tensor product being used,
	we obtain
	$$I_{\vb^\wr,\va^\wr}(\om^\wr) = I_{\va , \vb}(\om).$$
	The theorem then follows from the definitions of the fermionic Wasserstein distances in Definition \ref{W2} combined with Proposition \ref{verwBij} and the usual pseudometric symmetry of $W^{\F}_{\s\s}$ given by Theorem \ref{metries}.
\end{proof}

We now proceed with corresponding symmetries in terms of copies of $\bz_2$-graded systems. 
The relevant definition being the following.
\begin{definition}
	The \emph{copy} of a $\bz_2$-graded system $\va$ on $\ca^\wr$ under a copying map $\vk_\ca$,
	is the  $\bz_2$-graded system given by
	$$\va^\vk = ( \ca^\wr , \a^\vk , \m^\vk , k^\vk ),$$
	where
	$$
	\a^\vk = \vk_{\ca } \circ \a \circ \vk_{\ca}^{-1}
	$$
	and 
	$$
	k^\vk = (k_1^\vk , ... , k_d^\vk) 
	\quad \text{with} \quad 
	k_i^\vk := \vk_\ca(k_i).
	$$
\end{definition}
If $\va$ is reversible (i.e., $\vk_\ca$ is a $\m$-copying map), then so is
$\va^\vk$ with $\m^\vk$-copying map $\vk_{\ca^\wr}$, 
from the discussion in Subsection \ref{SubsecFDB}.

\begin{remark}
	For reversible $\va$ the condition
	$\va^{\vk} = \va^{\wr}$ 
	is equivalent to 
	$\a^{\vk} = \a^{\wr}$ and $\vk_\ca (k_i) = \g_\ca^{1/2} \circ j_\ca (k_i)$,
	the former being FDB, the latter being an unnecessary restriction on the coordinates. 
	Indeed, one can check simple cases in the lattice example of Subsection \ref{SubsecFDB} to verify that
	$\vk \neq \g^{1/2} \circ j_{\ca(M)} $. 
	Hence our preference for \eqref{fdbStelsel} as a system formulation of FDB. 
	
	Note that
	$
	\va^{\wr\wr} = (\ca, \a, \mu, \g_\ca (k))
	$
	with $\g_\ca(k)$ defined coordinate-wise.
	Similarly, if $\va$ is reversible,
	$
	\va^{\vk\vk} = (\ca, \a, \mu, \g_\ca (k)).
	$
	For even coordinates we would thus have 
	$\va^{\wr\wr}  = \va = \va^{\vk\vk}$,
	but this is not needed for our goals.
\end{remark}


Now consider any $\bz_2$-graded von Neumann algebras $\ca$ and $\cb$ with copying maps  $\vk_\ca$ and $\vk_\cb$ respectively.
Define the corresponding copy of an even u.p. map
$
E : \ca \to \cb
$
by
\begin{equation} \label{Ekap}
E^\vk = \vk_{\cb} \circ E \circ \vk_{\ca}^{-1}.
\end{equation}
We are of course particularly interested in the situation 
$\n \circ E = \m,$
where $\m \in \cf_+(\ca)$ and $\n \in \cf_+(\cb)$,
in which case
$\n^\vk \circ E^\vk = \m^\vk.$
\begin{prop} \label{OPkop} 
Given these assumptions, for any transport plan $\om \in T^{\F}(\m , \n)$ we define
	$\om^\vk \in T^{\F} ( \m^\vk , \n^\vk)$
as the state
\begin{equation*} 
	\om^\vk 
	=
	\om \circ ( \vk_\ca^{-1} \ftp \vk_{\cb^\wr}^{-1} )
\end{equation*}
on 
$\ca^\wr \ftp \cb = \ca^\wr \ftp \cb^{\wr\wr}$ 
(the latter to emphasize that we are still working in the 
$\, \cdot \, \ftp \, \cdot \, ^\wr$ form)
which is also given by the formula
\begin{equation} \label{omKopForm}
	\om^\vk
	=
	\d^{\F}_{\n^\vk} \circ ( E_\om^\vk  \ftp \id_\cb ) \, ,
\end{equation}
which means that 
\begin{equation}\label{omkap} 
	E_\om^\vk = E_{\om^\vk} \, . 
\end{equation}
\end{prop}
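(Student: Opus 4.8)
The plan is to realize $\om^\vk$ as the pullback of $\om$ along the Fermi tensor product of the two copying isomorphisms, and to run everything through the channel representation of Corollary~\ref{ferE}. First I would observe that $\vk_\ca^{-1}\ftp\vk_{\cb^\wr}^{-1}$ is an even unital $*$-isomorphism from $\ca^\wr\ftp\cb$ onto $\ca\ftp\cb^\wr$, being the Fermi tensor product of the even $*$-isomorphisms $\vk_\ca^{-1}:\ca^\wr\to\ca$ and $\vk_{\cb^\wr}^{-1}:\cb\to\cb^\wr$; hence $\om^\vk=\om\circ(\vk_\ca^{-1}\ftp\vk_{\cb^\wr}^{-1})$ is automatically a state on $\ca^\wr\ftp\cb$. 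To place it in $T^{\F}(\m^\vk,\n^\vk)$ I would check the conditions of Definition~\ref{TPforSta}. The first marginal follows from $\vk_{\cb^\wr}^{-1}(1)=1$ together with $\m\circ\vk_\ca^{-1}=\m^\vk$, while the balance condition \eqref{oordPl} follows by pushing $\g_{\ca^\wr}$ and the induced grading on $(\cb^\wr)^\wr=\cb$, which is $\g_\cb$, through the even maps $\vk_\ca^{-1}$ and $\vk_{\cb^\wr}^{-1}$ and invoking the balance condition of $\om$ itself.

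For the remaining marginal, and for the whole proposition, the decisive observation is that both $\vk_\cb:\cb\to\cb^\wr$ and $\vk_{\cb^\wr}:\cb^\wr\to\cb$ are implemented by the \emph{same} unitary $K_\cb$, namely $\vk_\cb(\cdot)=K_\cb(\cdot)K_\cb^*$ and $\vk_{\cb^\wr}(\cdot)=K_\cb(\cdot)K_\cb^*$, and that $K_\cb\L_\n=\L_{\n^\vk}$. Thus $\vk_{\cb^\wr}^{-1}(c)=K_\cb^* c K_\cb$ for $c\in(\cb^\wr)^\wr=\cb$, so $\om^\vk(1\ftp c)=\n^\wr(\vk_{\cb^\wr}^{-1}(c))=\langle\L_\n,K_\cb^* cK_\cb\L_\n\rangle=\langle\L_{\n^\vk},c\L_{\n^\vk}\rangle=(\n^\vk)^\wr(c)$ by \eqref{verwToest}. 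This completes the verification that $\om^\vk\in T^{\F}(\m^\vk,\n^\vk)$.

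Next I would prove \eqref{omKopForm} by evaluating both sides on an elementary tensor $a^\wr\ftp c$. On the right, Definition~\ref{diagFTrPl} gives $\d^{\F}_{\n^\vk}(E_\om^\vk(a^\wr)\ftp c)=\langle\L_{\n^\vk},E_\om^\vk(a^\wr)\,c\,\L_{\n^\vk}\rangle$. On the left, writing $a=\vk_\ca^{-1}(a^\wr)$ and $b^\wr=\vk_{\cb^\wr}^{-1}(c)$ and using Corollary~\ref{ferE} for $\om$, I obtain $\om^\vk(a^\wr\ftp c)=\langle\L_\n,E_\om(a)\,b^\wr\,\L_\n\rangle$. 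Now by \eqref{Ekap} one has $E_\om(a)=\vk_\cb^{-1}(E_\om^\vk(a^\wr))=K_\cb^*E_\om^\vk(a^\wr)K_\cb$, while $b^\wr=K_\cb^* cK_\cb$, so the inner pair $K_\cb K_\cb^*$ cancels and $K_\cb\L_\n=\L_{\n^\vk}$ absorbs the outer two, yielding exactly $\langle\L_{\n^\vk},E_\om^\vk(a^\wr)\,c\,\L_{\n^\vk}\rangle$. Hence \eqref{omKopForm} holds. Finally, since $E_\om^\vk=\vk_\cb\circ E_\om\circ\vk_\ca^{-1}$ is an even channel with $\n^\vk\circ E_\om^\vk=\m^\vk$, the uniqueness in Corollary~\ref{ferE} applied to $\om^\vk$ forces $E_{\om^\vk}=E_\om^\vk$, which is \eqref{omkap}.

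The step I expect to be most delicate is the bookkeeping in the middle: recognizing that the two copying maps $\vk_\cb$ and $\vk_{\cb^\wr}$ share the single implementing unitary $K_\cb$, and keeping straight which algebra acts on which standard-form Hilbert space (in particular the identification $\cg_{\cb^\wr}=\cg_\cb$ and the fact that $\L_{\n^\vk}=K_\cb\L_\n$ is the natural-cone vector for $\n^\vk$). Once this is in place the factors of $K_\cb$ cancel cleanly, and both the second marginal and the channel formula drop out using only that $\vk_\ca,\vk_\cb$ are copying maps; notably, the stronger $\m$- and $\n$-copying hypotheses are \emph{not} needed here.
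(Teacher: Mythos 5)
Your proposal is correct and follows essentially the same route as the paper's proof: pull back $\om$ along the $*$-isomorphism $\vk_\ca^{-1}\ftp\vk_{\cb^\wr}^{-1}$, compute the marginals using that $\vk_{\cb^\wr}^{-1}(b)=K_\cb^* b K_\cb$ and $K_\cb\L_\n=\L_{\n^\vk}$, verify \eqref{omKopForm} on elementary tensors via the channel representation, and deduce \eqref{omkap} from the uniqueness in Corollary \ref{ferE}. Your added remarks (the explicit check of the grading condition \eqref{oordPl} and the observation that only copying maps, not $\m$- and $\n$-copying maps, are needed) are consistent with, and slightly more explicit than, the paper's argument.
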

\begin{proof}
	Clearly $\om^\vk$ is a state, as $\vk_\ca^{-1} \ftp \vk_{\cb^\wr}^{-1}$ is a $*$-isomorphism.
	Explicit calculation gives 
	$\om^\vk ( a^\wr \ftp 1_\cb ) = \m^\vk (a^\wr)$
	and
	\begin{equation*} 
		\om^\vk ( 1_{\ca^\wr} \ftp b ) 
		=
		\n^\wr ( \vk_{\cb^\wr}^{-1} (b) )
		=
		\left\langle
		\L_{\n^\vk} , b \L_{\n^\vk}
		\right\rangle
		=
		(\n^\vk)^\wr (b)
	\end{equation*}
	since 
	$\vk_{B^\wr}^{-1} (b) = K_\cb^* b K_\cb$
	and 
	$ \L_\n = K_\cb^* \L_{\n^\vk} $, 
	and by the definitions of $\n^\wr$ and $(\n^\vk)^\wr$ as in \eqref{verwToest}.
Similarly, for all $a \in \ca^\wr$ and $b \in \cb$,
\begin{equation*} 
	\d^{\F}_{\n^\vk} \circ ( E_\om^\vk  \ftp \id_\cb^{\wr\wr} ) (a^\wr \ftp b )
	=
	\left\langle
	\L_{\n^\vk} , E_\om^\vk (a^\wr) b \L_{\n^\vk}
	\right\rangle
	=
	\om ( \vk_\ca^{-1} (a^\wr) \ftp \vk_{\cb^\wr}^{-1} (b) ) ,
\end{equation*}
proving \eqref{omKopForm}. 
Thus $E_{\om^\vk} = E_\om^\vk$ by Proposition \ref{ferE}.
\end{proof}
Alternatively one can use \eqref{omKopForm} as the definition of the linear functional $\om^\vk$ in analogy to our definition of $\om^\wr$, and reorganize the above arguments accordingly to show that it is a transport plan, etc.

In analogy to Propositions \ref{verwBalKar} and \ref{verwBij} we have the next two propositions:
\begin{prop}
	\label{kopBalKar}
	Consider $\bz_2$-graded systems $\va$ and $\vb$,
	with copying maps $\vk_\ca$ and $\vk_\cb$ respectively, 
	as well as any transport plan $\om \in T^{\F}(\m , \n)$. 
	Then
	$$\va \om \vb \iff \va^\vk \om^\vk \vb^\vk.$$ 
\end{prop}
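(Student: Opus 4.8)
The plan is to reduce both balance conditions to statements about the associated channels via Corollary \ref{ferBalEDirek}, and then to exploit the fact that the channel of $\om^\vk$ is obtained from that of $\om$ by conjugating with the copying maps, so that the two channel equations are manifestly equivalent.

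First I would invoke Corollary \ref{ferBalEDirek}, applied to the $\bz_2$-graded systems $\va$ and $\vb$ together with the transport plan $\om \in T^{\F}(\m , \n)$, to obtain
\[
\va \om \vb \iff E_\om \circ \a = \b \circ E_\om .
\]
Since $\om^\vk \in T^{\F}(\m^\vk , \n^\vk)$ by Proposition \ref{OPkop}, and the copied systems $\va^\vk$ and $\vb^\vk$ are again $\bz_2$-graded systems, I would apply the same corollary to them, obtaining
\[
\va^\vk \om^\vk \vb^\vk \iff E_{\om^\vk} \circ \a^\vk = \b^\vk \circ E_{\om^\vk} .
\]

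The core of the argument is then to substitute the explicit descriptions of the copied objects. By \eqref{omkap} of Proposition \ref{OPkop} together with \eqref{Ekap}, we have $E_{\om^\vk} = E_\om^\vk = \vk_\cb \circ E_\om \circ \vk_\ca^{-1}$, while by definition $\a^\vk = \vk_\ca \circ \a \circ \vk_\ca^{-1}$ and $\b^\vk = \vk_\cb \circ \b \circ \vk_\cb^{-1}$. Composing these and cancelling the interior $\vk_\ca^{-1}\circ\vk_\ca$ and $\vk_\cb^{-1}\circ\vk_\cb$ gives
\[
E_{\om^\vk} \circ \a^\vk = \vk_\cb \circ E_\om \circ \a \circ \vk_\ca^{-1}
\quad \text{and} \quad
\b^\vk \circ E_{\om^\vk} = \vk_\cb \circ \b \circ E_\om \circ \vk_\ca^{-1} .
\]
Because $\vk_\ca$ and $\vk_\cb$ are $*$-isomorphisms, hence invertible, these two maps coincide precisely when $E_\om \circ \a = \b \circ E_\om$. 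Chaining the three equivalences yields $\va \om \vb \iff \va^\vk \om^\vk \vb^\vk$.

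The content here is deliberately light. Unlike Proposition \ref{verwBalKar}, no reversal of the order of composition occurs and the systems are not swapped, precisely because the copying maps are $*$-isomorphisms rather than (anti-)dual operations; in particular Proposition \ref{compDu} is not needed. I therefore expect no substantive obstacle: the only points demanding care are the bookkeeping ones of checking that the domains and codomains of the composed maps match up so that the cancellations are legitimate, and confirming that the hypotheses of Corollary \ref{ferBalEDirek} genuinely apply to the copied systems, which is guaranteed by Proposition \ref{OPkop} placing $\om^\vk$ in $T^{\F}(\m^\vk , \n^\vk)$.
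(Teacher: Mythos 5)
Your proposal is correct and follows essentially the same route as the paper: the paper's proof likewise reduces both balance conditions to channel intertwining via Corollary \ref{ferBalEDirek}, and uses \eqref{Ekap} together with \eqref{omkap} to see that $E_\om \circ \a = \b \circ E_\om$ holds iff $E_{\om^\vk} \circ \a^\vk = \b^\vk \circ E_{\om^\vk}$. You merely spell out the conjugation-and-cancellation step that the paper leaves implicit.
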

\begin{proof}
	By \eqref{Ekap} we clearly have
	$E_\om \circ \a = \b \circ E_\om$
	if and only if 
	$E_\om^\vk \circ \a^\vk = \b^\vk \circ E_\om^\vk$, 
	proving the proposition by \eqref{omkap} and Corollary \ref{ferBalEDirek}.
\end{proof}

\begin{prop}
	\label{kopBij}
	For $\bz_2$-graded systems $\va$ and $\vb$
	with copying maps 
	$\vk_\ca$ and $\vk_\cb$ respectively, 
	the map 
	$$
	\gt_\vk : T^{\F} ( \m , \n ) \to T^{\F} ( \m^\vk , \n^\vk ) 
	      : \om \mapsto \om^\vk
	$$
	and its restrictions
	$$
	\gt_\vk|_{T^{\F} ( \va , \vb )}  : 
	T^{\F} ( \va , \vb ) \to T^{\F} ( \va^\vk , \vb^\vk ),
	$$
	$$
	\gt_\vk|_{T^{\F}_\s ( \va , \vb )}  : 
	T^{\F}_\s ( \va , \vb ) \to T^{\F}_\s ( \va^\vk , \vb^\vk )
	$$
	and
	$$
	\gt_\vk|_{T^{\F}_{\s\s} ( \va , \vb )}  : 
	T^{\F}_{\s\s} ( \va , \vb ) \to T^{\F}_{\s\s} ( \va^\vk , \vb^\vk )
	$$
	are well-defined bijections.
\end{prop}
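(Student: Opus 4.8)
The plan is to follow the template of Proposition~\ref{verwBij}, with the copying operation of Proposition~\ref{OPkop} playing the role previously played by the twisted-dual operation. First I would observe that well-definedness of the top map $\gt_\vk$ is exactly Proposition~\ref{OPkop}: for every $\om \in T^{\F}(\m,\n)$ one has $\om^\vk = \om \circ (\vk_\ca^{-1} \ftp \vk_{\cb^\wr}^{-1}) \in T^{\F}(\m^\vk,\n^\vk)$. Bijectivity is then essentially automatic, since $\gt_\vk$ is precomposition with the $*$-isomorphism $\vk_\ca^{-1} \ftp \vk_{\cb^\wr}^{-1} : \ca^\wr \ftp \cb \to \ca \ftp \cb^\wr$, and precomposition with a $*$-isomorphism is a bijection of state spaces whose inverse is precomposition with the inverse isomorphism $\vk_\ca \ftp \vk_{\cb^\wr}$. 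The only point to verify is that this inverse sends $T^{\F}(\m^\vk,\n^\vk)$ back into $T^{\F}(\m,\n)$, which is a marginal computation of the same type as in Proposition~\ref{OPkop}, using $\m^\vk \circ \vk_\ca = \m$ together with $K_\cb \L_\n = \L_{\n^\vk}$.

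Next I would treat the restriction to $T^{\F}(\va,\vb)$. Here Proposition~\ref{kopBalKar} states precisely that the balance condition is copy-covariant, $\va \om \vb \iff \va^\vk \om^\vk \vb^\vk$, so $\gt_\vk$ carries $T^{\F}(\va,\vb)$ into $T^{\F}(\va^\vk,\vb^\vk)$ and its inverse carries it back, giving a bijection.

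For the modular and KMS layers the essential extra input is that copying commutes with the modular structure. Since $\vk_\ca$ is a $*$-isomorphism carrying $(\ca,\m)$ to $(\ca^\wr,\m^\vk)$, Tomita--Takesaki covariance gives $(\s^\m)^\vk = \s^{\m^\vk}$, so the copy of the modular system $(\ca,\s^\m,\m)$ is $(\ca^\wr,\s^{\m^\vk},\m^\vk)$; applying Proposition~\ref{kopBalKar} to these modular systems shows that $\gt_\vk$ preserves $T^{\F}_\s$. Likewise, the copy analogue of \eqref{KMSverwDu}, namely $(\a^\s)^\vk = (\a^\vk)^\s$, which follows from the behaviour of duals of $*$-isomorphisms recorded in Example~\ref{(anti)iso}, combined with Proposition~\ref{kopBalKar} applied to $\va^\s$ and $\vb^\s$, shows that the additional requirement $\va^\s \om \vb^\s$ is equivalent to $(\va^\vk)^\s \om^\vk (\vb^\vk)^\s$; this gives preservation of $T^{\F}_{\s\s}$. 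In each case the reverse inclusion follows by running the same argument for $\gt_\vk^{-1}$. The main obstacle I expect is establishing these two copy-covariance identities for the modular group and the KMS dual cleanly; once they are available, every layer reduces mechanically to Proposition~\ref{kopBalKar}, exactly as in the proof of Proposition~\ref{verwBij}.
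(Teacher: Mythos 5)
Your proposal is correct and follows essentially the same route as the paper: well-definedness from Proposition \ref{OPkop}, bijectivity via precomposition with the inverse $*$-isomorphism, the balance layer via Proposition \ref{kopBalKar}, the modular layer via $(\s^\m)^\vk = \s^{\m^\vk}$ (the paper's \eqref{modGroepTransf}), and the KMS layer via $(\a^\s)^\vk = (\a^\vk)^\s$. The only cosmetic difference is that the paper derives this last identity from the commutation $K_\ca J_\ca = J_\ca K_\ca$ (a consequence of $\m^\vk \circ \vk_\ca = \m$ recorded in the Appendix) rather than citing Example \ref{(anti)iso} directly, but this is the same direct computation you anticipate as the main obstacle.
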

\begin{proof} 
	By Proposition \ref{OPkop} we know that $\gt_\vk$ is well-defined, as is the similarly defined map
	$T^{\F} ( \n^\vk , \m^\vk ) \to T^{\F} ( \m , \n ) : \psi \mapsto \psi^{\vk^{-1}}$,
	given by
	$
	\psi^{\vk^{-1}} 
	=
	\psi \circ ( \vk_\ca \ftp \vk_{\cb^\wr} ) ,
	$
	since $\vk_\ca^{-1}$ copies from $\ca^\wr$ to $\ca^{\wr\wr}$,
	giving an inverse for $\gt_\vk$.
	Thus $\gt_\vk$ is bijective.
	So is it's restriction
	$T^{\F} ( \va , \vb ) \to T^{\F} ( \va^\vk , \vb^\vk )$,
	using Proposition \ref{kopBalKar}.
	
	By \eqref{modGroepTransf} in the Appendix, we have
	$
	\s^{\m^\vk}_t = ( \s^\m_t )^\vk
	$,
	which, combined with Proposition \ref{kopBalKar}, tells us that
	$$
	  (\ca , \s^\m , \m) \om (\cb , \s^\n , \n) 
	   \iff
	   (\ca^\wr , \s^{\m^\vk} , \m^\vk) \om^\vk (\cb^\wr , \s^{\n^\vk} , \n^\vk),
	$$
	proving that $\gt_\vk|_{T^{\F}_\s ( \va , \vb )}$ is bijective. 
	
	Since $\m^\vk \circ \vk_\ca = \m$, we know (see the Appendix) that 
	$K_\ca J_\ca = J_\ca K_\ca$.
	Consequently,
	$$
	(\a^\s)^\vk = (\a^\vk)^\s ,
	$$
	through a direct computation in terms of the Hilbert space representations. Similarly for $\b$. Hence
	$$
	\va^\s \om \vb^\s
	\iff
	(\va^\vk)^\s \om^\vk (\vb^\vk)^\s,
	$$
	which implies that $\gt_\vk|_{T^{\F}_\s ( \va , \vb )}$ restricted to $T^{\F}_{\s\s} ( \va , \vb )$ indeed gives the last of the required bijections.
\end{proof}
Thus we have the following symmetries of our Wasserstein distances:
\begin{theorem} \label{kopSim} 
	For $\bz_2$-graded systems $\va$ and $\vb$ 
	with copying maps $\vk_\ca$ and $\vk_\cb$ respectively,
	we have
	$$
	W^{\F} ( \va^\vk , \vb^\vk ) =  W^{\F} (\va , \vb),
	$$
	$$
	W^{\F}_\s ( \va^\vk , \va^\vk ) =  W^{\F}_\s (\va , \vb)
	$$
	and
	$$
	W^{\F}_{\s\s} (\va^\vk , \vb^\vk) =  W^{\F}_{\s\s} (\va , \vb).
	$$
\end{theorem}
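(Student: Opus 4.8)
The plan is to prove each of the three equalities by exhibiting a \emph{cost-preserving} bijection between the relevant sets of fermionic transport plans, and then passing to the infimum in Definition \ref{W2}. The bijections are already available: Proposition \ref{kopBij} supplies the well-defined map $\gt_\vk : \om \mapsto \om^\vk$ together with its restrictions to $T^{\F}(\va,\vb)$, $T^{\F}_\s(\va,\vb)$ and $T^{\F}_{\s\s}(\va,\vb)$, carrying these bijectively onto $T^{\F}(\va^\vk,\vb^\vk)$, $T^{\F}_\s(\va^\vk,\vb^\vk)$ and $T^{\F}_{\s\s}(\va^\vk,\vb^\vk)$ respectively. Hence the only remaining task is to check that the cost is invariant, namely that $I_{\va^\vk,\vb^\vk}(\om^\vk) = I_{\va,\vb}(\om)$ for every $\om \in T^{\F}(\va,\vb)$.

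First I would gather the data entering the cost formula \eqref{ferKoste} for the copied systems: the coordinates $k_i^\vk = \vk_\ca(k_i)$ and $l_i^\vk = \vk_\cb(l_i)$, the states $\m^\vk = \m \circ \vk_\ca^{-1}$ and $\n^\vk = \n \circ \vk_\cb^{-1}$, and, most importantly, the channel identity $E_{\om^\vk} = E_\om^\vk = \vk_\cb \circ E_\om \circ \vk_\ca^{-1}$ from \eqref{omkap} in Proposition \ref{OPkop}. Then I would treat \eqref{ferKoste} term by term, using that $\vk_\ca$ and $\vk_\cb$ are $*$-isomorphisms with $\m^\vk \circ \vk_\ca = \m$ and $\n^\vk \circ \vk_\cb = \n$. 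For example $\m^\vk((k_i^\vk)^* k_i^\vk) = \m^\vk(\vk_\ca(k_i^* k_i)) = \m(k_i^* k_i)$, while $E_{\om^\vk}(k_i^\vk) = \vk_\cb(E_\om(k_i))$ yields $\n^\vk(E_{\om^\vk}(k_i^\vk)^* l_i^\vk) = \n^\vk(\vk_\cb(E_\om(k_i)^* l_i)) = \n(E_\om(k_i)^* l_i)$, and symmetrically for the remaining two summands. Summing over $i$ gives the desired cost invariance.

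With cost invariance and the bijections in place, the three equalities follow at once: taking the infimum of $I(\cdot)^{1/2}$ over each domain and over its bijective image produces $W^{\F}(\va^\vk,\vb^\vk) = W^{\F}(\va,\vb)$, and likewise for $W^{\F}_\s$ and $W^{\F}_{\s\s}$, all in terms of Definition \ref{W2}.

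I do not expect a genuine obstacle here: because $\vk_\ca$ and $\vk_\cb$ are honest $*$-isomorphisms rather than (anti-)duals, the computation is cleaner than for the twisted-dual symmetry of Theorem \ref{verwSim}, and this is precisely why even the plain distance $W^{\F}$ (not only $W^{\F}_\s$ and $W^{\F}_{\s\s}$) is preserved. The one point deserving mild care is the bookkeeping of the $\,\cdot\, \ftp \,\cdot\,^\wr$ form under the identification $\cb = \cb^{\wr\wr}$ used in Proposition \ref{OPkop}, but this is already handled there and needs only to be invoked consistently.
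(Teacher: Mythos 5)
Your proposal is correct and follows essentially the same route as the paper: the paper's proof likewise combines Proposition \ref{kopBij} with the cost invariance $I_{\va^\vk,\vb^\vk}(\om^\vk)=I_{\va,\vb}(\om)$, which it leaves as "easily verified from the involved definitions" and which you have spelled out correctly using $E_{\om^\vk}=\vk_\cb\circ E_\om\circ\vk_\ca^{-1}$ and the $*$-isomorphism properties of the copying maps.
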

\begin{proof}
	This follows from the definitions of the Wasserstein distances and Proposition \ref{kopBij}, since 
	$
	I_{ \va^\vk , \vb^\vk} (\om^\vk)
	=
	I_{ \va , \vb} (\om) ,
	$
	as is easily verified from the involved definitions.
\end{proof}
We can now conclude this series of symmetries with the case of reversible systems (see Convention \ref{revConv} and Definition \ref{revDef}), 
which is of interest to us with regards to FDB. 
This requires the following simple point regarding cost in reversible systems.
\begin{prop}
	\label{omkKoördKoste}
	For reversible systems $\va$ and $\vb$, the coordinates
	$$ 
	k_\th = ( \th_\ca ( k_1^* ) , ... , \th_\ca ( k_d^* ) ) 
	\text{ and }
	l_\th = ( \th_\cb ( l_1^* ) , ... , \th_\cb ( l_d^* ) ) 
	$$
	give the same cost as $k$ and $l$. In particular, in terms of the notation
	$$
	\va_\th = ( \ca , \a , \m , k_\th ) \, ,
	$$
	we have
	$$
	W^{\F} ( \va_\th , \vb_\th ) =  W^{\F} (\va , \vb) \, ,
	$$
	and likewise for $W^{\F}_\s$ and $W^{\F}_{\s\s}$.
\end{prop}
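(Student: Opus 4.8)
The plan is to reduce everything to a single cost identity: for every $\om\in T^{\F}(\va,\vb)$ one should have $I_{\va_\th,\vb_\th}(\om)=I_{\va,\vb}(\om)$. Granting this, the three distance equalities are immediate. Since $\va_\th$ and $\va$ (respectively $\vb_\th$ and $\vb$) differ only in their coordinates, and transport plans do not depend on the coordinates (as observed after Definition \ref{T(A,B)}), the sets $T^{\F}(\va_\th,\vb_\th)$, $T^{\F}_\s(\va_\th,\vb_\th)$ and $T^{\F}_{\s\s}(\va_\th,\vb_\th)$ coincide with $T^{\F}(\va,\vb)$, $T^{\F}_\s(\va,\vb)$ and $T^{\F}_{\s\s}(\va,\vb)$ respectively. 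The cost identity then shows that in each case we infimize the same function over the same set, whence $W^{\F}(\va_\th,\vb_\th)=W^{\F}(\va,\vb)$, and likewise for $W^{\F}_\s$ and $W^{\F}_{\s\s}$.

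For the cost identity I would work term by term in \eqref{ferKoste}, using that $\th_\ca$ and $\th_\cb$ are even $*$-anti-automorphisms with $\th_\ca\circ\th_\ca=\id_\ca$, $\m\circ\th_\ca=\m$ and $\n\circ\th_\cb=\n$ (Proposition \ref{omkOp}), that $E_\om$ is $*$-preserving (being positive), and the intertwining $E_\om\circ\th_\ca=\th_\cb\circ E_\om$ discussed below. Since $\th_\ca(k_i^*)^*=\th_\ca(k_i)$ and $\th_\ca$ reverses products, $\th_\ca(k_i^*)^*\th_\ca(k_i^*)=\th_\ca(k_i^*k_i)$, so $\m\bigl(\th_\ca(k_i^*)^*\th_\ca(k_i^*)\bigr)=\m(k_i^*k_i)$ by invariance of $\m$; the same argument gives $\n\bigl(\th_\cb(l_i^*)^*\th_\cb(l_i^*)\bigr)=\n(l_i^*l_i)$, so the two ``norm'' terms are unchanged. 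For the correlation terms, the intertwining relation together with $E_\om(k_i^*)=E_\om(k_i)^*$ yields $E_\om(\th_\ca(k_i^*))=\th_\cb(E_\om(k_i))^*$; feeding this in and again using that $\th_\cb$ reverses products and preserves $\n$, one finds that the ``new'' third term equals $\n(l_i^*E_\om(k_i))$ and the ``new'' fourth term equals $\n(E_\om(k_i)^*l_i)$. Thus passing to $k_\th,l_\th$ merely interchanges the two correlation terms of \eqref{ferKoste}, leaving their subtracted sum intact, and $I_{\va_\th,\vb_\th}(\om)=I_{\va,\vb}(\om)$.

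The one point requiring care — and the step I expect to be the main obstacle — is justifying the intertwining $E_\om\circ\th_\ca=\th_\cb\circ E_\om$. This is where reversibility enters: by Convention \ref{revConv} the reversing operations $\th_\ca$ and $\th_\cb$ occur as components of the dynamics $\a$ and $\b$ at a common index of the shared set $\U$, and $\th_\ca$ is an even u.p. map (Proposition \ref{omkOp}), hence an admissible piece of dynamics. Consequently the characterization in Corollary \ref{ferBalEDirek}, which gives $E_\om\circ\a=\b\circ E_\om$ componentwise for any $\om\in T^{\F}(\va,\vb)$, specializes at that index to exactly $E_\om\circ\th_\ca=\th_\cb\circ E_\om$. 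Here it is essential that the earlier theory was developed for u.p. (not merely c.p.) maps, since $\th_\ca$ is an anti-automorphism rather than a channel. Because $T^{\F}_\s(\va,\vb)$ and $T^{\F}_{\s\s}(\va,\vb)$ are contained in $T^{\F}(\va,\vb)$, the intertwining, and hence the cost identity, holds throughout, which completes the argument.
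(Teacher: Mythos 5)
Your proposal is correct and follows essentially the same route as the paper: the norm terms are preserved because $\th_\ca$, $\th_\cb$ are state-preserving $*$-anti-automorphisms (Proposition \ref{omkOp}), and the correlation terms merely swap via the intertwining $E_\om\circ\th_\ca=\th_\cb\circ E_\om$ obtained from Corollary \ref{ferBalEDirek} together with Convention \ref{revConv}. The only difference is that you spell out the algebraic manipulations and the coincidence of the transport-plan sets explicitly, which the paper leaves implicit.
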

\begin{proof}
	Keeping Definition \ref{I} of cost in mind, simply notice from Proposition \ref{omkOp} that 
	$
	\m ( \th_\ca (k_i^*)^* \th_\ca (k_i^*) )
	=
	\m ( k_i^* k_i )
	$,
	and from Corollary \ref{ferBalEDirek} together with Convention \ref{revConv} , that
	$
	\n ( E_\om ( \th_\ca (k_i^*) )^* \th_\cb ( l_i^* ) )
	=
	\n ( l_i^* E_\om (k_i) )
	$,
	which is all that is required.
\end{proof}

Thus our final pair of symmetries:
\begin{cor} \label{omkSim}
	For reversible $\bz_2$-graded systems $\va$ and $\vb$ 
	we have
	$$
	W^{\F}_\s ( \vb^\lp , \va^\lp ) =  W^{\F}_\s (\va , \vb)
	\quad \text{and} \quad 
	W^{\F}_{\s\s} ( \va^\lp , \vb^\lp ) =  W^{\F}_{\s\s} (\va , \vb).
	$$
\end{cor}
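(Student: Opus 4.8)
The plan is to derive both symmetries by composing the two symmetries already established — the twisted-dual symmetry of Theorem \ref{verwSim} and the copying symmetry of Theorem \ref{kopSim} — and then absorbing the resulting coordinate discrepancy by means of Proposition \ref{omkKoördKoste}. The conceptual point is that the reverse $\va^\lp$ agrees, \emph{except for its coordinates}, with the copy $(\va^\wr)^{\vk_{\ca^\wr}}$ of the twisted dual $\va^\wr$ carried back to $\ca^{\wr\wr}=\ca$ via the copying map $\vk_{\ca^\wr}$.

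First I would prove this identification. Writing $\va_\th=(\ca,\a,\m,k_\th)$ as in Proposition \ref{omkKoördKoste}, the claim is that $(\va^\wr)^{\vk_{\ca^\wr}}=(\va^\lp)_\th$. For the dynamics, $\vk_{\ca^\wr}\circ\vk_\ca=\g_\ca$ gives $\vk_{\ca^\wr}^{-1}=\vk_\ca\circ\g_\ca$, so the copied dynamics is $\vk_{\ca^\wr}\circ\a^\wr\circ\vk_{\ca^\wr}^{-1}=\g_\ca\circ\a^\lp\circ\g_\ca=\a^\lp$, the last equality because $\a^\lp=\vk_\ca^{-1}\circ\a^\wr\circ\vk_\ca$ is even. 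For the state, $\m^\wr\circ\vk_{\ca^\wr}^{-1}=\m^\wr\circ\vk_\ca\circ\g_\ca=\m\circ\g_\ca=\m$. For the coordinates, using $K_\ca J_\ca=J_\ca K_\ca$, $K_\ca g_\ca^{1/2}=g_\ca^{1/2}K_\ca$ and \eqref{Jg=gJ}, one verifies the operator identity $\vk_{\ca^\wr}\circ\g_\ca^{1/2}\circ j_\ca=j_\ca\circ\g_\ca^{-1/2}\circ\vk_\ca=\th_\ca$ on $\ca$, whence $\vk_{\ca^\wr}(k_i^\wr)=\th_\ca(k_i^*)$; that is, the coordinates of $(\va^\wr)^{\vk_{\ca^\wr}}$ are precisely $k_\th$.

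With this in hand I would assemble the chain
$$
W^{\F}_\s(\vb^\lp,\va^\lp)
=W^{\F}_\s((\vb^\lp)_\th,(\va^\lp)_\th)
=W^{\F}_\s((\vb^\wr)^{\vk_{\cb^\wr}},(\va^\wr)^{\vk_{\ca^\wr}})
=W^{\F}_\s(\vb^\wr,\va^\wr)
=W^{\F}_\s(\va,\vb),
$$
where the first equality is Proposition \ref{omkKoördKoste}, the second is the identification just established, the third is Theorem \ref{kopSim} applied to $\va^\wr,\vb^\wr$ with the $\m^\wr$- and $\n^\wr$-copying maps $\vk_{\ca^\wr},\vk_{\cb^\wr}$ noted in Subsection \ref{SubsecFDB}, and the last is Theorem \ref{verwSim}. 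To invoke Proposition \ref{omkKoördKoste} I must know that $\va^\lp$ and $\vb^\lp$ are themselves reversible: this holds because $\th_\ca^\lp=\th_\ca$ (established above), so by Convention \ref{revConv} the reversing operation is still part of the dynamics $\a^\lp$. The $W^{\F}_{\s\s}$ statement follows by the identical chain using the $\s\s$-versions of Theorems \ref{verwSim} and \ref{kopSim} together with Proposition \ref{omkKoördKoste}; there the order of the arguments is immaterial since $W^{\F}_{\s\s}$ is a genuine pseudometric.

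I expect the only real obstacle to be the operator identity $\vk_{\ca^\wr}\circ\g_\ca^{1/2}\circ j_\ca=\th_\ca$ underlying the coordinate matching; everything else is bookkeeping with the already-proven bijections. This identity requires pushing $J_\ca$ through $g_\ca^{\pm1/2}$ (from \eqref{Jg=gJ} one gets $J_\ca g_\ca^{\pm1/2}=g_\ca^{\mp1/2}J_\ca$) and through $K_\ca,K_\ca^*$, while tracking the adjoints introduced by $j_\ca$ and by the $*$-isomorphisms $\vk_\ca,\vk_{\ca^\wr}$. It is exactly the commutation data packaged into a $\m$-copying map, namely $K_\ca^2=g_\ca$, $K_\ca J_\ca=J_\ca K_\ca$ and $K_\ca g_\ca^{1/2}=g_\ca^{1/2}K_\ca$, that makes the identity hold, which is why reversibility is the natural hypothesis for the corollary.
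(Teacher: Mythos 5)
Your proposal is correct and follows essentially the same route as the paper, whose proof consists of the mirror-image identification $(\va^\lp)^\vk=\va_\th^\wr$ (you instead copy $\va^\wr$ back down to $\ca$ via $\vk_{\ca^\wr}$, obtaining $(\va^\wr)^{\vk_{\ca^\wr}}=(\va^\lp)_\th$, which is equivalent) followed by the same chain of Theorems \ref{verwSim} and \ref{kopSim} and Proposition \ref{omkKoördKoste}. Your explicit verification of the operator identity $\vk_{\ca^\wr}\circ\g_\ca^{1/2}\circ j_\ca=\th_\ca$ and of the reversibility of $\va^\lp$ supplies details the paper leaves implicit.
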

\begin{proof} 
	From the definitions and using $\m^\vk = \m^\wr$, we have
	$
	( \va^\lp )^\vk = \va_\th^\wr 
	$.
	Now apply Theorems \ref{verwSim} and \ref{kopSim}, as well as Proposition \ref{omkKoördKoste}.
\end{proof}

This symmetry, along with the triangle inequality, can be used in analogy to \cite[Subsection 8.1]{DSS} to put bounds on deviation of a reversible system $\va$ from FDB in terms of its distance from a system $\vb$ that satisfies FDB:
\begin{cor}
	For reversible systems $\va$ and $\vb$, with $\vb$ satisfying FDB, we have
\[
W^{\F}_{\sigma}(\mathbf{A},\mathbf{A}^{\leftarrow})
\leq
2W^{\F}_{\sigma}(\mathbf{A},\mathbf{B}) \, ,
\]
\[
W^{\F}_{\sigma}(\mathbf{A}^{\leftarrow},\mathbf{A})
\leq
2W^{\F}_{\sigma}(\mathbf{B},\mathbf{A})
\]
and
\[
W^{\F}_{\sigma\sigma}(\mathbf{A},\mathbf{A}^{\leftarrow})
\leq
2W^{\F}_{\sigma\sigma}(\mathbf{A},\mathbf{B})  .
\]
\end{cor}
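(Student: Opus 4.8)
The plan is to derive all three estimates purely from the triangle inequality in Theorem \ref{metries}, the symmetries in Corollary \ref{omkSim}, and the observation that $\vb$ satisfying FDB means precisely $\vb^\lp = \vb$ by \eqref{fdbStelsel}. No fresh analysis is needed: the entire content is applying these three facts in the right order, the one subtlety being that $W^{\F}_\s$ is merely an \emph{asymmetric} pseudometric, so the order of arguments must be tracked carefully.

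For the first inequality I would insert $\vb$ as an intermediate system,
\[
W^{\F}_\s(\va, \va^\lp) \leq W^{\F}_\s(\va, \vb) + W^{\F}_\s(\vb, \va^\lp),
\]
then rewrite the second summand using $\vb = \vb^\lp$ and Corollary \ref{omkSim} in its stated form $W^{\F}_\s(\vb^\lp, \va^\lp) = W^{\F}_\s(\va, \vb)$, collapsing the right-hand side to $2W^{\F}_\s(\va, \vb)$. For the second inequality I would instead place $\vb$ on the opposite side,
\[
W^{\F}_\s(\va^\lp, \va) \leq W^{\F}_\s(\va^\lp, \vb) + W^{\F}_\s(\vb, \va),
\]
and treat $W^{\F}_\s(\va^\lp, \vb)$ by writing $\vb = \vb^\lp$ and invoking Corollary \ref{omkSim} with $\va$ and $\vb$ interchanged, namely $W^{\F}_\s(\va^\lp, \vb^\lp) = W^{\F}_\s(\vb, \va)$, giving the bound $2W^{\F}_\s(\vb, \va)$. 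The third inequality is the easiest since $W^{\F}_{\s\s}$ is a genuine (symmetric) pseudometric: inserting $\vb$ yields $W^{\F}_{\s\s}(\va, \va^\lp) \leq W^{\F}_{\s\s}(\va, \vb) + W^{\F}_{\s\s}(\vb, \va^\lp)$, and $W^{\F}_{\s\s}(\vb, \va^\lp) = W^{\F}_{\s\s}(\va^\lp, \vb^\lp) = W^{\F}_{\s\s}(\va, \vb)$ by symmetry combined with Corollary \ref{omkSim}.

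The only point needing genuine care — the ``main obstacle'', such as it is — is the bookkeeping of the asymmetry of $W^{\F}_\s$: one must split via the triangle inequality with the intermediate system $\vb$ on the correct side, and invoke Corollary \ref{omkSim} with its arguments in exactly the order in which that corollary is stated (and, for the second inequality, with $\va$ and $\vb$ swapped). A wrong orientation would produce the reversed and generally false direction of the estimate. Once the orientations are fixed, each line reduces to a direct application of the three cited facts.
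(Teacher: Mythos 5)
Your proof is correct and follows exactly the route the paper intends (the paper leaves the proof implicit, stating only that the symmetry of Corollary \ref{omkSim} together with the triangle inequality yields these bounds, in analogy to \cite[Subsection 8.1]{DSS}): insert $\vb$ via the triangle inequality, replace $\vb$ by $\vb^\lp$ using FDB, and apply Corollary \ref{omkSim} with the arguments correctly oriented. Your handling of the asymmetry of $W^{\F}_\s$, including the swapped invocation of Corollary \ref{omkSim} for the second inequality, is exactly right.
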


Bounds of this nature serve as one motivation for studying symmetries of fermionic (and other quantum) Wasserstein distances. Also see \cite{GPTV, SV} for other work on symmetries, though in a very different context and for a different approach to quantum Wasserstein distance.

On the other hand, further work is still needed 
regarding the value of FDB as an equilibrium condition for fermionic systems, compared to the usual standard detailed balance condition. See \cite{D18} for initial steps.
Moreover, the bounds above warrant further investigation, for example in comparison to entropy production as a measure of deviation from quantum detailed balance \cite{FR10, FR, FR15b}.
In a different direction, it may also be worth to explore extending the theory of Wasserstein distances to more general gradings. See the development around the latter initiated in \cite{FV}.

\section*{Appendix on standard forms}
\label{Byl}


Here we briefly review and discuss some  points regarding Tomita-Takesaki (or modular) theory and standard forms of von Neumann algebras. A convenient formulation of the latter is as follows.

\begin{sf*}
\label{sform}
A $\s$-finite von Neumann algebra $\ca \subset B(\cg_\ca)$ represented in the algebra of bounded operators on a Hilbert space $\cg_\ca$ as indicated, is said to be in \textit{standard form} if we have the following:
A conjugate linear operator $J_{\ca}: \cg_{\ca} \to \cg_{\ca}$ on the Hilbert space $\cg_ {\ca}$, called a \emph{modular conjugation} for $\ca$, such that 
$$J_{\ca}^2=1,$$
$$J_{\ca}^{*}=J_{\ca}$$
and
$$J_{\ca}\ca J_{\ca}=\ca',$$
where $\ca'$ is the commutant of $\ca$ in $B(\cg_\ca)$.
In addition we have a closed convex cone $\mathcal{P}_{\ca} \subset \cg_{\ca}$ such that
$$J_{\ca}x=x$$
for all $x \in \cp_\ca$.
An element $x \in \mathcal{P}_{\ca}$ is cyclic for $\ca$ if and only if it is separating for $\ca,$ and every normal state $\mu$ on $\ca$ is given by
$$\mu(a)=\langle \L_\m , a\L_\m \rangle,$$
for a unique $\Lambda_{\mu} \in \mathcal{P}_{\ca}.$
\end{sf*}

	
For a treatment of the theory of standard forms, refer to \cite[Section 2.5.4]{BR1}, the original papers being \cite{Ar74, Con72, Con74, Ha75}.
Note that if $\mu$ is faithful, then $\Lambda_{\mu}$ is cyclic as it is separating.
Because of this, the definition above is ultimately equivalent to simply saying that the von Neumann algebra is in a cyclic representation given by any faithful normal state (see for example \cite[Definition III 2.6.1]{Bl}). However, the formulation above lists a number of key properties clarifying the nature of a standard form.

%


A standard form always exists. To obtain a standard form, one uses Tomita-Takesaki theory (originating in \cite{Tom}, properly developed in \cite{T70}), which we now outline very briefly (also see \cite[Section 2.5]{BR1}).

Given a faithful normal state $\mu$ on $\ca$ in its cyclic representation $(\cg_{\mu},\id_{\ca},\L_{\mu})$, that is
$\m(a) = \langle \L_\m , a\L_\m \rangle,$ 
we have a closed conjugate linear (in general unbounded) operator on $\ca\Lambda_ {\mu}$ 
determined by
$$S_\m a\L_\m := a^*\L_\m.$$
One considers the polar decomposition 
$$S_\m =J_\m\D_\m^{1/2},$$
with $\D_\m^{1/2}$ being self-adjoint and positive, while $J_\m$ is anti-unitary and self-adjoint.
We note that 
$
\D_\m^{it}\L_\m = \L_\m
$
for all real $t$, and we write
$$\s_t^\m(a) := \D_\m^{it} a \D_\m^{-it}$$
for all $a\in\ca$. 
The Tomita-Takesaki theorem tells us that
$$\s_t^\m(\ca) = \ca.$$
The one-parameter group given by 
$t \mapsto \s_t^\m$ 
is called the \textit{modular group} associated with the state $\m$. 

The requirements of a standard form above can now be shown to be met, including that $\cg_{\mu}$ and $J_{\mu}$ can be taken to be independent of the faithful normal state $\mu$, which is known as universality, and ultimately follows from the uniqueness of polar decompositions. 
This allows us to refer to $\cg_{\mu}$ as $\cg_{\ca}$ and to $J_{\mu}$ as $J_{\ca}$ in the standard from above.
We also use the notation
$$j_{\ca}(a):=J_{\ca}a^{*}J_{\ca}$$
for all $a$ in $B(\cg_\ca)$.

Since we also have the polar decomposition 
$$S_{\m'} =J_\ca \D_\m^{-1/2},$$
we see that $\ca'$ in effect has the same standard form as $\ca$, 
but 
with the modular groups of $\m$ and $\m'$ being related through
$$(\s^\m_t)' = \s^{\m'}_t.$$ 

For $\ca^\wr$ in the case of a $\bz_2$-grading $\g_\ca$ on the other hand, for any $a' \in \ca'$ we see that
$$
S_{\m^\wr} \g_\ca^{1/2}(a') \L_\m = \g_\ca^{1/2}(a')^*\L_\m \, ,
$$
from which it follows that
$$
   S_{\m^\wr} 
   = 
   g_\ca^{1/2} S_{\m'} g_\ca^{-1/2} 
   = 
   J_\ca g_\ca^{1/2} \D_{\m'}^{1/2} g_\ca^{-1/2}.
$$
That is, 
$J_{\ca^\wr} = J_\ca$ 
and
$\D_{\m^\wr}^{1/2}  = g_\ca^{1/2} \D_\m^{-1/2} g_\ca^{-1/2}$.
Through the latter we in turn obtain
\begin{equation}\label{verwMod}
(\s^\m_t)^\wr = \s^{\m^\wr}_t.
\end{equation}

If the $\bz_2$-graded von Neumann algebra $\cb$ is in standard form as well, and 
$\f : \ca \to \cb$
is a $*$-isomorphism such that 
$\n \circ \f = \m$
for faithful normal states $\m$ and $\n$ on $\ca$ and $\cb$ respectively, then $\f$ has a corresponding unitary \emph{GNS representation}
$U : \cg_\ca \to \cg_\cb$
extending
$$
Ua\L_\m = \f(a)\L_\n,
$$
and satisfying $\f(a) = UaU^*$.

Consequently, if $\f$ is even it is easily verified that
$$Ug_\ca = g_\cb U \, ,$$
thus $Ug_\ca^{1/2} = g_\cb^{1/2} U$ from the definition of $g_\ca^{1/2}$.

Irrespective of the presence of gradings or whether $\f$ even, we have
$$
S_\n \f(a) \L_\n
=
U S_\m a \L_\m
=
U S_\m U^* \f(a) \L_\n
$$
for all $a \in \ca$, hence
$
S_\n = U S_\m U^*,
$
thus 
$
J_\cb = U J_\ca U^*,
$
i.e.,
\begin{equation} \label{UJ=JU}
U J_\ca = J_\cb U \, ,
\end{equation}
due to the uniqueness of polar decomposition.
In particular, when $\cb = \ca$, it follows that 
$$
U J_\ca = J_\ca U \, ,
$$
even when $\n \neq \m$.
Consult \cite[Corollary 2.5.32]{BR1} for a presentation of a refinement in the standard form, used for the unitary representation of $\bz_2$-gradings in Section \ref{duality}.

If instead $\f : \ca \to \cb$ is a $*$-anti-isomorphism, then the same conclusions hold by essentially the same arguments in terms of the corresponding anti-unitary GNS representation $U$ obtained from
$$
Ua\L_\m = \f(a^*)\L_\n \, ,
$$
which gives $\f(a) = U a^* U^*$.

Returning to the $*$-isomorphism 
$\f : \ca \to \cb$,
the uniqueness of polar decomposition leads to 
$\D_\n^{1/2} = U \D_\m^{1/2} U^*$. Consequently
\begin{equation} \label{modGroepTransf}
	\s^\n_t = \f \circ \s^\m_t \circ \f^{-1} .
\end{equation}

%
%

\section*{Acknowledgements}
We thank Vito Crismale, Stefano Rossi and Elena Griseta for helpful discussions and being our hosts at Universit\`a degli Studi di Bari Aldo Moro during two visits in 2023, when much of this work was done.

\end{document}